\newcommand{\LTO}{{\scriptscriptstyle\mathrm{LTO}}}
\newcommand{\nLTO}{{\scriptscriptstyle\mathrm{naive-LTO}}}
\newcommand{\wLTO}{{\scriptscriptstyle\mathrm{w-LTO}}}
\newcommand{\rLTO}{{\scriptscriptstyle\mathrm{rs-LTO}}}
\newcommand{\pLTO}{{\scriptscriptstyle\mathrm{powered-LTO}}}
\newcommand{\LRO}{{\scriptscriptstyle\mathrm{LRO}}}
\newcommand{\rLRO}{{\scriptscriptstyle\mathrm{rs-LRO}}}
\newcommand{\epb}{\text{{\tiny exact-placebo}}}
\renewcommand{\pb}{\text{{\tiny app-placebo}}}
\title{Inference for Synthetic Controls via Refined Placebo Tests\thanks{First version: January 11, 2023. Authors are listed alphabetically. We thank Alberto Abadie, Rina Barber, Kirill Borusyak, John Cherian, Peng Ding, Kevin Guo, Michael Howes, Guido Imbens, David Ritzwoller, Brad Ross, Jann Spiess, Liyang Sun, Stefan Wager, and other participants of IMSI workshop on ``Permutation and Causal Inference" and CODE@MIT 2023 for many helpful comments and questions. L.L. is grateful for the support of National Science Foundation grant DMS-2338464. TS also acknowledges support from the NSF Graduate Research Fellowship Program under Grant DGE-1656518. }}
\author{Lihua Lei\footnote{ Graduate School of Business and Department of Statistics (by courtesy), Stanford University; Email: \href{mailto:lihualei@stanford.edu}{\texttt{lihualei@stanford.edu}}}\qquad  Timothy Sudijono\footnote{Department of Statistics, Stanford University; Email: \href{mailto:tsudijon@stanford.edu}{\texttt{tsudijon@stanford.edu}} }}
\date{\today}
\begin{document}
\maketitle

\begin{abstract}
The synthetic control method is often applied to problems with one treated unit and a small number of control units. A common inferential task in this setting is to test null hypotheses regarding the average treatment effect on the treated. Inference procedures that are justified asymptotically are often unsatisfactory due to (1) small sample sizes that render large-sample approximation fragile and (2) simplification of the estimation procedure that is implemented in practice. An alternative is permutation inference, which is related to a common diagnostic called the placebo test. It has provable Type-I error guarantees in finite samples without simplification of the method, when the treatment is uniformly assigned. Despite this robustness, the placebo test suffers from low resolution since the null distribution is constructed from only $N$ reference estimates, where $N$ is the sample size. This creates a barrier for statistical inference at a common level like $\alpha = 0.05$, especially when $N$ is small. We propose a novel leave-two-out procedure that bypasses this issue, providing $O(N^2)$ reference estimates while still maintaining the same finite-sample Type-I error guarantee under uniform assignment for a wide range of $N$ that is common in applications. Unlike the placebo test whose Type-I error always equals the theoretical upper bound, our procedure often achieves a lower unconditional Type-I error than theory suggests; this enables useful inference in the challenging regime when $\alpha < 1/N$.  Empirically, our procedure achieves a higher power when the effect size is reasonably large and a comparable power otherwise. We generalize our procedure to non-uniform assignments and show how to conduct sensitivity analysis. From a methodological perspective, our procedure can be viewed as a new type of randomization inference different from permutation or rank-based inference, which is particularly effective in small samples.
\end{abstract}

\section{Introduction}\label{sec:intro}

Synthetic control methods were first introduced by Abadie and Gardeazabal  \cite{abadie2003economic} to analyze the effects of terrorism in the Basque Country on the economy of the region. Since then, the method has developed into a powerful and popular tool in comparative case studies and program evaluation with very few treated units  \cite{abadie2010synthetic, cavallo2013catastrophic, billmeier2013assessing, abadie2015comparative, castillo2015tourism, gobillon2016regional, peri2019labor, donohue2019right, 
ben2021augmented, abadie2021using, sun2023using}. We focus on applications of the synthetic control framework in panel data settings with one treated unit. Even in applications with multiple treated units, this setting is still relevant if the researcher wants to assess effect heterogeneity. 

Suppose we observe data with $N$ units, $T$ time periods, where the $I$-th unit is treated from time $T_0+1,\dots,T$. The synthetic control (SC) method models the treated unit by a \textit{synthetic control}, a convex combination of the other control units which resembles  the treated unit as closely as possible according to pre-treatment outcomes and covariates. That is, we seek a vector of weights $W = (w_1,\dots,w_{I-1},w_{I+1},\dots,w_N)$ such that $w_{j} \geq 0$ for all $j$ and $\sum_{j \neq I}^{N} w_j = 1$, which corresponds to a weighted average of control units. The method chooses the weights $W$ by minimizing the norm
\begin{equation}\label{eq:SC_objective}
\norm{X_I - X_0 W}_V = \sqrt{(X_I - X_0W)^\top V (X_I - X_0W)}
\end{equation}
where $X_I \in \bR^p$ are covariates, or pre-treatment outcomes, for the treatment unit and $X_0 \in \bR^{p \times (N-1)}$ is a matrix whose rows are covariates for the control units. Finally, $V \in \bR^{p \times p}$ is a matrix which weights the different covariates, which may be either pre-specified or learned from the data \cite{abadie2011synth}. Importantly, in many flagship applications of the method, there is one treated unit and only a small number of control units. Moreover, the outcomes are typically measured over a short time horizon. For example in \cite{abadie2003economic}, $N = 17, T = 43, T_0 = 15$ and in \cite{abadie2010synthetic}, $N = 39, T = 31, T_0 = 19$. See Appendix \ref{sec:smalldata} for a  review of 31 papers that apply the synthetic control method, from which it is clear that applications to small datasets are very common.

By modelling the control counterfactuals for unit $I$, the SC method can be used to make inference on the sample average treatment effect on the treated. There are two popular approaches to inference for synthetic controls. The first imposes factor-model type structure on the outcomes, and does asymptotic inference. Many works in the literature, such as \cite{xu2017generalized,
amjad2018robust, athey2021matrix,
arkhangelsky2021synthetic,
ben2021augmented, 
cattaneo2021prediction, 
ferman2021synthetic, cattaneo2022uncertainty, ben2022synthetic, imbens2023identification, shen2023same, sun2023using, chernozhukov2018t, li2020statistical}, impose this assumption to study properties of the synthetic control estimator under the assumption that $N$ or $T$ is large. Most of these methods even require both the number of treated units and the number of post-treatment periods to be large. While various asymptotic approaches to inference exist, we focus on settings where small sample sizes do not justify the assumptions behind asymptotic inference. Furthermore, the asymptotic approaches often simplify the synthetic control procedures substantially to obtain provable guarantees. For example, they often do not consider a data-driven weight matrix $V$ that proves to be crucial in practice \cite{abadie2021using}.

The second approach is the design-based framework, where the data are seen as fixed, and the treatment unit $I$ is assumed to be uniformly assigned from $1$ to $N$. The placebo test of \cite{abadie2003economic, abadie2010synthetic} is an example; it was further extended in \cite{doudchenko2016balancing, firpo2018synthetic, bottmer2021design, shaikh2021randomization}. The uniform assignment assumption holds by design in experimental settings \cite{doudchenko2019designing, jones2019synthetic, abadie2021synthetic}. Even in observational settings, as argued by \cite{bottmer2021design}, the assumption is ``a natural starting point for many analyses, often after some adjustment for observed
covariates. In particular, many of the analyses of SC (synthetic control) methods explicitly or implicitly refer to units being comparable". The assumption could be more tenable after adjusting for covariates, as \cite{bottmer2021design} suggest. The assumption can also be justified in quasi-experimental settings \cite{arkhangelsky2021double, borusyak2020non, borusyak2024design}. Examples such as \cite{dupont2015happened, coffman2012hurricane}, where synthetic controls are used to analyze the impact of natural disasters, provide another setting for this assumption. The rate of natural disasters at a particular location can be estimated from past data, and is plausibly random over small geographic locations.  In this paper, we consider a design-based framework for inference.

The standard inference procedure in the design-based framework is the placebo test. It was introduced in \cite{abadie2003economic, abadie2010synthetic} as a diagnostic tool related to permutation inference and Fisher's randomization test. The procedure was formalized and expanded upon in \cite{firpo2018synthetic} to make inference under the assumption that $I$ is uniformly assigned. It creates a $p$-value via the following recipe. For every unit $i$, create its synthetic control using all other units and obtain predictions $\hat{Y}_{it}$ for all $t > T_0$. Then measure the goodness-of-fit by comparing $\hat{Y}_{it}$ with the observed outcomes through any summary statistic $R_i$, such as the widely used RMSPE statistic defined in \eqref{eq:RMSPE}. Finally, the exact placebo $p$-value is defined as 
\begin{equation}\label{eq:exact_placebo}
p_{\epb} = \frac{1}{N} \sum_{i=1}^N \mathbf{1}\set{R_I \leq R_i}.
\end{equation}
The standard argument for permutation tests shows that $p_{\epb}$ is a valid $p$-value.

\paragraph{Small sample issues of the placebo test.} The $p$-value obtained from such tests are coarse: they take values in the grid \sloppy $\set{\frac{1}{N}, \frac{2}{N},\dots,1}.$ When $N$ is relatively small like in the premier applications of the synthetic control method \cite{abadie2003economic,abadie2010synthetic, abadie2015comparative} or those examples in Appendix \ref{sec:smalldata}, the smallest value $1/N$ of the $p$-value may give statistical significance, while the $2/N$ may switch this conclusion. This low level of granularity hinders the application of the placebo test and related inference methods in practice. In some applications, the value of $1/N$ may be too large to even reject at a pre-specified level $\alpha$. For example, when $\alpha = 0.05$, $\alpha < 1/N$ in both \cite{abadie2003economic} and \cite{abadie2015comparative}. Due to a desire to control size at traditional levels like $\alpha \in \set{0.01,0.05,0.10}$ and generally small datasets, $N\alpha < 3$ is typical for a variety of applications. This includes state/province/prefecture-level analysis \cite{abadie2010synthetic, castillo2015tourism, dupont2015happened, donohue2019right, hankins2020finally,
ben2021augmented, ben2022synthetic}, district-level analysis within a state \cite{sun2023using}, and manufacturer-level analysis \cite{fremeth2013making}. Even in country-level or county/city-level analysis where the donor pool appears to be large, the number of comparable control units may be small due to restrictions based on geographical proximity \cite{coffman2012hurricane, billmeier2013assessing}, potential spillovers \cite{dupont2015happened, ando2015dreams}, economic prosperity \cite{abadie2015comparative}, market conditions \cite{peri2019labor,boes2012effect}, and eligibility \cite{newiak2017evaluating}. On the other hand, it is a common practice to rule out control units with poor pre-treatment fits for the placebo test \cite{abadie2010synthetic, almer2012effect}, which can result in a fair drop in the sample size.

More importantly, the placebo test is powerless in small data settings where $\alpha < 1/N$. For each unit $i$ and time $t$, let $(Y_{it}(1), Y_{it}(0))$ denote the pair of potential outcomes had the unit $i$ been treated and not treated at time $t$, respectively. Further, let $\cl{Y} = \set{(Y_{it}(1),Y_{it}(0)): i \in [N], t \in [T]}$ denote the realizations of all potential outcomes and assume they take distinct values almost surely. Then given $I$ is uniformly assigned, $p_{\epb}$ is distributed uniformly on the set $\set{1/N,2/N,\dots,1}$ under the sharp null $H_0: Y_{It}(1) = Y_{It}(0)$ for all $t$. Thus, the conditional  Type-I error is
\[
\Prob_{H_0}(p_{\epb} \leq \alpha | \cl{Y}) = \frac{\floor{ N\alpha}}{N}.
\]
Averaging over $\mathcal{Y}$, the unconditional Type-I error is 
\[  \Prob_{H_0}(p_{\epb} \leq \alpha)  = \frac{\floor{ N\alpha}}{N}.
\]
As a result, when $\alpha < 1/N$, the placebo test has zero size. Proposition \ref{prop:no_power} extends this argument to show that the placebo test has no power in this setting against a natural class of alternatives.

Can we develop a valid $p$-value which has power in these $\alpha < 1/N$ settings? One useful remedy is to define an approximate version of $p_{\epb}$ by not comparing with the treated unit itself:
\begin{equation}\label{eq:inexact_placebo}
p_{\pb} = \frac{1}{N} \sum_{i\neq I}\mathbb{I}\set{R_I\le R_i} = p_{\epb} - \frac{1}{N}.
\end{equation}
Clearly, its conditional  and unconditional Type-I errors are both $(\floor{N\alpha} + 1) / N$, which is strictly larger than $\alpha$. The textbook solution to resolve this dilemma (e.g., \cite{romano2005testing}) is to use a randomized $p$-value $p_{\epb} - U/N$ where $U$ is an independent draw from a uniform distribution on $[0, 1]$. While this randomized $p$-value has an exact Type-I error control and allows to reject the null even when $\alpha < 1/N$, it offers a cheap way of $p$-hacking -- the researcher can simply choose a random seed that generates a large $U$ without revealing its value. Taking this strategic behavior to the extreme, the randomized $p$-value is no different from the approximate placebo $p$-value defined in \eqref{eq:inexact_placebo}. Aside from the $p$-hacking incentive, the approximate $p$-value defined in \eqref{eq:inexact_placebo} rescaled by $N/(N-1)$ is sometimes used implicitly in the sense that significance is claimed when $p_{\pb} = 0$, i.e., the treated unit has the highest summary statistic. 

For the aforementioned reasons, the approximate placebo is unsatisfactory. Is there any alternative we may use? Unfortunately, no conditionally finite-sample-valid non-randomized $p$-value has power here, as Proposition \ref{prop:no_power} in Appendix \ref{subapp:sec1} shows. However, this does not imply unconditional impossibility. 
In applications of synthetic controls, it is reasonable to treat potential outcomes as random. Thus, it suffices to seek an unconditionally-valid p-value.
One way to get useful inference is to reject if the $p$-value $p(I) \leq 1/N$, and hope that $\Prob_{H_0}(p(I) \leq 1/N | \cl{Y}) = 0$ for some $\cl{Y}$. Averaged over different datasets $\cl{Y}$, we may hope for an unconditionally valid $p$-value. For example, when $1/2N < \alpha < 1/N$, if $\Prob_{H_0}(p(I) \leq \alpha | \cl{Y})$ takes values $0$ and $1/N$ both with probability $1/2$, then the unconditional Type-I error $\Prob_{H_0}(p(I) \leq \alpha) = 1/2N < \alpha$. Neither the exact nor the approximate placebo test has this property because their conditional Type-I errors are both constant regardless of $\mathcal{Y}$. Thus, we cannot achieve this goal by modifying the standard permutation test.

\subsection{Contributions} We present the Leave-Two-Out (LTO) placebo test, which gives up conditional validity for a shot at unconditional validity. The LTO placebo test is a non-randomized approximate $p$-value which can be seen as a leave-two-out extension of the usual placebo test, with a user-specified statistic. The LTO placebo test has several advantages over the approximate placebo inference method. 
\begin{itemize}
    \item \underline{Non-randomized inference when $N\alpha$ is small.} The LTO placebo test has the same \textit{conditional} Type-I error guarantee as the approximate placebo test \eqref{eq:inexact_placebo} for a wide range of $N,\alpha$ seen in practice. It is true in particular when $N\alpha \le 1$ or $N\alpha < 3.9$ if further $N > 10$. Note that the latter case covers all examples listed above. This is proved in Theorem \ref{cor:TypeIguarantee_alphaless1/n}. In addition, we provide empirical evidence that the LTO placebo test often achieves a much lower Type-I error than our theory indicates. As a result, unlike the approximate placebo test \eqref{eq:inexact_placebo}, the LTO placebo test can be \emph{unconditionally valid}, and often matches the unconditional Type I error of the exact placebo test.
    \item \underline{Power Improvements.} We empirically observe power gain compared to 
 the approximate placebo test when the effect size is large, and comparable power for intermediate effect sizes. We also observe strict power gain for all effect sizes compared to the exact placebo test and randomized placebo test even when $\alpha > 1/N$. 
 Theoretically, we extend the classical notion of consistent tests to the fixed-$N$ setting, and prove that the LTO placebo test is uniformly consistent while the approximate placebo test fails to be when $\alpha < 1/N$.
    \item \underline{Granularity Improvements.} By leaving multiple data points out at a time in the same procedure, the leave-two-out $p$-value lies on a much more refined scale than $\set{\frac{1}{N}, \frac{2}{N},\dots,\frac{N-1}{N},1},$ solving granularity problems of the placebo test. 
    
\end{itemize}

In addition, we discuss a weighted variant of the LTO placebo test under non-uniform assignment probabilities, a sensitivity analysis for the procedure, other versions of LTO placebo tests, and leave-$r$-out extensions for $r > 2$.

\subsection{Example: California Proposition 99 Dataset}
\label{sec:intro_example}
In this section, we consider a semisynthetic simulation using the California proposition 99 example of \cite{abadie2010synthetic}, one of the flagship applications of the synthetic control method. The dataset records cigarette sales, prices, and various other regional attributes in 39 US states, over the years 1970 to 2000. In \cite{abadie2010synthetic}, the data are used in the synthetic control procedure to analyze the effect of Proposition 99, a 1988 California state law increasing cigarette taxes, on smoking. In this example, $N = 39, T = 30, T_0 = 19.$

We construct a semisynthetic version of the data, where California is removed from the dataset. A subsample of $N = 30$ units is randomly chosen from the remaining set of $38$ units. In the smaller dataset, a state is chosen uniformly at random to be the `treated' unit. For the treated unit $I$, we posit a uniform treatment effect $\tau$, and we update the observed outcomes $Y_{I,t}$ by adding $\tau$. A range of values of $\tau$ are tested, corresponding roughly to multiples of $(0,-1,-2,-3)$ of the standard deviation of the outcome variable (cigarette sales). When $\tau = 0$ we are calculating the Type-I error of the procedures. Further 
detail on the setup of this simulation can be found in Section \ref{sec:powersimulations}. 

In the first setting, we take $\alpha = 0.02$ $(\alpha < 1/N)$. We run the approximate placebo test, its randomized version, and the LTO placebo test in order to compare their Type-I error and power. See Section \ref{sec:powersimulations} for more detail. The result of this comparison are shown in Figure \ref{fig:smoking_intro_powercomparison}. Several observations can be made regarding empirical performance. Firstly, the Type-I errors of the LTO $p$-values are on average smaller than $1/N$, and is below $\alpha$ in each simulation. This means that for a majority of the Monte Carlo resamples of the dataset, the empirical Type-I error was zero. Thus, the LTO procedure is unconditionally valid. On the other hand, the Type-I error of the approximate placebo test is always fixed. We do not present the exact placebo test because it is powerless. Simultaneously, the power of the LTO placebo test is better than that of the approximate placebo test when the effect size is large. When the effect size is intermediate, however, the approximate placebo test has higher power, but not by much. In Section \ref{sec:boosting_power}, we propose a more powerful test, which has nearly identical power as the approximate placebo test in this setting, by modifying the LTO p-value for any given $\alpha$; see Figure \ref{fig:smoking_full_powercomparison}. Lastly, the randomized placebo test controls Type-I error at the desired level, but has much smaller power than the other two methods. Further it is susceptible to adverse incentives to $p$-hack.

\begin{figure}
  \begin{subfigure}{0.5\textwidth}
    \centering
    \includegraphics[width = \linewidth]
    {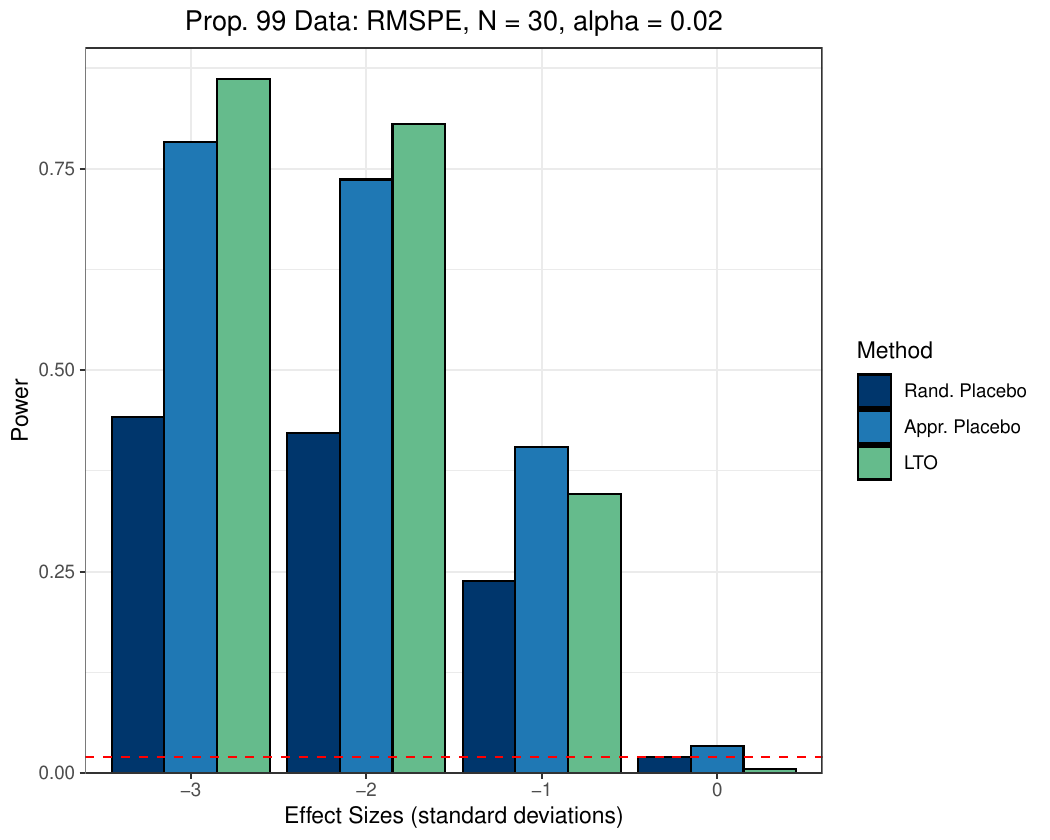}
    \label{fig:smoking_rmspe_n30_alpha0.02_intro}
  \end{subfigure}%
  \begin{subfigure}{0.5\textwidth}
    \centering
    \includegraphics[width=\linewidth]{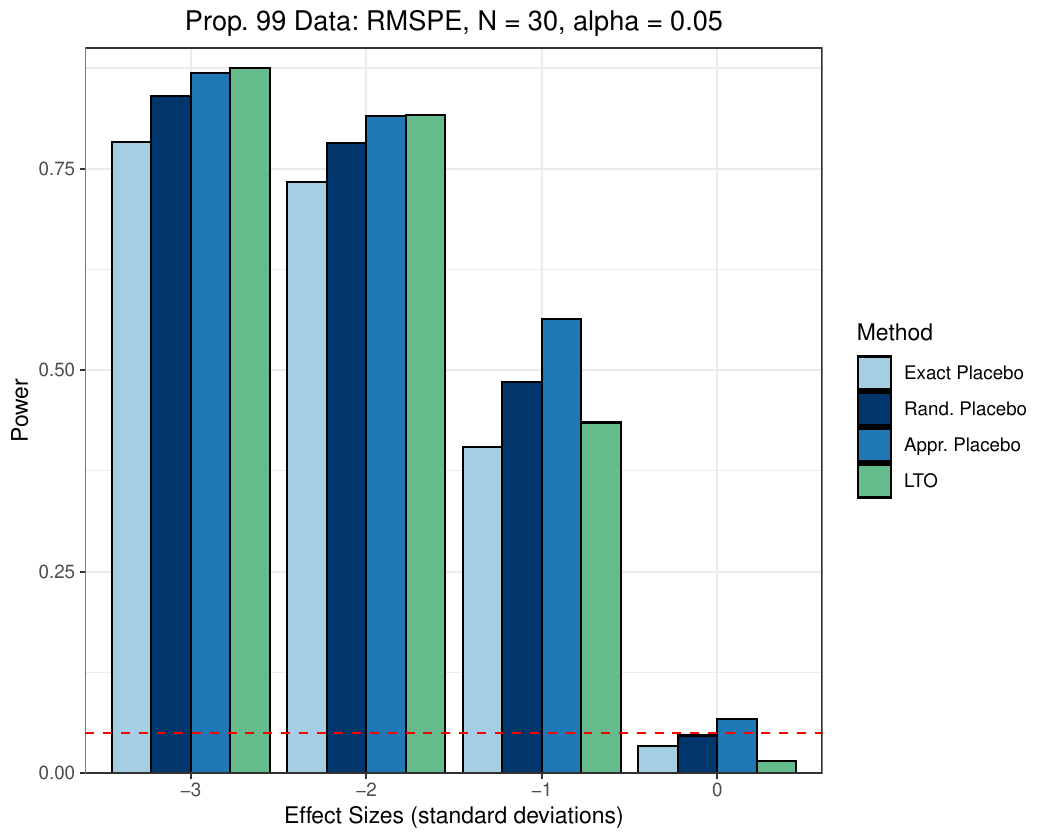}
    \label{fig:smoking_rmspe_n30_alpha0.05_intro}
  \end{subfigure}
  \caption{Power of several different $p$-values versus effect size, on twenty subsamples of the California Proposition 99 dataset, of size 30. The $x$-axis is the effect size $\tau$ scaled in terms of multiples of the standard deviation of the outcome variable of the dataset. Results are averaged over 20 Monte Carlo runs. The exact placebo, randomized placebo, approximate placebo, and LTO placebo are compared, all constructed using the RMSPE statistic. The rightmost column with $\tau = 0$ is just the Type-I error of the procedures. Red dashed line indicates the level $\alpha$. (Left) $\alpha = 0.02$, which falls in the $\alpha < 1/N$ regime. The exact placebo is omitted because the power is zero in all cases. (Right) $\alpha = 0.05$, which falls in the $\alpha > 1/N$ regime.}
  \label{fig:smoking_intro_powercomparison}
\end{figure}

In the second setting we take $\alpha = 0.05 \,\, (\alpha \geq 1/N)$. We compare the exact placebo test, the randomized placebo test, the approximate placebo test, and the LTO placebo test; see the right side of Figure \ref{fig:smoking_intro_powercomparison}.
We can largely draw the same conclusions as with the $\alpha < 1/N$ setting. While the power gap between the LTO placebo test and approximate placebo test is larger for intermediate effect sizes, a more powerful LTO placebo test developed in Section \ref{sec:boosting_power} reduces about half of the gap; see Figure \ref{fig:smoking_full_powercomparison} for a full comparison. We again observe that the LTO procedure is unconditionally valid. In most Monte Carlo resamples, the Type-I error is in fact zero. Notably, the unconditional Type-I error of the LTO placebo test is even lower than that of the exact placebo test, which is below $\alpha$. Moreover, the power of the LTO placebo test is consistently larger than that of the exact placebo test. In this case, the randomized placebo test interpolates between the exact and approximate placebo tests and controls Type-I error at the desired level. See Section \ref{sec:simulations} for a detailed description of the numerical experiment, and results for similar datasets.

\paragraph{Real Data Analysis.} We replicate the analysis of California Proposition 99's effect on smoking using both the placebo $p$-value and our LTO method. Here, $N = 39$ with one treated unit. To highlight a sensitivity analysis procedure we develop for the LTO $p$-value, we conduct inference with $\alpha = 0.05$. The LTO $p$-value obtained was $0.024$ while the exact placebo $p$-value was $0.026$, which exactly equals $1/N$. The approximate placebo $p$-value was $0$. 

In this setting, we can also inspect how sensitive the conclusion of non-significance is with respect to the equal weights assumption. We apply a Rubin-Rosenbaum style sensitivity analysis procedure to this example, in order to assess the sensivity of inference to the equal weights assumption; see Section \ref{sec:sa} for more details. The sensitivity analysis relies on a weighted version of the LTO $p$-value, which is also an approximate $p$-value under non-uniform treatment probabilities $\pi_i$ for unit $i$. Fixing a $\Gamma,$ we constrain the non-uniform propensities $\pi_i$ to sum to $1$ and satisfy $\pi_i \in [\frac{1}{\Gamma N}, \frac{\Gamma}{N}]$ (similarly to \cite{rosenbaum1983assessing}), and calculate the largest possible value of the weighted $p$-value for propensities in this constraint. Finally, we search for the smallest value of $\Gamma$ such that the conclusion of significance is overturned. 

As the reported LTO $p$-value is $0.024$, we want to see how large $\Gamma$ can be before the conclusion of significance is possibly overturned.  The output is shown in Figure \ref{fig:prop99_sa}. From the figure, $\Gamma$ is around $1.4$ before the maximum possible $p$-value is greater than $0.05$. This suggests the conclusion of significance is robust under moderate confounding.

\begin{figure}
    \centering
    \includegraphics[width = 0.5\textwidth]{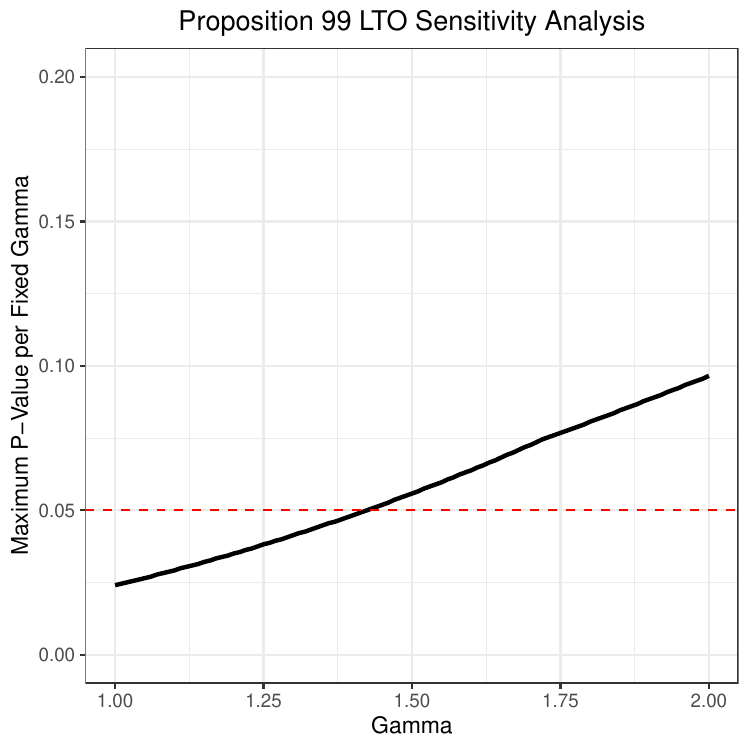}
    \caption{Output of a sensitivity analysis for the Proposition 99 smoking dataset of \cite{abadie2011synth}. The red dashed line signifies the level $\alpha = 0.05$. The RMSPE statistic was used in the construction of the synthetic control.}
    \label{fig:prop99_sa}
\end{figure}

\section{Leave-Two-Out Placebo Test}
\label{sec:ltojk}

Consider a design-based setup similar to that of the placebo test. We observe a panel $Y_{it}$ of $N$ units, $T$ time periods, where the $I$th unit is treated from time $T_0+1,\dots,T$. In potential outcomes, the data is given as:
      \[
    \begin{bmatrix}
    Y_{1,1}(0) & Y_{1,2}(0) & \dots & Y_{1,T_0 + 1}(0)&  \dots & Y_{1,T-1}(0) & Y_{1,T}(0) \\
    Y_{2,1}(0) & Y_{2,2}(0) & \dots & Y_{2,T_0 + 1}(0) & \dots & Y_{2,T-1}(0) & Y_{2,T}(0) \\
    \vdots & \vdots & \ddots & \vdots & \ddots & \vdots & \vdots \\
    Y_{I,1}(0) & Y_{I,2}(0) & \dots & \textcolor{blue}{Y_{I,T_0+ 1}(1)} & \textcolor{blue}{\dots} & \textcolor{blue}{Y_{I,T-1}(1)} & \textcolor{blue}{Y_{I,T}(1)} \\
    \vdots & \vdots & \ddots & \vdots & \ddots & \vdots & \vdots \\
    Y_{N,1}(0) & Y_{N,2}(0) & \dots & Y_{N,T_0+ 1}(0) & \dots & Y_{N,T-1}(0) & Y_{N,T}(0)
    \end{bmatrix}.
    \]
We assume that the treatment index $I$ is uniformly assigned across the $N$ units. Synthetic control methods estimate the average treatment effect on the treated, $\tau_{I,t} = Y_{I,t}(1) - Y_{I,t}(0)$, by using the synthetic control $\hat{Y}_{I,t}$ to estimate the control counterfactual. Following the literature \cite{abadie2010synthetic, abadie2015comparative, firpo2018synthetic}, the primary inference task of interest is to test the sharp null hypothesis 
\begin{equation}\label{eq:global_null}
H_0: Y_{i,t}(1) = Y_{i,t}(0), \quad \forall i, \forall t \geq T_0 + 1
\end{equation}
Extensions to sharp nulls of the form $H_0: Y_{i,t}(1) = Y_{i,t}(0) + \tau_{i,t}$ for known quantities $\tau_{i,t}, i =1,\dots,N,$ and $t \geq T_0 + 1$ can also be handled.

\subsection{Naive LTO Placebo Test}
\label{sec:naive_lto}

Our LTO placebo test procedure works as follows: 
\begin{enumerate}
    \item For every pair of distinct units $(i,j)$ from $1,\dots,N$ and unequal to $I$, create the synthetic control with control units $1,\dots,N$ excluding points $i,j,I$ to obtain $\hat{\mathbf{Y}}_{k}^{-(i,j,I)}, k \in \set{i,j,I},$ the vector of synthetic control outcomes for unit $k$ using controls $[N]\setminus \set{i,j,I}.$ For some statistic $S(\cdot,\cdot),$ define the residual
    \begin{equation}\label{eq:general_score}
    R_{i,j,I;k} = |S(\mathbf{Y}_{k},\hat{\mathbf{Y}}^{-(i,j,I)}_{k})|,
    \end{equation}
    where $k \in \set{i,j,I}$ and $\mathbf{Y}_k$ is the observed data for unit $k.$
    
     A common choice is the post-period to pre-period Ratio of Mean Square Prediction Error (RMSPE) statistic proposed in \cite{abadie2015comparative}: 
    \begin{equation}\label{eq:RMSPE}
    S_{{\tiny \text{ratio-RMSPE}}}(\mathbf{Y}_k,\hat{\mathbf{Y}}_k) =
    \frac{S_{{\tiny \text{post-RMSPE}}}(\mathbf{Y}_k,\hat{\mathbf{Y}}_k)}{S_{{\tiny \text{pre-RMSPE}}}(\mathbf{Y}_k,\hat{\mathbf{Y}}_k)} = \frac{\sum_{t = T_0 + 1}^T (Y_{k, t} - \hat{Y}_{k, t})^2 / (T - T_0)}{\sum_{t = 1}^{T_0} (Y_{k, t} - \hat{Y}_{k, t})^2 / T_0}.
    \end{equation}
    
    \item For every pair of indices in the first step, compute the Leave-Two-Out residual
    \[
    R_{i,j,I}^{\LTO} := \max(R_{i,j,I;i}, R_{i,j,I;j}).
    \]
    Denote the event that $I$'s residual is larger than that of $i,j$ by $\mathbb{I}\set{I \succ i,j}$, so that
    \[
    \mathbb{I}\set{I \succ i,j} := \mathbf{1}\set{R_{i,j,I;I} > R_{i,j}^{\LTO}}.
    \]
    \item Compute the quantity
    \begin{equation}
    \label{eq:jkplus_lto_pvalue}
    p_{\nLTO} := \frac{1}{(N-1)(N-2)} \sum_{\substack{i,j \in[N] \setminus I \\ i \neq j}} \mathbb{I}\set{I \not\succ i,j},
    \end{equation}
    which functions as an approximate $p$-value under the null hypothesis. 
\end{enumerate}

\begin{remark}\label{rem:tournament}
Thinking of the LTO placebo test $p$-value in terms of tournaments is useful. Consider a tournament between all of the $N$ units, where matches are held for every triple of units. A unit $i$ wins a match amongst units $i,j,k$ if its residual $R_{i,j,k;i}$ is the largest. Then $p_{\nLTO}$ after renormalization counts the number of matches in which unit $I$ did not win.
\end{remark}




By approximate $p$-value, we mean that the LTO $p$-value satisfies the following Type-I error guarantee.
\begin{theorem}
\label{cor:TypeIguarantee_alphaless1/n} Let $N \ge 3$. Then the following results hold. 
\begin{enumerate}[(a)]
\item When $\alpha \le \frac{1}{N-1}, \Prob_{H_0}(p_{\nLTO} \leq \alpha) \leq \frac{1}{N}$. 
\item When $N > 6$, 
\[\Prob_{H_0}(p_{\nLTO} \leq \alpha) \leq \frac{\floor{N\alpha}+1}{N}, \quad \text{for any }N\alpha < 4\left(1 - \frac{2}{(N - 1)(N- 
2)}\right).\]
\item For any integer $k$ such that $\frac{k}{N-1}\left(\frac{N}{N-2}\frac{k-1}{3} - 1\right) < 1$, 
\[
\Prob_{H_0}(p_{\nLTO} \leq \alpha) \leq \frac{k}{N}, \quad \text{for any }N\alpha \in \left(k-1, k - \frac{k}{N-1}\left[\frac{N}{N-2}\frac{k-1}{3} - 1\right]\right) \text{and }\alpha <2/3.
\]
\end{enumerate}
\end{theorem}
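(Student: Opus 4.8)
I would condition on the full table of realized potential outcomes $\mathcal{Y}$ and reduce the whole statement to a purely combinatorial fact about a tournament on $[N]$. Under $H_0$ the treated label $I$ is uniform on $[N]$ and independent of $\mathcal{Y}$, while the leave-two-out residuals $R_{\{i,j,k\};\ell}$ are deterministic functions of $\mathcal{Y}$; by the distinctness assumption, each of the $\binom{N}{3}$ triples $\{a,b,c\}\subseteq[N]$ has almost surely a unique \emph{winner} $\arg\max_{\ell\in\{a,b,c\}}R_{\{a,b,c\};\ell}$. Let $W_\ell$ denote the number of triples containing $\ell$ that $\ell$ wins, so $0\le W_\ell\le\binom{N-1}{2}$ and $\sum_\ell W_\ell=\binom{N}{3}$. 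The bookkeeping in Remark~\ref{rem:tournament} gives, on $\{I=\ell\}$, $\sum_{i\ne j,\ i,j\ne\ell}\mathbb{I}\{\ell\not\succ i,j\}=2\bigl(\binom{N-1}{2}-W_\ell\bigr)$, hence $p_{\nLTO}=1-\tfrac{2W_\ell}{(N-1)(N-2)}$, and therefore $\{p_{\nLTO}\le\alpha\}=\{W_I\ge c_\alpha\}$ with $c_\alpha:=\tfrac{(1-\alpha)(N-1)(N-2)}{2}$. Since $I\mid\mathcal{Y}$ is uniform,
\[
\Prob_{H_0}(p_{\nLTO}\le\alpha\mid\mathcal{Y})=\frac{1}{N}\,\bigl|\mathcal{L}\bigr|,\qquad \mathcal{L}:=\{\ell:W_\ell\ge c_\alpha\},
\]
so it suffices to bound $|\mathcal{L}|$ by a deterministic quantity. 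Importantly this will use nothing about the synthetic-control map, only the abstract tournament, which is precisely why the guarantee is uniform over the statistic $S$.

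\textbf{Combinatorial core.} Any triple whose winner lies in $\mathcal{L}$ must contain that vertex, so with $m:=|\mathcal{L}|$ a double count gives
\[
m\,c_\alpha\ \le\ \sum_{\ell\in\mathcal{L}}W_\ell\ =\ \#\{\text{triples with winner in }\mathcal{L}\}\ \le\ \#\{\text{triples meeting }\mathcal{L}\}\ =\ \binom{N}{3}-\binom{N-m}{3}.
\]
Because $\binom{N}{3}-\binom{N-m}{3}=\sum_{j=N-m}^{N-1}\binom{j}{2}$, the ratio $\tfrac1m\bigl[\binom{N}{3}-\binom{N-m}{3}\bigr]$ is the average of $\binom{j}{2}$ over $j=N-m,\dots,N-1$ and is strictly decreasing in $m$; hence the displayed inequality caps $m$ from above. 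Rearranging $m\,c_\alpha\le\binom{N}{3}-\binom{N-m}{3}$ into a condition on $\alpha$ and simplifying the binomials shows that $|\mathcal{L}|\ge k+1$ forces $N\alpha\ge k-\tfrac{k}{N-1}\bigl[\tfrac{N}{N-2}\tfrac{k-1}{3}-1\bigr]$; contrapositively, on the range stated in part~(c) one has $|\mathcal{L}|\le k$, hence $\Prob_{H_0}(p_{\nLTO}\le\alpha\mid\mathcal{Y})\le k/N$ for every $\mathcal{Y}$, so $\Prob_{H_0}(p_{\nLTO}\le\alpha)\le k/N$. The side condition $\tfrac{k}{N-1}[\tfrac{N}{N-2}\tfrac{k-1}{3}-1]<1$ merely makes the interval $(k-1,\,\cdot\,)$ nonempty, and $\alpha<2/3$ is needed because once $\alpha\ge2/3$ we have $c_\alpha\le\tfrac13\binom{N-1}{2}=\tfrac1N\binom{N}{3}$, the minimum of the above ratio, so the bound becomes vacuous. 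Part~(a) is the case $k=1$, whose threshold is $N/(N-1)$ and thus covers $\alpha<\tfrac1{N-1}$; the closed endpoint $\alpha=\tfrac1{N-1}$ has to be handled separately, by using that $W_\ell$ is integer-valued so that $\mathcal{L}=\{W_\ell\ge\lceil c_\alpha\rceil\}$. Part~(b) follows by invoking part~(c) for $k=1,2,3,4$ and taking the union of the ranges: $N>6$ is exactly what makes the $k=3$ threshold exceed $3$ so the four ranges stitch together, and $4\bigl(1-\tfrac{2}{(N-1)(N-2)}\bigr)$ is precisely the $k=4$ threshold.

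\textbf{Main obstacle.} The reduction is short; essentially all the work lies in the last step. One must (i) verify the strict monotonicity of $m\mapsto\tfrac1m[\binom{N}{3}-\binom{N-m}{3}]$, (ii) carry out the algebra converting $m\,c_\alpha\le\binom{N}{3}-\binom{N-m}{3}$ into the exact threshold $k-\tfrac{k}{N-1}[\tfrac{N}{N-2}\tfrac{k-1}{3}-1]$ for each $k$ without dropping lower-order terms, and (iii) pin down the endpoint behaviour — the interval provisos, the $N>6$ condition, and the closed endpoint in part~(a), where the basic inequality is tight and the integrality of $W_\ell$ must be used carefully. I expect (ii)–(iii) to be where most of the effort goes, since the constants in the statement are exact rather than asymptotic.
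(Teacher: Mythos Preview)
Your proposal is correct and takes essentially the same approach as the paper. Your one-step count $m\,c_\alpha\le\binom{N}{3}-\binom{N-m}{3}$ is algebraically identical to the paper's key inequality $(1-\alpha)(N-1)(N-2)\le(s-1)(s-2)/3+(N-s)(N-2)$, which the paper obtains via a three-term decomposition of the strange-point sum (your direct ``triples meeting $\mathcal{L}$'' formulation is in fact exactly what the paper uses for the general leave-$r$-out version); parts~(a)--(c) then follow by the same plug-in-$m=k+1$ case analysis you describe, including the $N>6$ check for $k=3$ and the $k=4$ threshold for~(b).
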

The proof of Theorem \ref{cor:TypeIguarantee_alphaless1/n} is presented in Appendix \ref{subapp:sec2}. It is a (non-trivial) consequence of the following more general result. We sketch its proof at the end of the section. 
\begin{theorem}
\label{thm:TypeIerror}
Under the assumption of uniform treatment, for any $\alpha < 2/3$, 
\begin{equation}
\Prob_{H_0}(p_{\nLTO} \leq \alpha) \leq   \frac{\floor{N f(N,\alpha)}}{N},
\end{equation}   
where 
\begin{equation}\label{eq:finite_samples_TypeI_Error_bound}
f(N, \alpha) :=\frac{3 - \frac{3}{N} - \sqrt{9(1-\frac{1}{N})^2 - 12\left(-\frac{4}{3N^2} + \frac{1}{N} + \alpha(1 - \frac{1}{N})(1 - \frac{2}{N})\right) }}{2}.
\end{equation}
\end{theorem}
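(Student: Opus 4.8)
The plan is to convert the Type-I error calculation into a deterministic counting problem about a tournament on $[N]$, in the spirit of Remark~\ref{rem:tournament}. Under $H_0$ and uniform assignment the observed panel coincides with the matrix of control potential outcomes no matter which unit is treated, so for every triple $\set{a,b,c}\subseteq[N]$ the three residuals obtained by fitting the synthetic control on $[N]\setminus\set{a,b,c}$ are a function of $\mathcal{Y}$ alone and do not involve $I$. Call the winner of a triple the index with the (almost surely unique) largest residual, and let $W_i$ be the number of triples containing $i$ that $i$ wins. Rewriting \eqref{eq:jkplus_lto_pvalue}, each unordered pair is counted twice, so $p_{\nLTO} = 1 - W_I/\binom{N-1}{2}$, and hence $\set{p_{\nLTO}\le\alpha}$ is exactly the event that $I$ is \emph{heavy}, meaning $W_I \ge (1-\alpha)\binom{N-1}{2}$. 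Conditionally on $\mathcal{Y}$ the set of heavy indices is fixed and $I$ is uniform, so the conditional Type-I error equals $m/N$, where $m := \#\set{i : W_i \ge (1-\alpha)\binom{N-1}{2}}$; it then suffices to bound $m$ deterministically.

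The key step is a double count of the losses suffered by heavy units. Each heavy unit belongs to $\binom{N-1}{2}$ triples and wins at least a $(1-\alpha)$ fraction of them, hence loses at most $\alpha\binom{N-1}{2}$; over all heavy units this is at most $m\alpha\binom{N-1}{2}$ losses. On the other hand, any triple containing two or three heavy units forces at least one, respectively two, of them to lose, since a triple has a single winner. Tallying triples by their number of heavy members gives $\binom{m}{2}(N-m) + 2\binom{m}{3} \le m\alpha\binom{N-1}{2}$; dividing by $m$ (the statement is vacuous for $m=0$) this collapses to the quadratic inequality $(m-1)(3N - m - 4) \le 3\alpha(N-1)(N-2)$.

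Finally, one solves this quadratic. Rearranged it reads $m^2 - 3(N-1)m + \bigl(3N - 4 + 3\alpha(N-1)(N-2)\bigr) \ge 0$, an upward-opening parabola in $m$, so $m$ must lie at or below the smaller root $r_-$ or at or above the larger root $r_+$. Its discriminant simplifies to $(3N-5)^2 - 12\alpha(N-1)(N-2)$, which is positive because $(3N-5)^2 - 8(N-1)(N-2) = (N-3)^2 \ge 0$ and $\alpha < 2/3$, so $r_- < r_+$ are real. A routine simplification (squaring and clearing denominators) identifies $r_-$ with $Nf(N,\alpha)$, while $r_+ > 3(N-1)/2 \ge N \ge m$ rules out the case $m \ge r_+$. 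Hence $m \le Nf(N,\alpha)$, and since $m$ is an integer, $m \le \floor{Nf(N,\alpha)}$. As this bounds $m/N$ for every realization of $\mathcal{Y}$, averaging gives $\Prob_{H_0}(p_{\nLTO}\le\alpha) \le \floor{Nf(N,\alpha)}/N$.

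The main obstacle is the second step: recognizing that ``at most one heavy unit can win a given match'' is the structural fact to exploit, and choosing to count \emph{losses in multiply-heavy triples} so that the resulting bound reduces (after dividing out the trivial root $m=0$) to a quadratic rather than to the equivalent but unwieldy cubic comparison $m(1-\alpha)\binom{N-1}{2} \le \binom{N}{3} - \binom{N-m}{3}$. The remainder is bookkeeping, the only delicate point being the verification in the last step that $\alpha < 2/3$ is exactly what keeps the discriminant real and pins $m$ below the smaller root — which is the origin of the theorem's hypothesis.
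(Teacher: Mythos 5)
Your proof is correct and takes essentially the same route as the paper: the same ``strange/heavy unit'' threshold, the same key fact that each triple has at most one winner, and the same quadratic inequality whose smaller root equals $Nf(N,\alpha)$, with the floor coming from integrality of the count. Counting losses in triples with two or three heavy members is just the complementary form of the paper's decomposition of heavy-unit wins into the sums (I), (II), (III), so after the double-count identity the two inequalities coincide.
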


Theorem \ref{cor:TypeIguarantee_alphaless1/n} guarantees that $p_{\nLTO}$ never has worse conditional Type I error than $p_{\pb}$, for a wide range of parameters. Specifically, by Theorem \ref{cor:TypeIguarantee_alphaless1/n} (b), when $\alpha = 0.05$, it implies that the LTO $p$-value has the same theoretical guarantee with the placebo $p$-value for any $N \le 79$. This covers a wide range of settings where the synthetic control method is applied; see Appendix \ref{sec:smalldata}. By Theorem \ref{cor:TypeIguarantee_alphaless1/n} (c), we can easily check if the naive LTO placebo test has the same Type-I error guarantee as the approximate placebo test for any $N$ and $\alpha$. In particular, for all $6 < N < 200$, this equivalence holds for $\alpha \in \{0.01, 0.02\}$. For $\alpha = 0.05$, it holds except for $N \in \{80, 100, 120, 139, 140, 159, 160, 179, 180, 198, 199\}$. To further gain some insight for larger $\alpha$, we plot the Type-I error bound in Theorem \ref{thm:TypeIerror} for $N = 17$ in Figure \ref{fig:TypeIerrorbounds}. The LTO p-value offers guarantees that are comparable to or slightly stronger than those of the approximate placebo test for all practical levels, i.e., when $\alpha \le 0.25$. Lastly, when $N$ is large, the bound is close to $\frac{3 - \sqrt{9 - 12\alpha}}{2}$, which is very close to $\alpha$ when $\alpha$ is small. For $\alpha = 0.05$, the right hand side is around $0.0508$. For $\alpha = 0.1,$ the bound comes out to $0.104.$

On the other hand, as suggested by our empirical results in Section \ref{sec:simulations}, the actual Type-I error of the LTO $p$-value can be strictly lower than the bound while the bound for the placebo $p$-value is tight. Combined with the improved granularity and improved power documented in Sections \ref{sec:powersimulations} and \ref{sec:consistency}, $p_{\LTO}$ provides a compelling alternative to $p_{\pb}$ for randomization inference.

\begin{figure}[h]
    \centering
    \includegraphics[width = \textwidth]{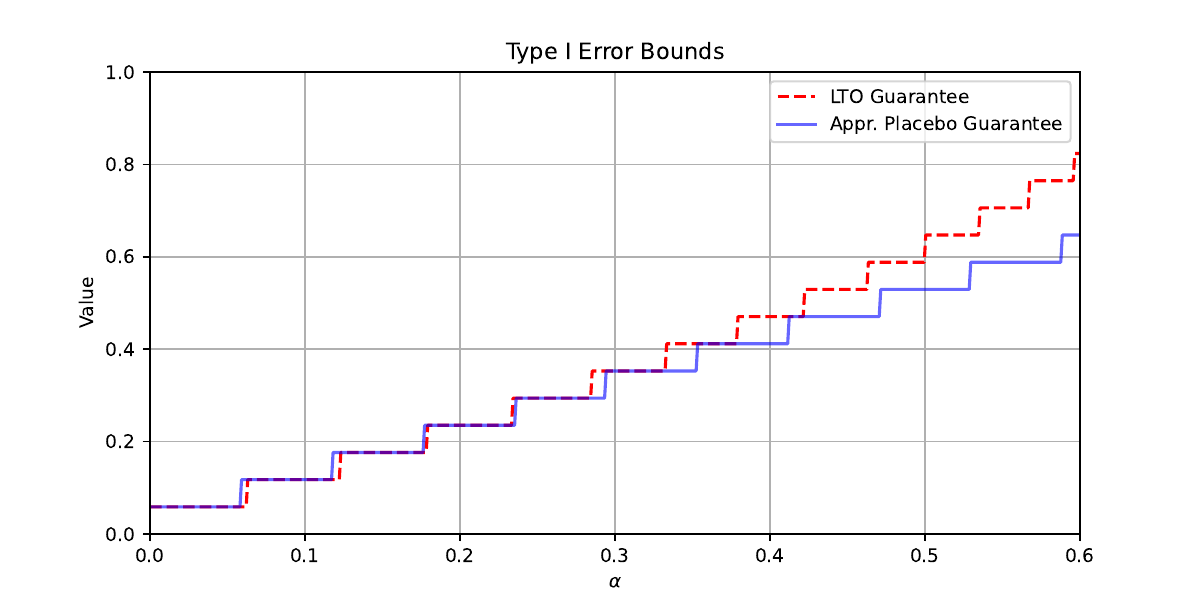}
    \caption{ The red dashed line shows the Type-I error bounds $\floor{\frac{N f(N,\alpha)}{N}}$ in Theorem \ref{thm:TypeIerror} of the Naive LTO procedure, which is a non-decreasing piecewise-constant function in $\alpha$. Here, $N = 17$. The stepwise constant solid blue line is the approximate placebo guarantee, $\frac{\floor{N\alpha}+1}{N}$.
    }
    \label{fig:TypeIerrorbounds}
\end{figure}



\paragraph{Proof Sketch of Theorem \ref{thm:TypeIerror}.} The proof of the coverage guarantee relies on analyzing combinatorial structures related to tournaments. We present the proof here in order to introduce the main ideas of the analysis; it is an interesting extension of the proof of coverage in the Jackknife+ \cite{barber2021predictive}. Let us recall the notation
\[
\bb{I}(k \succ i,j) := \mathbf{1}\set{R_{i,j,k;k} > \max(R_{i,j,k;i}, R_{i,j,k;j})}.
\]
Call a unit $k$ \textit{strange} if 
\[
\sum_{\substack{i,j \in[N] \setminus k \\ i \neq j}} \bb{I}(k \succ i,j) \geq (1-\alpha) (N-1)(N-2).
\]
Intuitively, the LTO $p$-value can be seen as the normalized rank of the unit $I$ in a tournament where matches are held between every triple of distinct units $i,j,k$. The winner of the match is the unit with the largest residual. We call a unit \textit{strange} if it wins too many matches, quantified above. In such a tournament, not everyone can win too often; following this intuition, we upper bound the number of strange units. \\

Let $\cl{S}$ denote the set of strange units, and let $s := |\cl{S}|$. Summing the definition of strangeness on both sides, we see that 
\[
\sum_{k \in \cl{S}} \sum_{i,j \in [N], i\neq j} \bb{I}(k \succ i,j) \geq (1-\alpha) s (N-1)(N-2).
\]
The sum on the left hand side may be split into three terms:
\begin{align*}
     & \overbrace{\sum_{i,j,k \in \cl{S}, i\neq j\neq k}  \bb{I}(k \succ i,j)}^{\text{(I)}}
    +  \overbrace{2\sum_{i,k \in \cl{S}, i\neq k, j \in \cl{S}^c} \bb{I}(k \succ i,j)}^{\text{(II)}}
    +  \overbrace{\sum_{ i,j \in \cl{S}^c, i\neq j, k \in \cl{S}}  \bb{I}(k \succ i,j)}^{\text{(III)}},
\end{align*}
where $i \neq j\neq k$ is shorthand notation for mutually distinct $i, j, k$.
For the first sum, note by renaming the labels that 
$\sum_{i,j,k \in \cl{S}, i\neq j\neq k} \bb{I}(k \succ i,j) = \sum_{i,j,k \in \cl{S}, i\neq j\neq k}  \bb{I}(i \succ j,k) = \sum_{i,j,k \in \cl{S}, i\neq j\neq k}  \bb{I}(j \succ k,i).$ Thus 
\[
\text{(I)} \leq \sum_{i,j,k \in \cl{S}, i\neq j\neq k} \frac{1}{3}\left(\bb{I}(k \succ i,j) + \bb{I}(i \succ j,k) + \bb{I}(j \succ k,i) \right) \leq s(s-1)(s-2)/3.
\]
Crucially, we use the fact that there can be at most one winner in a triple comparison: $\bb{I}(k \succ i,j) + \bb{I}(i \succ j,k) + \bb{I}(j \succ k,i) \leq 1$.\footnote{We allow for zero winners in the triple comparison if there are ties in the residuals.} For the second sum, swap the naming of the $i,k$ labels and use the bound $\bb{I}(k \succ i,k) + \bb{I}(i \succ k,j) \leq 1$ to see that 
\[
\text{(II)} \leq 2\sum_{i,k \in \cl{S}, i\neq k, j \in \cl{S}^c} \frac{1}{2}\left(\bb{I}(k \succ i,k) + \bb{I}(i \succ k,j) \right) \leq s(s-1)(N - s).
\]
Finally, in the last sum, we use the naive bound $\text{(III)} \leq s (N-s)(N-1-s).$ 

Combining these bounds and working through algebra, we arrive at a quadratic inequality in the variable $\beta := s/N$: 
\[
\frac{\beta^2}{3} - \beta(1 - \frac{1}{N}) - \frac{4}{3N^2} + \frac{1}{N} + \alpha(1 - \frac{1}{N})(1-  \frac{2}{N}) \geq 0.
\]
We can solve this by quadratic formula, yielding that, for any $\alpha < 2/3$, 

\begin{equation}
\label{eq:quadraticbound_jkproof}
\beta \leq f(N, \alpha).
\end{equation}

To conclude, under uniform treatment we have $\Prob(p_{\nLTO} \leq \alpha) = \Prob(I \in \cl{S}) = \beta.$ The bound in equation \eqref{eq:quadraticbound_jkproof} shows that $p_{\nLTO}$ is an approximate $p$-value. A full proof with further computational details may be found in Appendix \ref{sec:proofs}.

\subsection{Improved LTO Placebo test}
\label{sec:boosting_power}

Notice that in defining the notion of strange unit, we have great flexibility in choosing the threshold. Call a unit $k$ $c$-\textit{strange} if 
\[
\sum_{\substack{i,j \in[N] \setminus k \\ i \neq j}} \bb{I}(k \succ i,j) \geq (1-\alpha - c) (N-1)(N-2).
\]
This results in a new $p$-value $p_{\nLTO} - c$. By Theorem \ref{thm:TypeIerror}, 
\[\Prob(p_{\nLTO} - c \leq \alpha) = \Prob(p_{\nLTO}\leq \alpha + c)\leq \frac{\floor{N f(N,\alpha+c)}}{N}.\]
As a result, for any $c > 0$ such that
\[\frac{\floor{N f(N,\alpha+c)}}{N} = \frac{\floor{N f(N,\alpha)}}{N},\]
the adjusted $p$-value $p_{\nLTO} - c$ has the same bound on the rejection probability as the naive LTO $p$-value $p_{\nLTO}$ but is strictly more powerful than the latter, since it is strictly smaller. This motivates the improved LTO placebo $p$-value.

\begin{theorem}\label{thm:improved_LTO}
Let $\delta$ be any positive number (e.g., $\delta = 10^{-10}$). Define 
\begin{equation}\label{eq:improved_LTO_pvalue}
p_{\pLTO}(\alpha) = p_{\nLTO} - c(N, \alpha) + \delta,
\end{equation}
where
\[c(N, \alpha) = \min\{c: f(N,\alpha+c) = (\floor{Nf(N, \alpha)} + 1)/N\}.\]
Then 
\[\Prob(p_{\pLTO}(\alpha)\le \alpha) \le \frac{\floor{N f(N,\alpha)}}{N}.\]
\end{theorem}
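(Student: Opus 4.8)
The plan is to reduce the statement to Theorem~\ref{thm:TypeIerror} together with two elementary facts about the function $f(N,\cdot)$ from \eqref{eq:finite_samples_TypeI_Error_bound}. First I would record that, on the range of levels for which the radicand in \eqref{eq:finite_samples_TypeI_Error_bound} is nonnegative (which, as in Theorem~\ref{thm:TypeIerror}, includes all arguments below $2/3$), the map $\alpha\mapsto f(N,\alpha)$ is continuous and \emph{strictly increasing}: the coefficient of $\alpha$ inside the square root is $-12(1-\frac1N)(1-\frac2N)<0$ for every $N\ge 3$, so increasing $\alpha$ strictly decreases the radicand and hence strictly increases $3-\frac3N-\sqrt{\,\cdot\,}$. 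Second, by definition of the floor, $f(N,\alpha)<(\floor{Nf(N,\alpha)}+1)/N$ strictly, so the target value appearing in the definition of $c(N,\alpha)$ is strictly above $f(N,\alpha)$. Combined with continuity and strict monotonicity (and the implicit assumption, inherent in the definition of $c(N,\alpha)$, that this target value lies in the range of $f(N,\cdot)$ over the admissible interval), this shows $c(N,\alpha)$ is well-defined and strictly positive, and that the minimizing $c$ is in fact the unique value with $f(N,\alpha+c(N,\alpha))=(\floor{Nf(N,\alpha)}+1)/N$ exactly.

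With these preliminaries the argument is short. By the definition \eqref{eq:improved_LTO_pvalue},
\[
\Prob\bigl(p_{\pLTO}(\alpha)\le\alpha\bigr)=\Prob\bigl(p_{\nLTO}\le \alpha+c(N,\alpha)-\delta\bigr).
\]
If $\alpha+c(N,\alpha)-\delta<0$ (the degenerate case of a very large $\delta$) the right-hand side is $0$ and the claim is immediate, so assume the level is nonnegative; since $f(N,\alpha+c(N,\alpha))=(\floor{Nf(N,\alpha)}+1)/N$ is only slightly above $f(N,\alpha)$, the relevant level stays below $2/3$ whenever $\alpha$ does, so Theorem~\ref{thm:TypeIerror} applies and gives
\[
\Prob\bigl(p_{\nLTO}\le \alpha+c(N,\alpha)-\delta\bigr)\le \frac{\floor{Nf\bigl(N,\alpha+c(N,\alpha)-\delta\bigr)}}{N}.
\]
Now invoke strict monotonicity of $f(N,\cdot)$ together with $\delta>0$: since $\alpha+c(N,\alpha)-\delta<\alpha+c(N,\alpha)$,
\[
Nf\bigl(N,\alpha+c(N,\alpha)-\delta\bigr)<Nf\bigl(N,\alpha+c(N,\alpha)\bigr)=\floor{Nf(N,\alpha)}+1 .
\]
The right-hand side is an integer, so $\floor{Nf(N,\alpha+c(N,\alpha)-\delta)}\le\floor{Nf(N,\alpha)}$, and chaining the three displays yields $\Prob(p_{\pLTO}(\alpha)\le\alpha)\le\floor{Nf(N,\alpha)}/N$, as claimed.

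I do not expect a genuine obstacle: the substantive content is entirely in Theorem~\ref{thm:TypeIerror}, and what remains is bookkeeping about the piecewise-constant bound $\floor{Nf(N,\cdot)}/N$. The one place deserving care is the final floor step, where the role of the strictly positive nudge $\delta$ is essential — without it one would only get the a priori bound $\floor{Nf(N,\alpha+c(N,\alpha))}/N=(\floor{Nf(N,\alpha)}+1)/N$, which is exactly one grid-point too weak; the strict inequality $\alpha+c(N,\alpha)-\delta<\alpha+c(N,\alpha)$ is what pushes the evaluation point of $f$ strictly below the level at which its scaled floor increments. If one wants to be fully rigorous about domains, the only additional thing to verify is that $(\floor{Nf(N,\alpha)}+1)/N$ indeed lies in the range of $f(N,\cdot)$ on the admissible $\alpha$-interval, so that $c(N,\alpha)$ exists; this is where the standing assumption $\alpha<2/3$ (and $N$ not too small) enters.
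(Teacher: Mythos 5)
Your proposal is correct and follows essentially the same route as the paper: the paper's own justification (the paragraph preceding the theorem) applies Theorem \ref{thm:TypeIerror} at the shifted level $\alpha + c(N,\alpha) - \delta$ and uses monotonicity of $f(N,\cdot)$ plus the strictly positive nudge $\delta$ to keep $\floor{Nf(N,\cdot)}$ at $\floor{Nf(N,\alpha)}$, exactly as you do. Your additional care about the domain and the existence of $c(N,\alpha)$ is fine but not part of the paper's argument.
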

The power improvement is enabled by the discrete Type-I error bound. Since the improved LTO placebo test has the same Type-I error bound as the naive one, Theorem \ref{cor:TypeIguarantee_alphaless1/n} applies to $p_{\pLTO}(\alpha)$. As a result, $p_{\pLTO}(\alpha)$ has the same theoretical guarantee as $p_{\pb}$ when $N\alpha < 4(1 - 2/(N-1)(N-2))$. Note that $p_{\pLTO}(\alpha)$ depends on $\alpha$ and hence, unlike $p_{\nLTO}$, cannot be viewed as a $p$-value. It can only be used for testing $H_0$ at level $\alpha$.

While the expression of $c(N, \alpha)$ is complicated, we can easily compute it by a binary search. For the Proposition 99 dataset, $N = 39$, $c(N, 0.02) = 0.006$ and $c(N, 0.05) = 0.002$. For the application in \cite{abadie2003economic}, $N = 17$ and $c(N, 0.05) = 0.0125$.

\paragraph{Further power improvement by enforcing decision monotonicity.}

One undesirable feature of $p_{\pLTO}(\alpha)$ is that the procedure is not decision-monotonic in $\alpha.$  That is, if the test based on $p_{\pLTO}(\alpha)$ rejects at some level $\alpha$, the test using $p_{\pLTO}(\alpha')$ may not reject at an $\alpha'$ with $\alpha' > \alpha$. This is due to the step-like behavior of the Type-I error upper bound.

We can enforce decision monotonicity and also improve power of the procedure by considering instead 
\[
I_\alpha = \max_{\alpha' \leq \alpha} \mathbf{1}\set{p_{\pLTO}(\alpha') \leq \alpha'},
\]
rejecting at level $\alpha$ if $I_\alpha = 1$. Clearly this procedure is more powerful than rejecting only when $p_{\pLTO}(\alpha) \leq \alpha$. The procedure also controls Type-I error. To see this, 
\begin{align*}
    \Prob(I_\alpha = 1) & = \Prob(\exists \alpha': p_{\pLTO}(\alpha') \leq \alpha')  = \Prob(\exists \alpha': p_{\nLTO} \leq \alpha' + c(N, \alpha') - \delta),
\end{align*}
The last line may be written
\begin{align*}
     & \Prob\left(p_{\nLTO} \leq \max_{\alpha' \leq \alpha} \alpha' + c(N, \alpha') - \delta\right) =  \Prob\left(p_{\nLTO} \leq  g(\alpha) + c(N, g(\alpha)) - \delta\right) ,
\end{align*}
where $g(\alpha) = \operatorname{argmax}_{\alpha' \leq \alpha} \alpha' + c(N, \alpha') - \delta.$ The latter is equal to $\Prob(p_{\pLTO}(g(\alpha)) \leq g(\alpha)) \leq \frac{\floor{N f(N,g(\alpha))}}{N} \leq \frac{\floor{N f(N,\alpha)}}{N}$, which is the same Type-I error bound as using just $p_{\pLTO}(\alpha).$ We do not use this procedure in our simulations, leaving this method for theoretical interest.

\subsection{Related Work}

The existing literature on synthetic controls is vast: see the survey article \cite{abadie2021using} for a review. Accordingly, there has been much work addressing inference for synthetic controls. Among the first inferential procedures for synthetic controls was the placebo test, first described in \cite{abadie2003economic}, and formalized by Firpo and Possebom \cite{firpo2018synthetic} into a permutation test. They work in the same design-based framework where the probability of treatment is uniform. Several extensions are made, including to a non-equally weighted inference procedure which is useful for sensitivity analysis. 

Other works have also considered the design-based setup and randomization or permutation-based inference. See \cite{bottmer2021design} for a design-based analysis of the synthetic controls procedure under a similar uniform treatment assumption, and also a random treatment time assumption. \cite{shaikh2021randomization} gives a related randomization inference procedure in a staggered adoption setting where there are multiple treated units, under the additional assumption that the time at which each unit adopts the treatment that follows a Cox proportional hazards model. In general, design-based inference proves to be a powerful tool for panel data analysis \cite{arkhangelsky2021double,rambachan2020design,
abadie2020sampling, athey2022design, roth2023efficient,xu2023difference}.

Inference procedures have also been developed assuming underlying factor model structure and related time series assumptions on the data. \cite{chernozhukov2021exact} develop an inference procedure assuming a regularity condition on the residual process in time. The paper contains an extensive discussion on the various choices for creating the counterfactual or synthetic control. Another line of research \cite{cattaneo2021prediction, cattaneo2022uncertainty} provides prediction intervals via non-asymptotic concentration bounds, under stationarity and weak dependence time series assumptions on the data. See also \cite{hahn2017synthetic, amjad2018robust, arkhangelsky2021synthetic, arkhangelsky2023large, li2020statistical, chernozhukov2018t} for related asymptotic analyses of the synthetic-control type estimators under time series assumptions. 

For many of these results, consistency of the estimator and the resulting inference procedure is shown in the limit as the number of controls and units goes to infinity. Although many of these works offer finite sample bounds, it is difficult to contextualize these results when the applications of synthetic controls are usually small $N$, small $T$ settings. Furthermore, many of these asymptotic results require simplifications on the synthetic control procedure to carry out analysis. The LTO placebo test can use other counterfactual models beyond the standard synthetic control method, and we require no conceptual simplifications to do inference. As a result, the LTO procedure may be used with generalizations of the original synthetic control procedure such as those outlined in \cite{doudchenko2016balancing, amjad2018robust,  chernozhukov2021exact, arkhangelsky2021synthetic, gunsilius2023distributional}. Furthermore, it can be applied with non-synthetic-control methods developed for inference with few clusters; see Section \ref{sec:few_clusters} for further discussion.

\section{Non-Uniform Assignments}\label{sec:sa}
\subsection{Weighted LTO placebo test}

Outside of experimental settings, the assumption of uniform treatment is questionable. To handle non-uniform treatment assignments, we first propose a weighted analog of the LTO $p$-value. Suppose that the probability of treating unit $k$ is given by $\pi_k, k =1,\dots,N$. In some applications, these probabilities can be estimated. For example, when assessing the effect of earthquakes \cite{cavallo2013catastrophic, dupont2015happened}, the probability of earthquake in each region can be estimated via statistical seismology \cite{vere2005statistical}. Then the following weighted approximate $p$-value generalizes the LTO $p$-value of Section \ref{sec:ltojk}. 

\begin{definition}
[Weighted LTO $p$-values]
\begin{align*}
    p_{\wLTO}(\pi) & :=  \sum_{\substack{j \neq k\\ j,k \neq I}} \frac{\pi_j \pi_k}{(1-\pi_I)^2 - \sum_{l\neq I} \pi_l^2} \mathbf{I}\lbrace I \not\succ j,k \rbrace 
\end{align*}
\end{definition}

Note that $p_{\wLTO}(\pi) = p_{\nLTO}$ when $\pi_i = 1/N$ for all $i$.
We prove the following upper bound for the Type-I error of $p_{\wLTO}(\pi)$ as the value of an integer programming problem. The proof is deferred to Appendix \ref{sec:sa_proofs}.

\begin{theorem}\label{thm:TypeIerror_weighted_LTO}
For any $\alpha \in (0, 1)$, $\Prob_{H_0}(p_{\wLTO}(\pi) \le \alpha)$ is upper bounded by the value of the following integer programming problem: 
\begin{align*}
\max \,\, & L_1(z)\\
\text{s.t.} \,\, 
& \frac{1}{3}L_1(z)^3 - L_1(z)^2 + L_1(z) \left[ L_2(z) + \alpha \left(1 - \sum_{l=1}^N \pi_l^2\right) \right] \ge  \left(\frac{4}{3}-2\alpha\right)L_3(z) - (1-2\alpha)L_2(z),\\
& L_k(z) = \sum_{i=1}^{N}\pi_i^k z_i, \,\,z_i \in \{0, 1\}, \,\, i = 1,\ldots, N, \,\, k = 1, 2, 3.
\end{align*}
\end{theorem}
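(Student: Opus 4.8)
}
The plan is to replay the tournament argument behind Theorem~\ref{thm:TypeIerror}, but to carry the assignment probabilities through the combinatorics so that the resulting constraint is phrased in the weighted power sums $L_k(z)=\sum_i\pi_i^k z_i$. First I would rewrite the rejection event. Since $(1-\pi_I)^2-\sum_{l\neq I}\pi_l^2=\sum_{j\neq k,\,j,k\neq I}\pi_j\pi_k=:D_I$, the weights appearing in $p_{\wLTO}(\pi)$ sum to one, so $p_{\wLTO}(\pi)=1-\sum_{j\neq k,\,j,k\neq I}\tfrac{\pi_j\pi_k}{D_I}\mathbf{I}\set{I\succ j,k}$, and hence $\set{p_{\wLTO}(\pi)\le\alpha}$ is exactly the event that unit $I$ is \emph{strange} in the weighted sense
\[
\sum_{j\neq l,\,j,l\neq k}\pi_j\pi_l\,\mathbf{I}\set{k\succ j,l}\ \ge\ (1-\alpha)\,D_k,\qquad D_k=1-\textstyle\sum_l\pi_l^2-2\pi_k(1-\pi_k).
\]
Under $H_0$ the observed panel — and therefore every synthetic-control fit, every residual $R_{i,j,l;m}$, and the entire tournament — does not depend on which unit is treated, so the set $\cl{S}$ of strange units is a deterministic function of $\cl{Y}$. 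Writing $z^\star_i=\mathbf{I}\set{i\in\cl{S}}$ and using that $I$ is drawn with $\Prob(I=i)=\pi_i$ independently of $\cl{Y}$, we get $\Prob_{H_0}(p_{\wLTO}(\pi)\le\alpha\mid\cl{Y})=\sum_{i\in\cl{S}}\pi_i=L_1(z^\star)$.

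The heart of the argument is to show $z^\star$ is feasible for the integer program. Multiplying the strangeness inequality by $\pi_k$ and summing over $k\in\cl{S}$ gives $\sum_{k\in\cl{S}}\pi_k\sum_{j\neq l,\,j,l\neq k}\pi_j\pi_l\,\mathbf{I}\set{k\succ j,l}\ge(1-\alpha)\sum_{k\in\cl{S}}\pi_k D_k$. Exactly as in the proof sketch of Theorem~\ref{thm:TypeIerror}, I would split the left side into the part with $j,l\in\cl{S}$, the part with exactly one of $j,l$ in $\cl{S}^c$ (a factor $2$ from the two orderings of $(j,l)$), and the part with $j,l\in\cl{S}^c$; once the split is made the restrictions $j\neq l$ and $j,l\neq k$ are automatic. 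These three pieces I would bound, respectively, by symmetrizing over the three members of a triple and using ``at most one winner per triple'' (factor $\tfrac13$), by symmetrizing the two $\cl{S}$-members and using $\mathbf{I}\set{k\succ j,l}+\mathbf{I}\set{j\succ k,l}\le1$ (factor $\tfrac12$ absorbed in the $2$), and by the trivial bound $\mathbf{I}\set{\cdot}\le1$. Each resulting bound is a symmetric function of $\set{\pi_i:i\in\cl{S}}$, which I would convert using the elementary identities (writing $L_k:=L_k(z^\star)$) $\sum_{i\neq j\neq k\text{ in }\cl{S}}\pi_i\pi_j\pi_k=L_1^3-3L_1L_2+2L_3$, $\sum_{i\neq j\text{ in }\cl{S}}\pi_i\pi_j=L_1^2-L_2$, $\sum_{i\in\cl{S}^c}\pi_i=1-L_1$, $\sum_{i\in\cl{S}^c}\pi_i^2=\sum_l\pi_l^2-L_2$, together with $\sum_{k\in\cl{S}}\pi_k D_k=(1-\sum_l\pi_l^2)L_1-2L_2+2L_3$ on the right-hand side. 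Collecting terms and cancelling should reproduce the displayed cubic inequality in $L_1(z^\star),L_2(z^\star),L_3(z^\star)$ verbatim; as a sanity check, setting $\pi_i\equiv1/N$ collapses it (after dividing by $\beta$) to the quadratic in $\beta=s/N$ that appears in the proof of Theorem~\ref{thm:TypeIerror}.

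Finally, since $z^\star\in\set{0,1}^N$ is feasible, $L_1(z^\star)$ is at most the optimal value $V$ of the integer program, and $V$ does not depend on $\cl{Y}$; averaging over $\cl{Y}$ then gives $\Prob_{H_0}(p_{\wLTO}(\pi)\le\alpha)=\mathbb{E}_{\cl{Y}}[L_1(z^\star)]\le V$. The main obstacle is bookkeeping rather than a new idea: one must run the weighted symmetrizations without dropping the $\pi_k$ factor on the outer sum — this is exactly what makes the power sums close up — keep the index restrictions straight across the $\cl{S}$ versus $\cl{S}^c$ split, and push the cubic algebra through so the inequality matches the stated form exactly rather than only up to slack in the lower-order terms.
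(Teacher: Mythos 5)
Your proposal is correct and follows essentially the same route as the paper's proof: define weighted strange units via the clearing of the denominator $(1-\pi_k)^2-\sum_{l\neq k}\pi_l^2$, multiply the strangeness inequality by $\pi_k$ and sum over $\cl{S}$, split by membership in $\cl{S}$ versus $\cl{S}^c$ with the $\tfrac13$ and $\tfrac12$ symmetrization bounds, and express everything in the power sums $L_1,L_2,L_3$ to obtain the cubic constraint; the identities you list do collect to the stated inequality, and your final feasibility-plus-averaging step matches the paper's conclusion that the Type-I error equals $\sum_{i\in\cl{S}}\pi_i\le V$.
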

When $\pi_i \equiv 1/N$, it can be shown via tedious algebra that Theorem \ref{thm:TypeIerror_weighted_LTO} recovers Theorem \ref{thm:TypeIerror}. In the general non-uniform case, while the bound is given by a complicated nonconvex integer programming problem, it can be solved approximately by off-the-shelf solvers when $N$ is small, e.g., $N < 50$.

\subsection{Sensitivity analysis}
When assignment probabilities are not controlled by the researcher, it is useful to have procedures to analyze sensitivity of our approach to the uniform assignment assumption. The goal of sensitivity analysis is to answer the question: by how much would our statistical conclusions change if the treatment were non-uniform? Phrased another way, by how much would the treatment probabilities have to change to overturn a conclusion of statistical significance? Firpo and Possebom \cite{firpo2018synthetic} give an approach to do sensitivity analysis for the exact placebo test under non-uniform treatments. We can extend their approach to the naive LTO placebo test using the weighted LTO placebo $p$-values as the building block. The procedure is as follows.

\begin{enumerate}
    \item Compute $p_{\nLTO}$. If $p_{\nLTO} > \alpha$, claim insignificance; otherwise, proceed with the following steps.
    \item Fix a $\Gamma \in [1,N].$ Solve the optimization problem
    \begin{gather}
    \label{eq:optimization_sa}
     \text{maximize}_{\pi} \ p_{\wLTO}(\pi) \\
     \text{subject to } \pi_i \in \left[\frac{1}{\Gamma N}, \frac{\Gamma}{N}\right] \,\, \forall i,\,\,\, \text{and }\sum_i \pi_i = 1.\label{eq:optimization_sa_constaint}
    \end{gather}
    \item Find the smallest value of $\Gamma$ by a binary search such that the solution to the above optimization problem is greater than $\alpha.$
\end{enumerate}

\begin{remark}
The choice of the bound $[\frac{1}{\Gamma N}, \frac{\Gamma}{N}]$ is not essential. We can handle any bounds on $(\pi_1, \ldots, \pi_N)$ defined by a single parameter. For example, the uncertainty set considered in the sensitivity analysis of \cite{firpo2018synthetic} can be equivalently formulated as $\max_{i}\pi_i / \min_{i}\pi_i \le \Gamma$ for some $\Gamma \ge 1$.
\end{remark}

By Theorem \ref{thm:TypeIerror_weighted_LTO}, it remains to further add the constraint \eqref{eq:optimization_sa_constaint} into the integer programming problem. By carefully analyzing it, we end up with a much simpler result. The proof is presented in Appendix \ref{sec:sa_proofs}.

\begin{theorem}
\label{thm:weighted_alphaless1/n_typeIerror}
Assume that $\alpha \le 1/3$, $\Gamma \le N/4$, and $\pi_i \in [\frac{1}{\Gamma N}, \frac{\Gamma}{N}]$ for all $i$. Then 
\begin{align*}
\Prob_{H_0}(p_{\wLTO}(\pi)\le \alpha)\leq\frac{3 - \frac{3\Gamma}{N} - \sqrt{9(1-\frac{\Gamma}{N})^2 - 12\left(-\frac{4\Gamma^2}{3N^2} + \frac{\Gamma}{N} + \alpha\left(1- \frac{2\Gamma + 1}{N} + \frac{2\Gamma^2}{N^2}\right)\right) }}{2}.  
\end{align*}
\end{theorem}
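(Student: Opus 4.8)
### Proof Proposal for Theorem \ref{thm:weighted_alphaless1/n_typeIerror}

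**Setting up.** The plan is to start from the integer programming bound in Theorem \ref{thm:TypeIerror_weighted_LTO} and impose the additional box constraint $\pi_i \in [\frac{1}{\Gamma N}, \frac{\Gamma}{N}]$ together with $\sum_i \pi_i = 1$. Write $\beta := \sum_i \pi_i z_i = L_1(z)$ for the ``mass'' of the selected set $\{i : z_i = 1\}$. The goal is to show that the feasible region of the IP forces $\beta$ to lie below the claimed closed-form threshold, after which the bound on $\Prob_{H_0}(p_{\wLTO}(\pi) \le \alpha)$ follows because, under the weighted design where $I$ is drawn with probability $\pi_I$, the event $\{p_{\wLTO} \le \alpha\}$ corresponds to $I$ being a ``strange'' unit, and the probability that $I$ falls in a set of total mass $\beta$ is exactly $\beta$. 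So it suffices to maximize $L_1(z)$ over the IP feasible set and show the optimum is at most the stated expression.

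**Reducing the IP via the box constraints.** The key simplification: in the IP constraint, $L_2(z) = \sum \pi_i^2 z_i$ and $L_3(z) = \sum \pi_i^3 z_i$ appear, and I want to bound the left-hand side from above and the right-hand side from below in terms of $\beta = L_1(z)$ alone. Under the box constraint each selected $\pi_i \le \Gamma/N$, so $L_2(z) \le (\Gamma/N) L_1(z)$ and $L_3(z) \le (\Gamma/N)^2 L_1(z)$; also $L_2(z) \ge (1/(\Gamma N)) L_1(z)$ and similarly for $L_3$. Since the coefficient $(\tfrac{4}{3} - 2\alpha)$ multiplying $L_3$ on the RHS is positive for $\alpha \le 1/3$, and I want the constraint (an inequality $\ge$) to be as \emph{easy} to satisfy as possible in order to maximize $\beta$, I should plug in the choices that make the RHS smallest and LHS largest. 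Also $\sum_l \pi_l^2 \le (\Gamma/N)\sum_l \pi_l = \Gamma/N$ (using $\pi_l \le \Gamma/N$ and $\sum \pi_l = 1$). Substituting these worst-case bounds collapses the IP constraint to a cubic inequality in the single variable $\beta$. I expect the cubic to factor or reduce — since the original (uniform) proof in Theorem \ref{thm:TypeIerror} reduced a cubic to a quadratic (the $\beta^3/3$ term was cancelled), I anticipate the same cancellation here after substituting $L_2 \approx (\Gamma/N)\beta$, $L_3 \approx (\Gamma/N)^2\beta$: the $\tfrac13 \beta^3$ term on the LHS should be matched by a $(\tfrac43 - 2\alpha)(\Gamma/N)^2\beta$-type term... actually more carefully, one needs to track which substitution kills the cubic. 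Working through this, I expect to land on a quadratic $a\beta^2 + b\beta + c \ge 0$ whose coefficients involve $\Gamma/N$ exactly as in the stated formula (replacing the $1/N$ of Theorem \ref{thm:TypeIerror} by $\Gamma/N$ in the natural places, and the $\alpha(1-\tfrac1N)(1-\tfrac2N)$ term by $\alpha(1 - \tfrac{2\Gamma+1}{N} + \tfrac{2\Gamma^2}{N^2})$), and solving by the quadratic formula gives the claimed bound.

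**The delicate point.** The main obstacle is verifying that the naive term-by-term worst-casing is actually achievable — i.e., that there is a genuinely feasible $\pi$ (respecting both $\sum \pi_i = 1$ and the box) and a $z \in \{0,1\}^N$ attaining (or approaching) the bound, so that the relaxation is tight enough to match the stated closed form rather than being loose. This is where the hypotheses $\Gamma \le N/4$ and $\alpha \le 1/3$ should be used: $\Gamma \le N/4$ guarantees enough room to actually place the masses (so that, e.g., a near-extremal configuration with many $\pi_i = \Gamma/N$ and the rest at the floor is feasible, and the selected set can have mass up to the threshold), while $\alpha \le 1/3$ keeps the coefficient $\tfrac43 - 2\alpha$ positive and the discriminant in the quadratic formula nonnegative so the square root is real. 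I would also need to check the inequality directions are consistent: since we are upper-bounding a \emph{maximum} of $L_1(z)$, every relaxation of the constraint set is valid for an upper bound, so the substitutions $L_2 \le (\Gamma/N)L_1$, etc., when they make the constraint easier to satisfy, are legitimate — but one must be careful that on terms where a smaller $L_2$ or $L_3$ makes the constraint \emph{harder}, the opposite bound must be used, and confirm both can hold simultaneously under the box (they can, since $L_2, L_3$ are not independent of $L_1$ only through these one-sided bounds). After assembling the quadratic and applying the quadratic formula, the remaining work is routine algebraic matching to the displayed expression, and noting $f$-style monotonicity to conclude.
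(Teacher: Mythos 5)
Your overall strategy (add the box constraint to the integer program of Theorem \ref{thm:TypeIerror_weighted_LTO}, eliminate $L_2,L_3$ in favour of $\beta=L_1$, and solve a quadratic in $\beta$) is indeed the paper's strategy, but the execution breaks down at exactly the step you flag as delicate. Term-by-term worst-casing cannot produce the stated constant $-\tfrac{4\Gamma^2}{3N^2}$: since the coefficient $\tfrac{4}{3}-2\alpha$ of $L_3$ on the right-hand side is positive, the only valid direction there is the lower bound $L_3\ge \beta/(\Gamma N)^2$, which yields a quadratic with $-\tfrac{4}{3\Gamma^2N^2}$ and $\tfrac{2\alpha}{\Gamma^2N^2}$ in place of $-\tfrac{4\Gamma^2}{3N^2}$ and $\tfrac{2\alpha\Gamma^2}{N^2}$ --- a strictly weaker bound (the smaller root is increasing in the constant term), so it does not prove the theorem; whereas your alternative guess $L_3\approx(\Gamma/N)^2\beta$ substitutes an upper bound for $L_3$ into the right side of a ``$\ge$'' constraint, which is not a valid relaxation. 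The missing idea is to move $\beta L_2$ to the right-hand side and group the selected-unit terms as $\sum_{i\in\mathcal{S}}\pi_i\bigl[(\tfrac{4}{3}-2\alpha)\pi_i^2-(1+\beta-2\alpha)\pi_i\bigr]$, then show that $x\mapsto(\tfrac{4}{3}-2\alpha)x^2-(1+\beta-2\alpha)x$ is decreasing on $[0,1/4]$; this is precisely where $\alpha\le 1/3$ and $\Gamma\le N/4$ are used (not, as you suggest, to guarantee feasibility of an extremal $\pi$ or a real discriminant), and it is what legitimizes evaluating the whole bracket at $\pi_i=\Gamma/N$ and hence produces the stated formula. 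Relatedly, your bound $\sum_l\pi_l^2\le\Gamma/N$ goes the wrong way for the term $\alpha\beta(1-\sum_l\pi_l^2)$; the correct step is Cauchy--Schwarz, $\sum_l\pi_l^2\ge 1/N$, which is what generates the $\tfrac{1}{N}$ inside $\alpha\bigl(1-\tfrac{2\Gamma+1}{N}+\tfrac{2\Gamma^2}{N^2}\bigr)$.

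A second gap: even with the correct quadratic inequality $\tfrac{1}{3}\beta^2-\bigl(1-\tfrac{\Gamma}{N}\bigr)\beta+c\ge 0$ in hand, ``solving by the quadratic formula'' does not finish the argument, because an upward-opening parabola also admits solutions above the larger root. The paper rules this out by checking that the quadratic is negative at $\beta=1$ (again using $\alpha\le 1/3$ and $\Gamma/N\le 1/4$), so $\beta\le 1$ forces $\beta$ below the smaller root; your proposal never addresses this. Finally, your concern about whether the relaxation is achievable or ``tight enough'' is beside the point --- achievability is irrelevant for an upper bound --- and no cancellation of the $\beta^3$ term occurs: the cubic inequality becomes quadratic simply by dividing through by $\beta$.
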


When $\Gamma = 1$, Theorem \ref{thm:weighted_alphaless1/n_typeIerror} recovers Theorem \ref{thm:TypeIerror}. When $N$ is large, the bound has the same limit $\frac{3 - \sqrt{9-12\alpha}}{2}$ as obtained for the $\Gamma = 1$ setting. Unlike Theorem \ref{thm:TypeIerror}, we cannot sharpen it by taking advantage of the discreteness of the Type-I error because $\pi_i$ can take a range of values.

\paragraph{Computational Details.} The optimization problem in step 2) may pose challenges. To optimize this problem, we use the following trick. Fixing a $\Gamma$, let $B_\Gamma$ be the set of $\pi$ satisfying the constraints in Equation \eqref{eq:optimization_sa}. We only need to check if $\max_{\pi \in B_\Gamma} p_{\wLTO}(\pi)$ is greater than $\alpha$. This is equivalent to checking that
\[
\max_{\pi \in B_\Gamma} \sum_{\substack{j,k \neq I \\ j \neq k}} \pi_j \pi_k\mathbb{I}\set{I \not\succ j,k} - \alpha\left((1-\pi_I)^2 - \sum_{l\neq I} \pi_l^2\right) > 0
\]
The optimization problem can be written as a simple quadratic program. Namely, we must check whether
\[
\max_{\pi \in B_\Gamma} \pi^\top A\pi + 2\alpha \pi^\top e_I > \alpha
\]
where $A := (G + \alpha I_N -2\alpha e_{I}e_I^\top)$, with $G \in \bR^{N \times N}$ is a zero-one matrix with entries $G_{jk} = \mathbb{I}\set{I \not\succ j,k},$ whenever $j$ and $k$ do not equal $I$ and zero otherwise. Here $e_I$ is the standard basis vector for index $I.$ If the goal is to calculate $\max_{\pi \in B_\Gamma} p_{\wLTO}(\pi)$, we can do a binary search over $c \in [0,1]$ that satisfy $\max_{\pi \in B_\Gamma} p_{\wLTO}(\pi) > c$, and apply the same trick as above. \\

The objective is not necessarily concave as $A$ is not guaranteed to be positive semidefinite. As it turns out, the resulting optimization problem is a fundamental NP-hard problem known as \textit{quadratic nonconvex programming} with box constraints. In practice, non-convex quadratic programming problems are solved using heuristics such as branch-and-bound techniques. As the dimensionality of the problems that we are solving is rather small, we expect off-the-shelf solvers should do well on this problem.  Moreover, whenever the computed LTO $p$-value is small, the matrix $G + I_N - 2e_{I}e_I^\top$ is usually rather sparse, which would help with computation.\\

\section{Simulation Results}
\label{sec:simulations}

Thus far we have explored theoretical properties of the LTO placebo tests. In this section, we compare empirical properties of the LTO placebo tests on semisynthetic examples based on the California Proposition 99 on tobacco consumption \cite{abadie2010synthetic}, the Basque Country terrorism on local economic prosperity \cite{abadie2003economic}, and the German Reunification on West Germany's GDP. \cite{abadie2015comparative}. All simulation results rely on the R package \textsf{Synth} \cite{abadie2011synth}, and may be found at \url{https://github.com/tsudijon/LeaveTwoOutSCI}. 

The structure of the semisynthetic simulations is the same for each dataset. Firstly, we subsample $N$ units (smaller than the size of the original data) from each dataset, excluding the original treated unit. As described briefly in the introduction, on each subsample of the data, a unit is chosen uniformly at random to be treated. For each treated unit $I$, we posit a time-invariant treatment effect $\tau$, which we add on to the observed outcomes $Y_{I,t}$ following $T_0$. A range of values of $\tau$ are used, corresponding approximately to $(0,-1,-2,-3)$ times the standard deviation of the outcome variable measured at $T_0.$  For each subsample, several methods are compared in terms of power: the exact placebo, approximate placebo, randomized placebo, the naive LTO, and powered LTO.

The setting for $\tau=0$ amounts to comparing the Type-I errors for all methods. We compare the methods in two regimes: one in which $\alpha < 1/N$ and one in which $\alpha \geq 1/N$. In the $\alpha < 1/N$ regime case we omit the results for the exact placebo, which will be powerless in this setting. Each method uses the widely-adopted RMSPE statistic defined in \eqref{eq:RMSPE}, and the synthetic controls are constructed in ways which follow the original analyses of each dataset. Finally, we average the powers and Type-I errors for all methods over 20 Monte Carlo subsamples of the original dataset, and report the resulting mean. This is an estimate of the unconditional Type-I error and power of each method.

Broadly, we observe the following empirical results. The LTO procedures demonstrate higher power in large signal to noise ratios, compared against the three placebo $p$-values. In intermediate signal to noise ratios, the naive LTO procedure often has lower or comparable power than the placebo procedures, but the powered LTO fares better. In the $\alpha < 1/N$ setting, we observe that the unconditional Type-I errors of the LTO methods are below $\alpha$ in all cases, whereas that of the approximate placebo $p$-value is always $1/N$. This suggests that the upper bound on the conditional Type-I error established in Theorem \ref{thm:TypeIerror} is not tight, and in practice, the LTO procedures often satisfy Type-I error constraints. The randomized placebo test does control the Type-I error, but its power is quite poor relative to the other procedures.

\subsection{Semisynthetic Power Simulations}
\label{sec:powersimulations}

\paragraph{California Proposition 99.} Before showing semisynthetic results for other datasets, we give further details for the experiments done in Section \ref{sec:intro_example}. In Section \ref{sec:intro}, only the naive LTO procedure was compared with the placebo. A full comparison with both versions of the LTO procedure can be found in Figure \ref{fig:smoking_full_powercomparison}. When constructing the synthetic controls, we matched on several covariates: average retail price of cigarettes, log per capita income, the proportion of the population age 15–24, and per capita beer consumption. We use the mean of these variables over the 1980–1988 period, along with lagged smoking consumption in 1975, 1980, and 1988. This replicates the specification of \cite{abadie2010synthetic}.

\begin{figure}[h]
  \begin{subfigure}{0.5\textwidth}
    \centering
    \includegraphics[width = \linewidth]{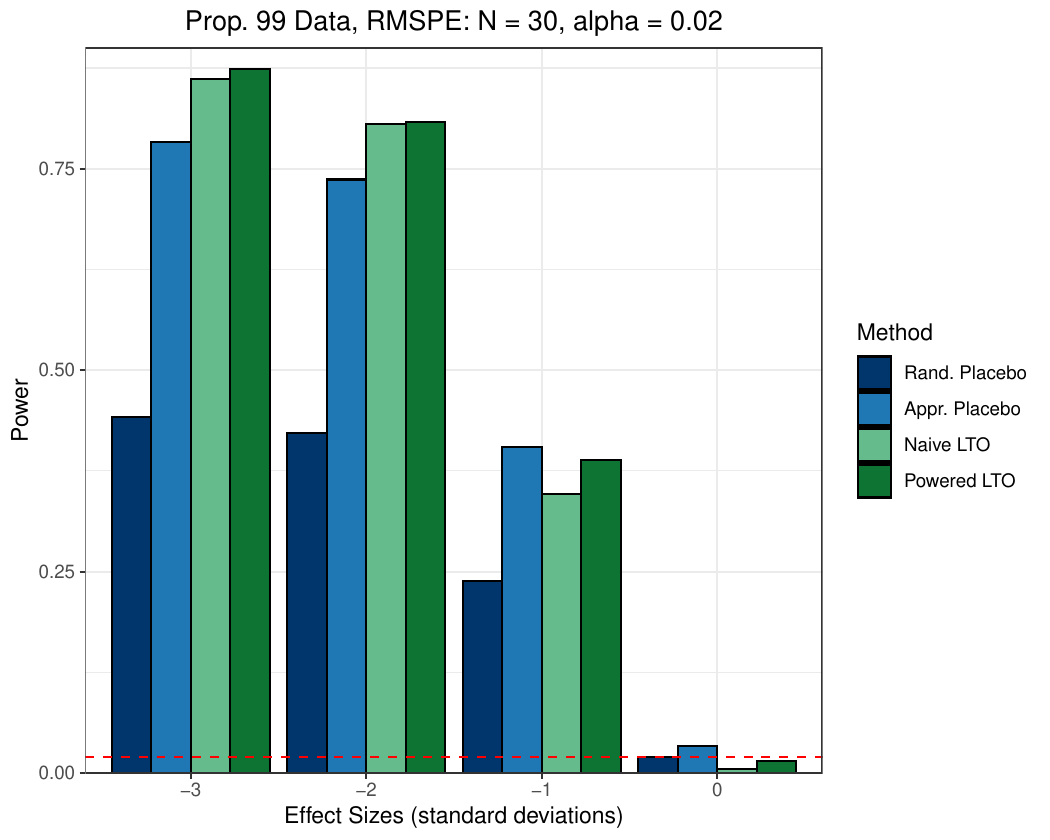}
  \end{subfigure}%
  \begin{subfigure}{0.5\textwidth}
    \centering
    \includegraphics[width=\linewidth]{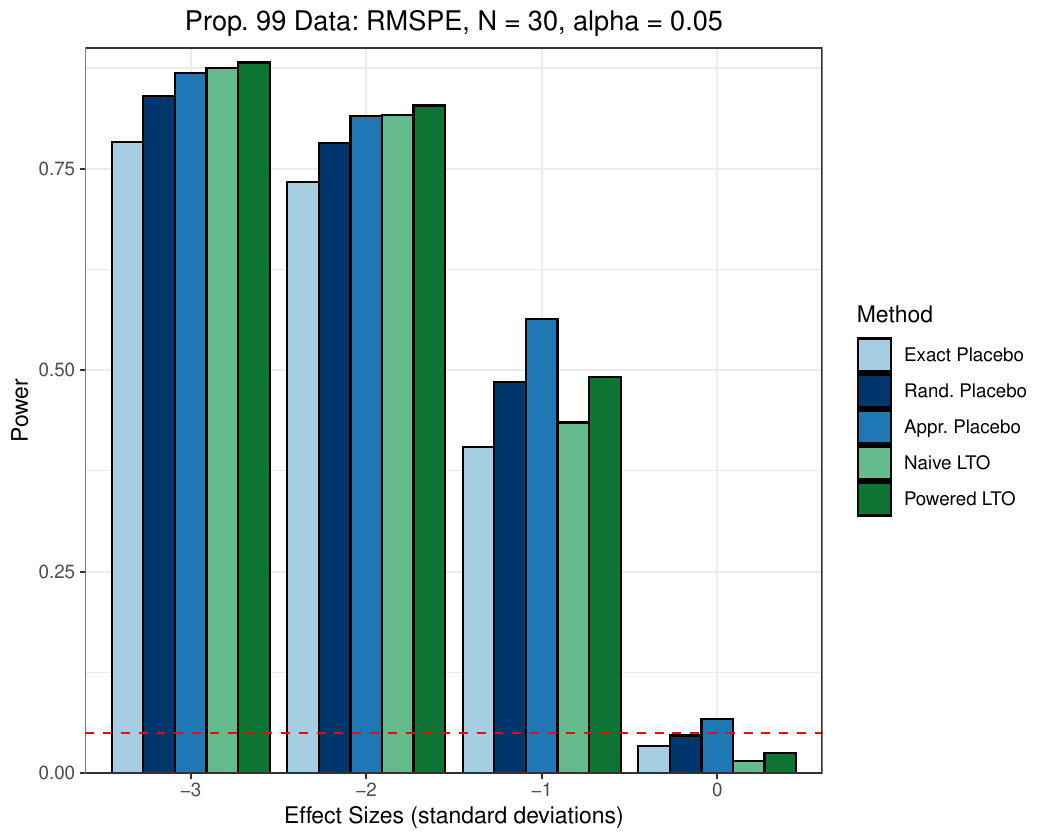}
  \end{subfigure}
  \caption{The same simulation as Figure \ref{fig:smoking_intro_powercomparison} is carried out. Both versions of placebo and LTO procedures are compared in this experiment. Red dashed line indicates the level $\alpha$.\label{fig:smoking_full_powercomparison}}
\end{figure}

 On the left, the powered LTO procedure is unconditionally valid in this experiment, while providing a power boost over the naive procedure. This is enough to make the power comparable to that of the approximate placebo, for intermediate signal to noise ratios. The power gain for larger effect sizes is smaller. The same observations can be seen on the right. While the powered LTO procedure increases Type-I error, its unconditional Type-I error is still less than that of the exact placebo. It is thus unconditionally valid in this experiment. It provides a significant power boost, however. Both LTO methods have larger power than the exact placebo for even modest effect sizes. In this setting, the approximate placebo violates the Type-I error constraint.

\paragraph{Basque Country Terrorism.} Next, we consider a similar setup using a dataset of Spanish regional GDP studied in \cite{abadie2003economic}. In this setting, we take $\alpha = 0.05, N = 15$ as an example of inference in the $\alpha < 1/N$ setting. When constructing all synthetic controls, we imitate the choice of covariates found in the running Basque data example of \cite{abadie2011synth}. The results are found in Figure \ref{fig:basque_rmspe_results}.

\begin{figure}[h]
  \begin{subfigure}{0.5\textwidth}
    \centering
    \includegraphics[width=\linewidth]{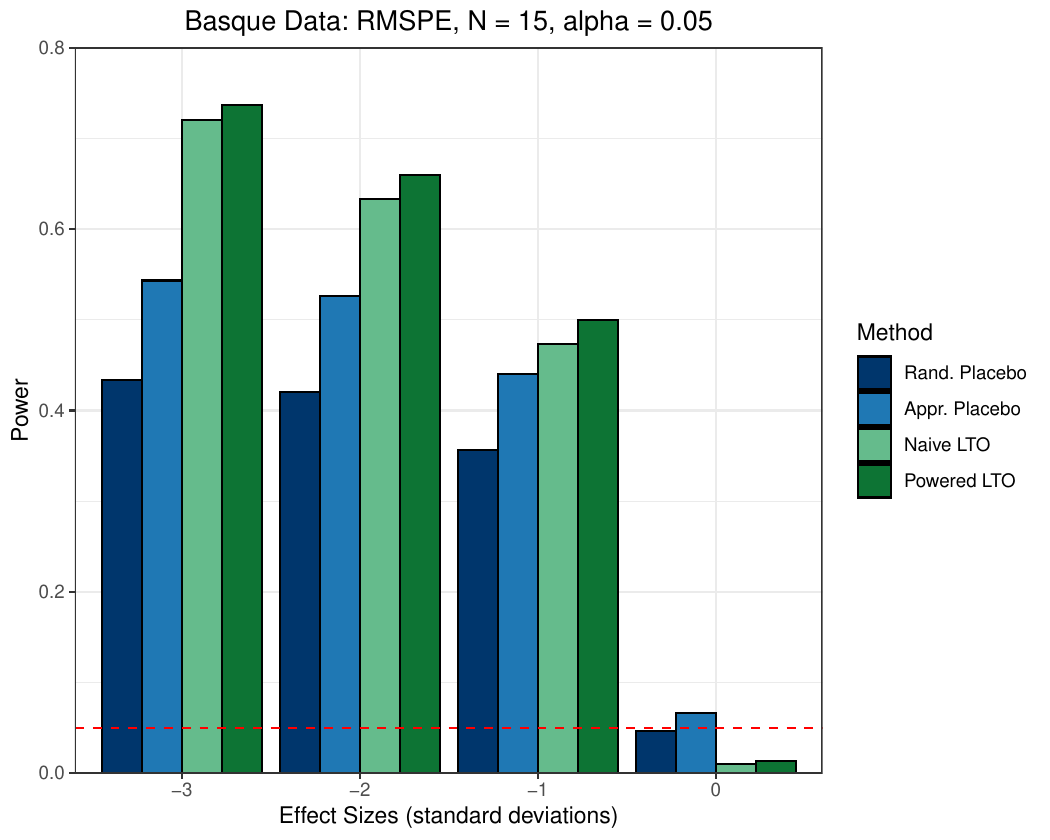}
    \label{fig:basquedata_alpha0.05N15}
  \end{subfigure}%
  \begin{subfigure}{0.5\textwidth}
    \centering
    \includegraphics[width = \linewidth]
    {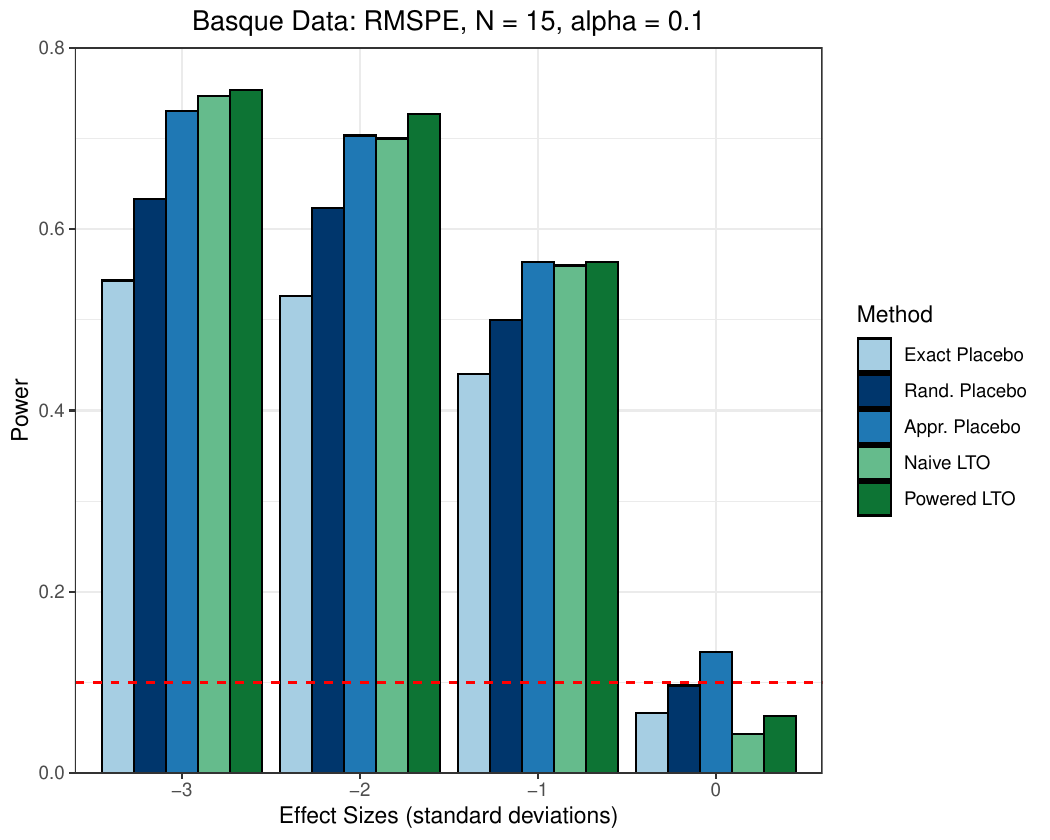}
    \label{fig:basque_valid_rmspe}
  \end{subfigure}
  \caption{Power of several different $p$-values versus effect size, on size 15 subsamples of the Basque country terrorism dataset. The other details are same as Figure \ref{fig:smoking_intro_powercomparison}. (Left) $\alpha = 0.05$, which falls in the $\alpha < 1/N$ regime. The exact placebo is omitted because the power is zero in all cases. (Right) $\alpha = 0.1$, which falls in the $\alpha > 1/N$ regime.}
  \label{fig:basque_rmspe_results}
\end{figure}

Remarkably, the conclusions identified for the Proposition 99 data carry over to this dataset. As before, the LTO procedures demonstrate much lower unconditional Type-I error than the upper bound advertises, while at the same time, they exhibit higher power than the inexact placebo test in higher signal to noise ratios. It is also interesting in this setting that the LTO procedures do well in intermediate signal to noise ratios. 

In the right side of Figure \ref{fig:basque_rmspe_results}, we run the same experiment in when $\alpha = 0.1$; the same observations are broadly true. In this setting, the unconditional Type-I error of the LTO procedures are similar to that of the exact placebo (implying unconditional validity), while the power is much higher in for even moderately large effect sizes. The power of the LTO procedures even matches that of the approximate placebo test, which unconditional Type-I error control here.

\paragraph{German Reunification.} 

In \cite{abadie2015comparative}, the synthetic control method is applied to understand the effects of German reunification on the GDP of West Germany, using 16 other OECD countries as potential controls. The simulation takes multiple size $N = 14$ subsamples of the original data, without West Germany, and compares the power of the placebo and LTO method versus effect size. 

Figure \ref{fig:german_reunification_powercomparison} shows the results for two choices of $\alpha$. Both settings were constructed with the RMSPE statistic. On the left, $\alpha = 0.05$.  In this comparison, the power gain of the LTO procedure is evident at higher signal to noise ratios. In this regime, the unconditional Type-I error of the LTO placebo test is much closer to that of the placebo test, and unfortunately, all procedures with non-zero power must violate conditional validity, per Proposition \ref{prop:no_power}. However, Theorem \ref{thm:TypeIerror} guarantees that the Type-I error will never be worse than that of the placebo test.

On the right, the Type-I errors of the LTO methods are essentially the same as that of the exact placebo, indicating that both methods are unconditionally valid in this setting. It is interesting to notice the power gain of both LTO methods. At intermediate signal to noise ratios, the naive LTO has a similar power to the exact placebo, but for larger signals, the power advantage is considerable. The powered LTO method is considerable more powerful in the intermediate effect size case, and is far more powerful than the exact placebo.

\begin{figure}[h]
  \begin{subfigure}{0.5\textwidth}
    \centering
    \includegraphics[width=\linewidth]{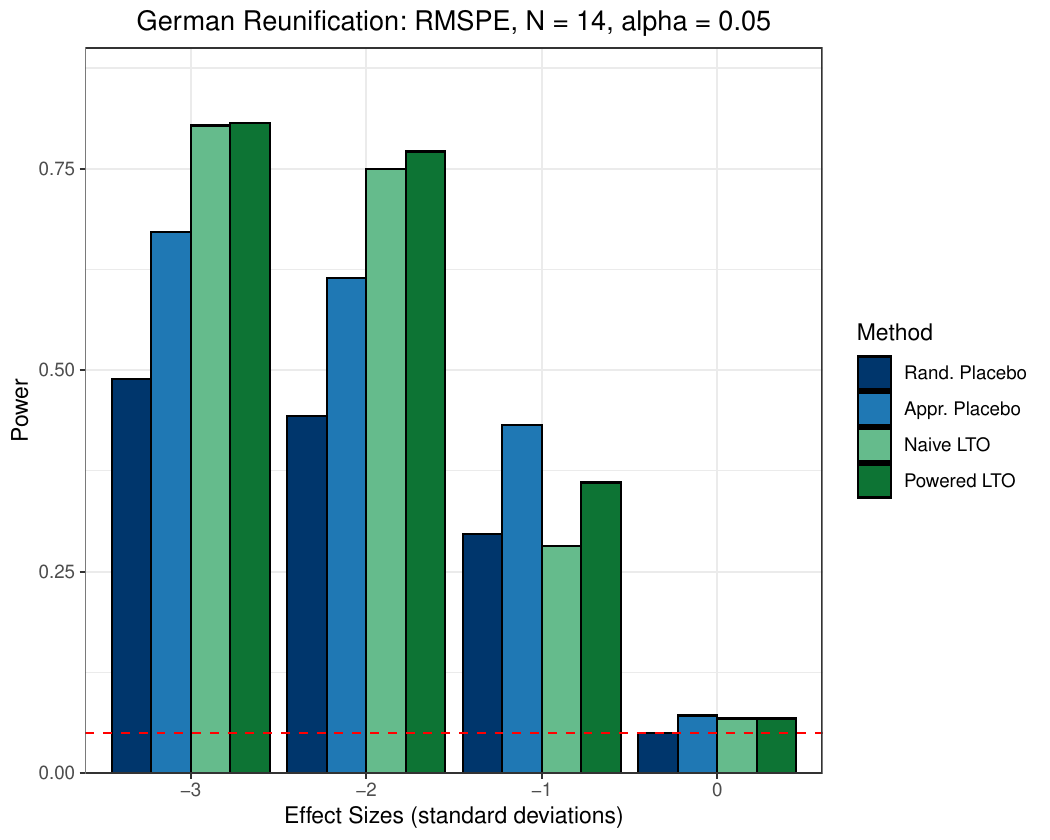}
  \end{subfigure}%
  \begin{subfigure}{0.5\textwidth}
    \centering
    \includegraphics[width = \linewidth]
{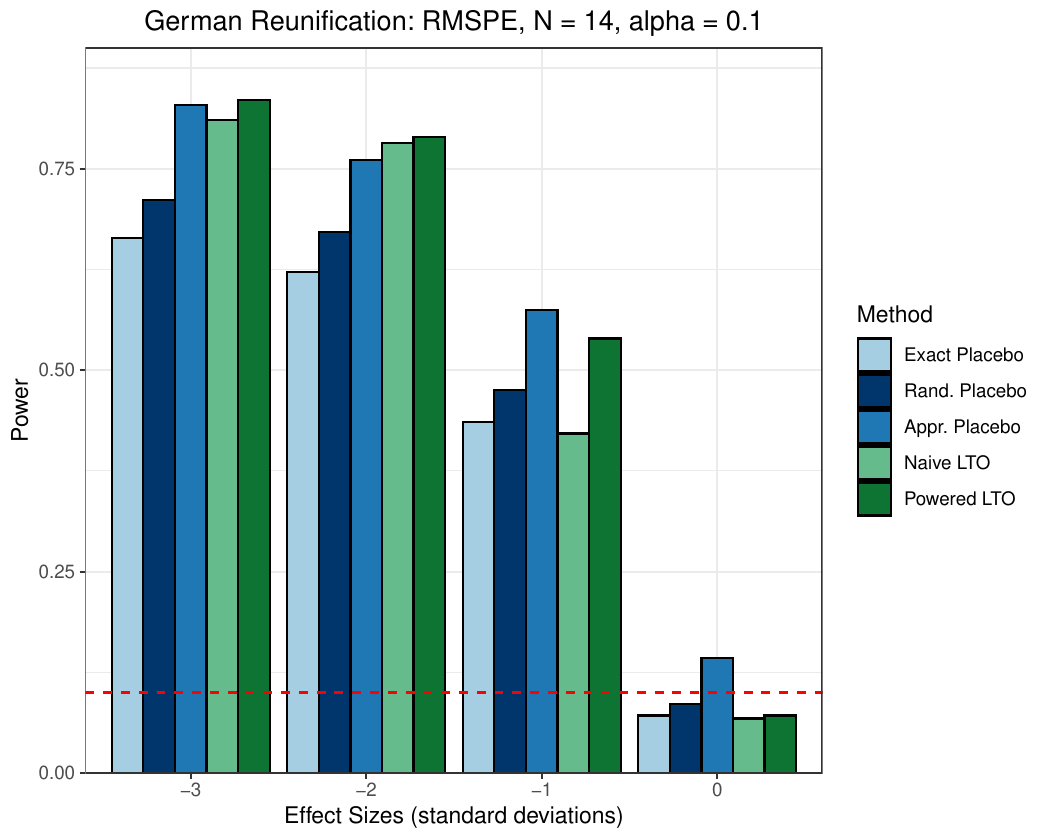}
  \end{subfigure}
  \caption{Power of placebo test and LTO procedure versus effect size, on size 14 subsamples of the German Reunification dataset. The other details are same as Figure \ref{fig:smoking_intro_powercomparison}. (Left) $\alpha = 0.05$, which falls in the $\alpha < 1/N$ regime. The exact placebo is omitted because the power is zero in all cases. (Right) $\alpha = 0.1$, which falls in the $\alpha > 1/N$ regime.}
  \label{fig:german_reunification_powercomparison}
\end{figure}

\subsection{LTO Sensitivity Analyses}

It is worth revisiting the classic case studies of the synthetic control method in \cite{abadie2003economic, abadie2010synthetic, abadie2015comparative} using the LTO and placebo inference procedures. For each use of the LTO $p$-value, we analyze the sensitivity of the conclusion to the equal weights assumption. For all sensitivity analyses, we use Gurobi 10.0 \cite{gurobi} to solve the nonconvex quadratic optimization problem for optimizing $p_{\wLTO}(\pi)$. A similar sensitivity analysis for the exact and approximate placebo test may be done using the results of \cite{firpo2018synthetic}, which we do not replicate here.

\begin{table}[htbp]
  \centering
  \caption{The table shows a summary of LTO and placebo tests and standard placebo tests applied to the case studies in \cite{abadie2003economic, abadie2010synthetic, abadie2015comparative}. The row entitled $\Gamma_{\LTO}$ shows the $\Gamma$ at which the sensitivity analysis for the naive LTO $p$-value overturns the significance. The value for Basque Country is blank, as the $p$-value was not significant at level $\alpha = 0.05$.}
    \begin{tabular}{c|c|c|c|}
    \toprule
    \multicolumn{1}{c}{} & \multicolumn{1}{|c|}{Proposition 99} & \multicolumn{1}{|c|}{Basque Country}
    & \multicolumn{1}{|c|}{German Reunification} \\
    \midrule
    $N$ & $39$ & $17$ & $17$\\
    $p_{\pb}$ & $0$ & $0.35$ & $0$\\
    $p_{\epb}$ & $0.026$ & $0.41$ & $0.059$\\
    $p_{\nLTO}$ & $0.024$ & $0.67$ & $0.042$\\
    $p_{\pLTO}(\alpha)$ & $0.022$ & $0.66$ & $0.03$\\
    $\Gamma_{\LTO}$ & $1.4$ &  $ \text{NA} $ & $1.1$ \\
    \bottomrule
    \end{tabular}%
  \label{tab:sa_results_all}%
\end{table}

In Section \ref{sec:intro}, we already saw the application of the LTO placebo test and standard placebo tests for the Proposition 99 example of \cite{abadie2010synthetic}, along with the sensitivity analysis for the LTO placebo test. In this section, we replicate the original analyses of \cite{abadie2015comparative, abadie2003economic}, using the exact and approximate placebo $p$-value, the naive LTO placebo test, and the powered LTO placebo test. Focusing on the German Reunification dataset \cite{abadie2015comparative}, the exact and approximate placebo $p$-value are both too large to reject the null hypothesis. By contrast, both LTO p-values are below $\alpha$. The correction used by the powered LTO $p$-value is decent. Since $p_{\nLTO}\le \alpha$, we can perform sensitivity analysis to calculate the smallest $\Gamma$ that overturns the conclusion. The output is shown in Figure \ref{fig:sensitivity_analyses}. From the figure, $\Gamma$ is around $1.1$ before the maximum possible $p$-value is greater than $0.05$. Overall, the sensitivity curve does not increase too steeply as a function of $\Gamma$. 

For the Basque country terrorism dataset of \cite{abadie2003economic}, we repeat the exercise to analyze the effect of terrorism on GDP of the Basque country. Here, $N = 17$ with the onset of the treatment being $1970.$ We conduct inference with $\alpha = 0.05,$ which is less than $1/N.$ Surprisingly in this example, we obtained a LTO $p$-value of $0.67$; the placebo $p$-value gave a similar non-significant result at $0.41.$ We expected much smaller $p$-values. One possible explanation is that in prior analyses of the Basque country terrorism dataset \cite{abadie2011synth}, several regions were removed if they had poor fit for the treatment period. We do not incorporate these adjustments in the LTO inference procedure. Because the LTO $p$-value is non-significant, in practice a sensitivity analysis procedure would not be run. It is still interesting to see how the curve of maximum weighted $p$-values grows with $\Gamma$. Figure \ref{fig:sensitivity_analyses} shows the output for this example and the previous one.

\begin{figure}[htbp]
\centering 
  \begin{subfigure}{0.42\textwidth}
    \centering
    \includegraphics[width=\linewidth]{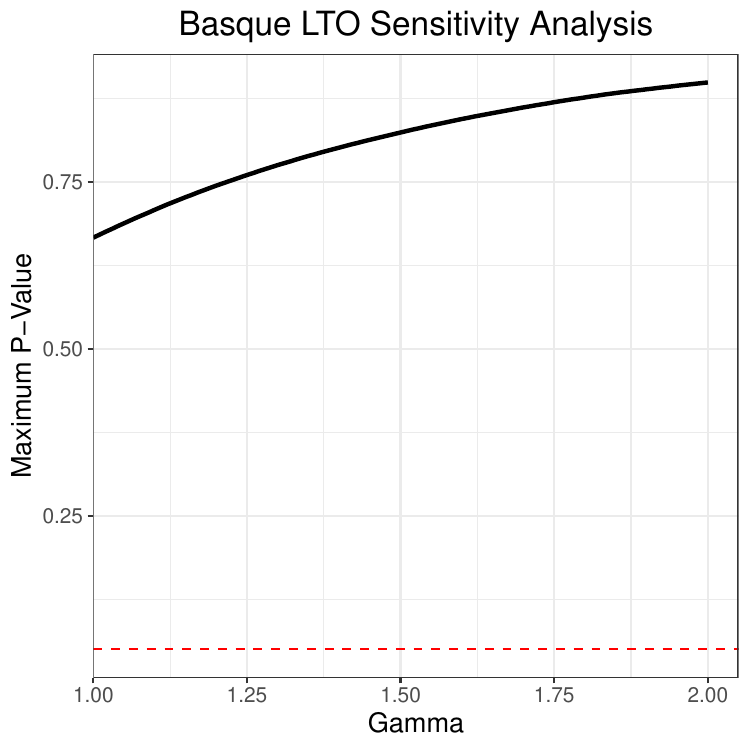}
    \label{fig:basque_sa}
  \end{subfigure}%
  \hspace{0.1\textwidth}
  \begin{subfigure}{0.42\textwidth}
    \centering
    \includegraphics[width = \linewidth]
    {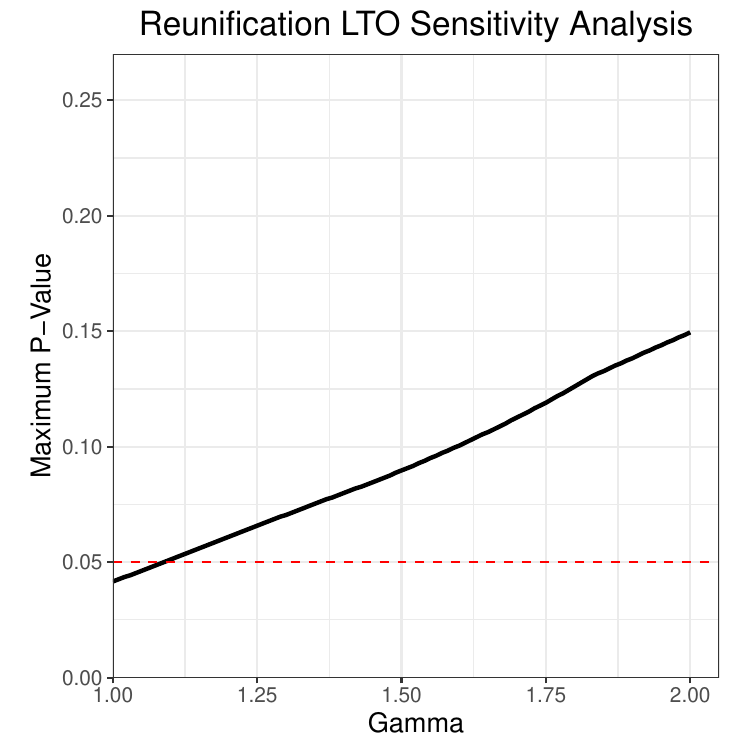}
    \label{fig:german_reunification_sa}
  \end{subfigure}
  \caption{Curve of maximum weighted LTO $p$-value as a function of $\Gamma$, in the range $[1,2]$. The red dashed line signifies the level $\alpha = 0.05$. (Left) The Basque country terrorism dataset. As the unweighted LTO $p$-value is not significant, in practice this curve would not be computed. (Right) The German Reunification dataset of \cite{abadie2015comparative}.}
  \label{fig:sensitivity_analyses}
\end{figure}

\section{Extensions}

We discuss several extensions of the LTO placebo test. The first defines the notion of strange point in a different way, while the second leaves $r$ points out at a time.

\subsection{Rank-sum LTO placebo test}
\label{sec:ranksum}

Instead of labeling a unit as strange if it wins too many matches, we can consider a unit to be strange if its rank across all matches is too large. This leads to a variant of the procedure which has similar coverage guarantees as the original Jackknife+ procedure; see Theorem 1 of \cite{barber2021predictive}. Define $\rank\set{I; I,i,j} :=  \mathbb{I}\set{I \succ j} + \mathbb{I}\set{I \succ i},$ where in this context, $\set{I \succ j} : = \set{R_{I,i,j;I} > R_{I,i,j;j}}$ and similarly for $\set{I \succ i}$. We can define a $p$-value 
\[
p_{\rLTO} := \frac{1}{(N-1)(N-2)}\sum_{\substack{i,j \in [N]\setminus I\\ i \neq j}} (2 - \rank\set{I; I,i,j}),
\]

Using a similar argument as in Theorem \ref{thm:TypeIerror}, we derive an upper bound on the Type-I error of $p_{\rLTO}$. The proof is presented in Appendix \ref{sec:proof_ranksum}. 
\begin{theorem}[Rank-sum LTO placebo test]
\label{thm:ranksum_guarantee}
The Type-I error of $p_{\rLTO}$ has the following bound
\[
\Prob(p_{\rLTO} \leq \alpha) \leq  \frac{\floor{(N- 
1)\alpha} + 1}{N}.
\]

\end{theorem}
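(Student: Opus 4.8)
The plan is to mimic the tournament argument behind Theorem~\ref{thm:TypeIerror}, but replacing the ``number of wins'' functional by the ``rank-sum'' functional, which behaves much more like the original Jackknife+ quantity and therefore should give a cleaner bound with no quadratic detour. First I would set up the analogous notion of a strange unit: call $k$ \emph{strange} if $\sum_{i,j\in[N]\setminus k,\ i\neq j}\mathrm{rank}\{k;k,i,j\}\le \alpha(N-1)(N-2)$, i.e.\ $k$ wins few matches on aggregate in the rank-sum sense (note the inequality direction flips relative to the ``wins-too-many'' definition because $p_{\rLTO}$ counts the complementary quantity $2-\mathrm{rank}$). Under uniform assignment, $\Prob_{H_0}(p_{\rLTO}\le\alpha)=\Prob(I\text{ is strange})=s/N$ where $s$ is the (deterministic, given $\mathcal{Y}$) number of strange units, so it suffices to show $s\le \lfloor(N-1)\alpha\rfloor+1$.

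The key combinatorial step is a double-counting identity on the set $\mathcal{S}$ of strange units. Summing the strangeness inequality over $k\in\mathcal{S}$ gives $\sum_{k\in\mathcal{S}}\sum_{i\neq j,\ i,j\neq k}\mathrm{rank}\{k;k,i,j\}\le \alpha s(N-1)(N-2)$. Now I would lower-bound the left-hand side by restricting the inner sum to pairs $i,j$ that both lie in $\mathcal{S}$. The crucial observation — the analogue of ``at most one winner per triple'' — is that for any three distinct units $a,b,c$, the three rank contributions satisfy $\mathrm{rank}\{a;a,b,c\}+\mathrm{rank}\{b;a,b,c\}+\mathrm{rank}\{c;a,b,c\}=3$ (each of the three pairwise comparisons within the triple is counted exactly once from each side, for a total of $3$ when there are no ties, and this is an equality rather than an inequality, which is what makes the rank-sum version sharper). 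Summing this identity over all unordered triples inside $\mathcal{S}$ and converting to the ordered-pair normalization used in the definition, the total rank mass carried by $\mathcal{S}$ against opponents in $\mathcal{S}$ is exactly $s(s-1)(s-2)$ (up to the bookkeeping constant), hence $s(s-1)(s-2)\le \alpha s(N-1)(N-2)$. Cancelling $s$ (the case $s=0$ is trivial) yields $(s-1)(s-2)\le \alpha(N-1)(N-2)$, from which $s-1\le (N-1)\alpha$ should follow by a short monotonicity argument (comparing the two products termwise, since $s\le N$), giving $s\le (N-1)\alpha+1$ and therefore $s\le\lfloor(N-1)\alpha\rfloor+1$ because $s$ is an integer.

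The main obstacle I anticipate is getting the constants and the normalization exactly right: the rank functional ranges in $\{0,1,2\}$ rather than $\{0,1\}$, the sum in the definition of $p_{\rLTO}$ is over ordered pairs $(i,j)$ while triples are naturally unordered, and one must be careful that the ``restrict to $\mathcal{S}\times\mathcal{S}$'' step discards only nonnegative terms (true, since ranks are nonnegative) so that the inequality survives. A secondary subtlety is the treatment of ties in the residuals: the per-triple identity becomes ``$\le 3$'' rather than ``$=3$'' when ties occur, but since we only need an upper bound on the discarded mass — wait, we need a \emph{lower} bound on the retained mass — I would handle ties by the same convention as in the Theorem~\ref{thm:TypeIerror} proof (breaking them adversarially, or noting the bound is monotone in the right direction), which should preserve $(s-1)(s-2)\le\alpha(N-1)(N-2)$. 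Everything else is routine algebra of the sort already carried out for Theorem~\ref{thm:TypeIerror}.
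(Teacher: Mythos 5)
There is a genuine gap, and it starts at the very first step: your translation of the rejection event into a strangeness condition is backwards. Since $p_{\rLTO} = \frac{1}{(N-1)(N-2)}\sum_{i\neq j}\bigl(2-\rank\set{I;I,i,j}\bigr)$, the event $\set{p_{\rLTO}\le\alpha}$ is equivalent to $\sum_{i\neq j}\rank\set{I;I,i,j}\ \ge\ (2-\alpha)(N-1)(N-2)$, i.e.\ the treated unit wins \emph{many} pairwise comparisons; it is not equivalent to $\sum_{i\neq j}\rank\set{I;I,i,j}\le\alpha(N-1)(N-2)$. With your definition of $\cl{S}$ the identity $\Prob(p_{\rLTO}\le\alpha)=s/N$ fails (for small $\alpha$ a unit with a small rank-sum has a \emph{large} $p$-value), so the double count is aimed at the wrong set. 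The paper defines strangeness exactly as "rank-sum at least $(2-\alpha)(N-1)(N-2)$," and then it must \emph{upper}-bound $\sum_{k\in\cl{S}}\sum_{i,j}\rank\set{k;k,i,j}$; your plan of restricting the inner sum to opponents inside $\cl{S}$ only gives a lower bound and discards precisely the terms that matter. The paper instead splits the triples by how many opponents are strange, bounds the all-strange and mixed configurations by $3$ per triple (your "ranks sum to $3$" observation, used as an upper bound), and bounds the remaining configurations trivially by $2$; it is these cross terms that make the right-hand side linear in $s$ and deliver the linear-in-$\alpha$ conclusion $s\le(N-1)\alpha+1$.

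Second, even granting your setup, the endgame is arithmetically invalid: $(s-1)(s-2)\le\alpha(N-1)(N-2)$ does \emph{not} imply $s-1\le(N-1)\alpha$. The left side is quadratic in $s$, while the right side pairs a small factor $\alpha(N-1)$ with a large factor $N-2$, so no termwise comparison is available. For instance, with $N=101$ and $\alpha=0.05$ the right side is $495$, which permits $s=23$ (since $22\cdot 21=462$), whereas the theorem asserts $s\le\floor{(N-1)\alpha}+1=6$. The best your restricted count could ever give is $s\lesssim 1+\sqrt{\alpha}\,N$, strictly weaker than the claimed bound for small $\alpha$. Both defects—the mistranslated event and the missing mixed-triple bookkeeping—are repaired simultaneously by the paper's three-case decomposition.
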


The bound is equal to or $1/N$ larger than that of $p_{\nLTO}$ and $p_{\pLTO}(\alpha)$. In particular, when $\alpha < 1/(N-1)$, all three choices have Type-I error controlled at $1/N$, as shown in Theorem \ref{cor:TypeIguarantee_alphaless1/n}. However, it is easy to see that $2 - \rank\set{I; I,i,j}\ge \mathbb{I}\{I\not\succ i,j\}$ and thus $p_{\wLTO}\le p_{\rLTO}$ almost surely. Thus, $p_{\rLTO}$ is less favorable because it always gives more conservative inference.

\subsection{Leave-$r$-out placebo test}
\label{sec:LRO_jk}
We may consider an analogous procedure for the placebo test where $r$-tuples of data points are left out. The analysis of the method in this setting clarifies the essential features of the LTO case. 

Extending the LTO procedure (Remark \ref{rem:tournament}), we consider a tournament between all $N$ units where matches are held for every $(r+1)$ units. For each match among $\{i_0, i_1, \ldots, i_r\}$, we leave out all participants and construct synthetic controls for each of them. The score $R_{\{i_0, i_1, \ldots, i_r\}; i_j}$ of participant $i_j$ is defined via the same form as \eqref{eq:general_score}. For example, we can calculate the RMSPE by using $[N]\setminus\{i_0, i_1, \ldots, i_r\}$ as the donor pool.
We say $i_0$ wins a match among $\{i_0, i_1, \ldots, i_r\}$, denoted by $i_0 \succ i_1, \ldots, i_r$, if $R_{\{i_0, i_1, \ldots, i_r\}; i_0} > \max_{j\in [r]}R_{\{i_0, i_1, \ldots, i_r\}; i_j}$.
Extending the LTO p-value, define
\begin{equation}\label{eq:LRO_pvalue}
p_{\LRO} = \frac{(N-r-1)!}{(N-1)!}\sum_{\substack{i_1, \ldots, i_r\in [N]\setminus I\\ i_1, \ldots, i_r \text{ mutually distinct}}}\mathbb{I}\{I\not\succ i_1, \ldots, i_r\}.
\end{equation}
Anagolously, we can also define the rank-sum LRO p-value:
\begin{equation}\label{eq:LRO_pvalue}
p_{\rLRO} = \frac{(N-r-1)!}{(N-1)!}\sum_{\substack{i_1, \ldots, i_r\in [N]\setminus I\\ i_1, \ldots, i_r \text{ mutually distinct}}}\frac{2}{r}\left(r - \rank\set{I; I, i_1, \ldots, i_r}\right),
\end{equation}
where $\rank\set{I; I, i_1, \ldots, i_r}= \sum_{j=1}^{r}\mathbb{I}\{I\succ i_j\}$ and, in this context, $I\succ i_j$ iff $R_{\{I, i_1, \ldots, i_r\}; I} > \max_{j\in [r]}R_{\{I, i_1, \ldots, i_r\}; i_j}$. Note that they recover $p_{\LTO}$ and $p_{\rLTO}$ when $r = 2$. Above, the normalization factor $2/r$ is chosen to make the Type-I error to converge to $\alpha$ when $N$ tends to infinity, as evidenced by Theorem \ref{thm:ranksum_LRO_guarantee} below. 

The following two results generalize Theorem \ref{thm:TypeIerror} and Theorem \ref{thm:ranksum_guarantee} to $p_{\LRO}$ and $p_{\rLRO}$, respectively. Both proofs are presented in Appendix \ref{sec:LRO_proofs}. Via tedious algebra, we can show that they recover Theorem \ref{thm:TypeIerror} and Theorem \ref{thm:ranksum_guarantee} when $r = 2$.

\begin{theorem}\label{thm:LRO_TypeIerror}
Let 
\[s\mapsto A_{N, r}(s) = \frac{1}{s}\left[\binom{N}{r+1} - \binom{N-s}{r+1}\right].\]
Then $A_{N, r}(\cdot)$ is decreasing on $[N]$ and, for any $\alpha < r/(r+1)$,
\[\Prob_{H_0}(p_{\LRO}\le \alpha)\le \frac{1}{N}\max\left\{k: A_{N, r}(k)\ge (1 - \alpha)\binom{N-1}{r}\right\}.\]
\end{theorem}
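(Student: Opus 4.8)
The plan is to mimic the proof sketch of Theorem~\ref{thm:TypeIerror}, generalizing the tournament argument from triples to $(r+1)$-tuples. As in that proof, call a unit $k$ \emph{strange} if it wins at least $(1-\alpha)\binom{N-1}{r}$ of the $\binom{N-1}{r}$ matches in which it participates, i.e.
\[
\sum_{\substack{i_1,\dots,i_r\in[N]\setminus k\\ \text{mutually distinct (unordered)}}}\bb{I}(k\succ i_1,\dots,i_r)\ge (1-\alpha)\binom{N-1}{r}.
\]
Under uniform assignment and $H_0$, we have $\Prob_{H_0}(p_{\LRO}\le\alpha)=\Prob_{H_0}(I\text{ is strange})=s/N$, where $s=|\cl S|$ and $\cl S$ is the set of strange units; so it suffices to upper bound $s$.

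First I would sum the strangeness inequality over all $k\in\cl S$, giving
\[
\sum_{k\in\cl S}\ \sum_{\substack{\{i_1,\dots,i_r\}\subseteq[N]\setminus k}}\bb{I}(k\succ i_1,\dots,i_r)\ \ge\ s\,(1-\alpha)\binom{N-1}{r}.
\]
The left-hand side counts pairs $(k,M)$ where $k\in\cl S$, $M$ is an $(r+1)$-set containing $k$, and $k$ wins match $M$. I would bound this by partitioning $M$ according to $|M\cap\cl S|=:\ell$ for $\ell=1,\dots,r+1$. For a fixed match $M$ with $|M\cap\cl S|=\ell$, at most one of its $r+1$ participants wins, and if $\ell\ge 1$ the winning participant, when it lies in $\cl S$, is counted; summing $\bb{I}(k\succ M\setminus k)$ over the $\ell$ choices of $k\in M\cap\cl S$ gives at most $1$ (the key ``at most one winner'' fact, allowing ties). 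Hence the total is at most the number of $(r+1)$-subsets of $[N]$ that meet $\cl S$, which is $\binom{N}{r+1}-\binom{N-s}{r+1}$. This yields
\[
\binom{N}{r+1}-\binom{N-s}{r+1}\ \ge\ s\,(1-\alpha)\binom{N-1}{r},
\qquad\text{i.e.}\qquad A_{N,r}(s)\ge (1-\alpha)\binom{N-1}{r}.
\]
The refined bounds (II) and (III) in the $r=2$ sketch exploited stronger per-match facts (e.g. ``at most one of two designated participants wins''); here the crude ``$\le 1$ winner per match'' bound, applied uniformly across all $\ell\ge1$, already gives the clean closed form $A_{N,r}$, and this is the natural generalization. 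Then I would prove $A_{N,r}$ is decreasing on $[N]$: writing $A_{N,r}(s)=\tfrac1s\sum_{j=0}^{s-1}\bigl[\binom{N-j}{r}\bigr]$-type telescoping, or more directly observing that $A_{N,r}(s)$ is the average of the $s$ quantities $a_j:=\binom{N-j}{r+1}-\binom{N-j-1}{r+1}=\binom{N-j-1}{r}$ for $j=0,\dots,s-1$, which are strictly decreasing in $j$; the running average of a decreasing sequence is decreasing. Monotonicity lets us invert: the constraint $A_{N,r}(s)\ge(1-\alpha)\binom{N-1}{r}$ forces $s\le\max\{k:A_{N,r}(k)\ge(1-\alpha)\binom{N-1}{r}\}$, and dividing by $N$ gives the claimed bound. (The condition $\alpha<r/(r+1)$ is what guarantees this maximum is well-defined, i.e.\ that $s=1$ — equivalently $k=1$, giving $A_{N,r}(1)=\binom{N-1}{r}$ — satisfies the constraint, so the set is nonempty; indeed $A_{N,r}(1)=\binom{N-1}{r}\ge(1-\alpha)\binom{N-1}{r}$ always holds, so the real content of $\alpha<r/(r+1)$ is tightness/consistency with the $r=2$ statement rather than nonemptiness, and I would double-check exactly where it enters the recovery of Theorem~\ref{thm:TypeIerror}.)

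The main obstacle I anticipate is \emph{not} the counting step — which is a clean generalization — but verifying the decreasing-ness of $A_{N,r}$ cleanly for all relevant $s$, and then confirming that the resulting bound genuinely specializes to $f(N,\alpha)$ of Theorem~\ref{thm:TypeIerror} at $r=2$. The latter requires showing that for $r=2$ the crude bound here is no weaker than the refined three-term split; this is plausible because the refined split's extra savings in terms (II) and (III) are lower-order, but it needs the ``tedious algebra'' the paper alludes to. A secondary subtlety: care is needed about \emph{unordered} vs.\ \emph{ordered} tuples and the normalization $\tfrac{(N-r-1)!}{(N-1)!}$ in \eqref{eq:LRO_pvalue}, to ensure the event $\{p_{\LRO}\le\alpha\}$ really is $\{I\text{ strange}\}$ with the threshold $(1-\alpha)\binom{N-1}{r}$ and not an off-by-a-permutation-factor variant. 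Once those bookkeeping points are pinned down, the proof follows the template of Theorem~\ref{thm:TypeIerror} essentially verbatim.
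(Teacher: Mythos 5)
Your proposal is correct and follows essentially the same route as the paper: define strange units via the win-count threshold $(1-\alpha)\binom{N-1}{r}$, sum the strangeness inequality over $\cl{S}$, and use the at-most-one-winner-per-match fact to bound the double count by the number of $(r+1)$-subsets meeting $\cl{S}$, which is exactly the quantity $\binom{N}{r+1}-\binom{N-s}{r+1}$ the paper obtains by splitting over $t=|M\cap\cl{S}|-1$ and applying Vandermonde's identity, after which monotonicity of $A_{N,r}$ (your running-average argument is equivalent to the paper's hockey-stick lemma) gives the stated bound; your bookkeeping concern about ordered tuples versus unordered matches indeed resolves via the factor $r!$, and your worry about the $r=2$ specialization is unfounded since the three bounds (I)--(III) in Theorem \ref{thm:TypeIerror} are precisely the $t=2,1,0$ terms of this same count.
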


\begin{theorem}\label{thm:ranksum_LRO_guarantee}
For any $\alpha \in (0, 1)$,
\[\Prob_{H_0}(p_{\rLRO}\le \alpha)\le \frac{\floor{(N-1)\alpha} + 1}{N}.\]
\end{theorem}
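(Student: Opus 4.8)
The plan is to mimic the "strange unit" argument used for Theorem~\ref{thm:TypeIerror} and Theorem~\ref{thm:ranksum_guarantee}, but adapted to the rank-sum statistic with $r$ left-out units. First I would rewrite $p_{\rLRO}$ in terms of pairwise comparisons. Note that $\rank\set{k; k, i_1, \ldots, i_r} = \sum_{j=1}^r \mathbb{I}\{k \succ_{\{k,i_1,\ldots,i_r\}} i_j\}$ where the comparison is relative to the match on $\{k, i_1,\ldots,i_r\}$; summing $2(r - \rank)/r$ over all ordered $r$-tuples and renormalizing shows that $p_{\rLRO}$ is, up to the factor $2/r$, essentially an average over all pairs $(k, i)$ within a match of the indicator that $i$ beats $k$ in that match. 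Concretely, I expect $p_{\rLRO} = \frac{2}{r}\cdot\frac{(N-r-1)!}{(N-1)!}\sum_{\text{matches containing }I}\sum_{i \text{ in match}, i\neq I} \mathbb{I}\{i \text{ beats } I\text{ in that match}\}$, which is the key simplification that linearizes the statistic.

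Next I would define a unit $k$ to be \emph{strange} if
\[
\frac{(N-r-1)!}{(N-1)!}\sum_{\substack{i_1, \ldots, i_r\in [N]\setminus k}} \rank\set{k; k, i_1, \ldots, i_r} \le \frac{r}{2}\alpha,
\]
i.e. if $k$ on average gets beaten too often (equivalently $p_{\rLRO}$ for $k$ is at most $\alpha$), and bound $s := |\mathcal{S}|$, the number of strange units. The essential combinatorial input is that within any single match on $r+1$ units, the ranks sum to exactly $\binom{r+1}{2}$ (each of the $\binom{r+1}{2}$ unordered pairs contributes one to the total rank of the winner of that pair), so the total "beaten" count over all matches is fixed. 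Summing the strangeness defining inequality over $k \in \mathcal{S}$, the left side counts pairwise defeats $(k, i)$ with $k \in \mathcal{S}$ within matches; I split this by whether the other match participants lie in $\mathcal{S}$ or $\mathcal{S}^c$. When all $r+1$ participants of a match lie in $\mathcal{S}$, I use the symmetry/renaming trick: summing over which participant plays the role of "$k$", the total rank contributed is exactly $\binom{r+1}{2}$ per match, so this block contributes at most $\binom{r+1}{2}\binom{s}{r+1}$ (times the appropriate normalization). For matches with at least one participant in $\mathcal{S}^c$ I use the trivial bound that the rank of any participant is at most $r$. Assembling these pieces should give a clean linear-in-$s$ inequality (rather than the cubic one in Theorem~\ref{thm:TypeIerror}, because the rank-sum normalization $2/r$ is exactly calibrated to make the leading terms cancel), from which $s \le \floor{(N-1)\alpha} + 1$ follows, and then $\Prob_{H_0}(p_{\rLRO}\le\alpha) = \Prob_{H_0}(I\in\mathcal{S}) = s/N$ under uniform assignment yields the claim.

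The main obstacle will be bookkeeping the normalization constants correctly: getting the combinatorial identity "ranks within a match sum to $\binom{r+1}{2}$" to interact cleanly with the factor $(N-r-1)!/(N-1)!$ and the chosen $2/r$ scaling, and verifying that the $\mathcal{S}$-internal matches really do contribute at most the symmetric average so that the cross terms telescope into $(N-1)\alpha$. I also need to be careful with ties in the residuals (the comparisons may have no strict winner), but as in the $r=2$ case this only helps, since it can only decrease ranks; I would handle it by noting $\mathbb{I}\{k\succ i\} + \mathbb{I}\{i \succ k\} \le 1$ with equality failing only under ties. Once the linear inequality is in hand, extracting the floor bound is routine, and checking the $r=2$ specialization against Theorem~\ref{thm:ranksum_guarantee} serves as a consistency check.
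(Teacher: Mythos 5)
There is a genuine gap, and it lies in the step you flagged as mere ``bookkeeping.'' First, a fixable slip: your strangeness condition as written, $\frac{(N-r-1)!}{(N-1)!}\sum \mathrm{rank} \le \frac{r}{2}\alpha$ (``gets beaten too often''), is not equivalent to $p_{\rLRO}\le\alpha$; a small rank-sum $p$-value means the unit \emph{wins} too often, and the correct condition is $\sum_{i_1,\ldots,i_r}\mathrm{rank}\{k;k,i_1,\ldots,i_r\}\ \ge\ r\bigl(1-\tfrac{\alpha}{2}\bigr)\frac{(N-1)!}{(N-r-1)!}$, i.e.\ an upper bound on the total number of non-wins. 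The real problem is the decomposition you then propose: symmetric bound $\binom{r+1}{2}$ only for matches whose $r+1$ participants all lie in $\mathcal{S}$, and the trivial bound $\mathrm{rank}\le r$ for every match touching $\mathcal{S}^c$. Carrying this out gives, after the Vandermonde-type assembly, only $\binom{s-1}{r}\le\alpha\binom{N-1}{r}$, i.e.\ roughly $s\lesssim \alpha^{1/r}N$, which is far weaker than the claimed $s\le (N-1)\alpha+1$. (Your own consistency check would have caught this: at $r=2$ the coarse split yields $(s-1)(s-2)\le\alpha(N-1)(N-2)$, not the linear bound of Theorem \ref{thm:ranksum_guarantee}; note that in the paper's $r=2$ proof the mixed triples $i,j\in\mathcal{S},k\in\mathcal{S}^c$ are bounded by $3$, not by the trivial $4$.)

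The missing idea is that the limited-resource/symmetry argument must be applied to the strange-vs-strange comparisons inside \emph{every} match, mixed or not. The paper splits, for each match, the rank of a strange participant into wins against the other strange participants and wins against the non-strange ones; for a match containing $t+1$ strange units the former comparisons sum (after cyclic relabeling of which strange unit plays the role of $i_0$) to at most $\binom{t+1}{2}$, while only the latter, at most $r-t$ per strange unit, are bounded trivially. Summing over the composition $t$ and applying Vandermonde twice gives the per-match total $\le mr-\binom{m}{2}$ with $m=t+1$, hence $r\bigl(1-\tfrac{\alpha}{2}\bigr)\binom{N-1}{r}\le r\binom{N-1}{r}-\frac{s-1}{2}\binom{N-2}{r-1}$, which is linear in $s$ and yields $s\le (N-1)\alpha+1$ and then the floor bound. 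So your skeleton (strange units, symmetry, tie handling, $\Prob_{H_0}(p_{\rLRO}\le\alpha)=s/N$) is right, but without the finer per-match split of ranks into $\mathcal{S}$- and $\mathcal{S}^c$-directed wins the argument does not reach the stated bound.
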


\section{Conclusion and discussion}
\label{sec:discussion}
We propose a novel leave-two-out placebo test for the synthetic control method under uniform assignments. Compared to the standard placebo test based on permutation tests, the LTO placebo test has two advantages. First, it has the same theoretical conditional Type-I error guarantee for practical $(N, \alpha)$ pairs; further, it often achieves a strictly lower unconditional Type-I error in practice, unlike the standard placebo test for which the guarantee is tight.  In particular, this permits valid inference even in the challenging $\alpha < 1/N$ case. Second, we find empirically that it has higher power for moderately large signal-to-noise ratio and comparable power otherwise. In most of our experimental settings, the powered-LTO placebo test almost Pareto-improves the standard placebo test. 

While the design assumption is strong, it does not require large sample size and/or long time series, which are crucial for outcome-based approaches but uncommon in applications of the synthetic control method. Moreover, unlike most outcome-based approaches, the LTO placebo test does not artificially simplify the important steps, like weight matrix estimation and model selection, in the construction of synthetic controls. On the other hand, we propose a weighted variant of the LTO placebo test and a sensitivity analysis to account for non-uniform assignments. Below we discuss a few other related issues.

\subsection{In-time placebo test}
The placebo tests discussed in previous sections all exploit variation in which units are treated. An alternative approach is to exploit the variation of treatment timing. In the literature, in-time placebo tests are often performed by choosing a pre-treatment period, pretending it to be the time when the treatment occurs, and running the same synthetic control procedure to eyeball check if the treated unit differs from its synthetic control between the fake and actual treatment time \cite{fremeth2013making, abadie2015comparative,castillo2017causal, cole2020impact, chattopadhyay2023did}. To formalize it into a rigorous statistical inference procedure, we can follow \cite{firpo2018synthetic} verbatim by assuming the treatment time $T_0$ is uniform in $[t_1, t_2]$ for some $1\le t_0 < t_1 \le T$. Then the standard and LTO placebo tests can both be directly applied in this setting. For in-time placebo tests, the small sample issue is arguably more pronounced because it is typically only reasonable to choose a small time window $[t_1, t_2]$. Based on the promising performance of LTO placebo tests in unit-level placebo tests when the sample size is small, we expect they continue to excel for in-time placebo tests, though we leave the formal investigation for future work. 

\subsection{Testing non-sharp null and constructing confidence regions}
In previous sections we focus on the sharp null \eqref{eq:global_null}. Following \cite{firpo2018synthetic, chernozhukov2021exact}, it is straightforward to test any postulated effect vector $\theta\in \mathbb{R}^{T-T_0}$: 
\[H_0(\theta): \tau_{I, T_0+1} = \theta_1, \tau_{I, T_0+2} = \theta_2, \ldots, \tau_{I, T} = \theta_{T-T_0}.\]
In fact, by subtracting $\theta_j$ from $Y_{I(T_0+j)}(1)$, we are back to the case of sharp null. Phrased differently, we can redefine the RMSPE $R_{i,j,I;I}(\theta)$ for the treated unit within the triple $\{i,j,I\}$ by replacing $Y_{It}$ by $Y_{It} - \theta_{t-T_0}$ for $t=T_0+1, \ldots, T$ and reject $H_0(\theta)$ if $p_{\nLTO}(\theta)\le \alpha$, where
    \begin{equation}
    \label{eq:lto_pvalue_theta}
    p_{\nLTO}(\theta) := \frac{1}{(N-1)(N-2)} \sum_{\substack{i,j \in[N] \setminus I \\ i \neq j}} \mathbb{I}\set{R_{i,j,I;I}(\theta) \le \max(R_{i,j,I;i}, R_{i,j,I;j})},
    \end{equation}
Note that the other RMSPEs remain the same because the synthetic control weights only depend on pre-treatment outcomes which are unchanged.

By inverting the test, we can obtain a confidence region for $(\tau_{I, T_0+1}, \tau_{I, T_0+2}, \ldots, \tau_{I, T})$: 
\[\mathcal{C}_{1-\alpha} = \left\{\theta\in \mathbb{R}^{T-T_0}: p_{\nLTO}(\theta) > \alpha\right\}.\]
When the RMSPE is used, by \eqref{eq:lto_pvalue_theta}, there exists $A_{ij}\in \mathbb{R}^{T-T_0}$ and $b_{ij}\in \mathbb{R}$ for $i,j\in [N]\setminus I$ such that
\[\mathcal{C}_{1-\alpha} = \left\{\theta\in \mathbb{R}^{T-T_0}: \sum_{i,j\in [N]\setminus I}\mathbb{I}\set{\|\theta - A_{ij}\|_2^2 \le b_{ij}} > \alpha (N-1)(N-2)\right\}.\]
Each indicator gives a ball and $\mathcal{C}_{1-\alpha}$ is an union of intersections of these balls that lie in at least $\alpha (N-1)(N-2)$ balls. By Theorem \ref{thm:TypeIerror}, 
\[\Prob\Big((\tau_{I, T_0+1}, \tau_{I, T_0+2}, \ldots, \tau_{I, T})\in \mathcal{C}_{1-\alpha}\Big)\ge 1 - \frac{\floor{N f(N, \alpha)}}{N}.\]

\subsection{Consistency of LTO and inconsistency of approximate placebo test}\label{sec:consistency}
Power analysis is challenging in our setting as it typically relies on large sample approximation that tends to be uninformative when $N$ and $T$ are small, particularly given the complexity of the synthetic control procedure. In this section, we make an attempt to explain the empirically observed power improvement over the approximate placebo test.

In the classical asymptotic regime where $N$ tends to infinity, a test is consistent if its power converges to $1$ for any fixed alternative. Qualitatively, consistency ensures that the null hypothesis will be rejected with high probability as the problem becomes sufficiently easy.
We migrate the notion of consistency to the fixed-$N$ setting by defining a test to be consistent if the power converges to $1$ as the signal strength under the alternative grows to infinity. Consistency serves as a minimal requirement for a desirable test. We show that, if $\alpha < 1/N$, the LTO placebo test is uniformly consistent, whereas the approximate placebo test is not. In fact, the latter may have nearly zero power even with an arbitrarily large signal strength. 

\begin{theorem}\label{thm:consistency}
Let $G$ denote the distribution of $(Y_{i,t}(0))_{i\le N, t\le T}$. Consider an alternative hypothesis parametrized by a single parameter $\eta > 0$:
\[H_{1,\eta}: Y_{i,t}(1) = Y_{i, t}(0) + \eta \tau_{i,t}\]
for some deterministic matrix $\tau = (\tau_{i, t})_{i\le N, t\le T} \in \Gamma$ where $\Gamma$ includes all matrices $\tau$ with $\min_{i}\sum_{t}\tau_{it}^2 \ge 1$. Assume $N, T$ are fixed and $\alpha < 1/N$. Consider the LTO and approximate placebo tests based on the post-RMSPE statistic defined in \eqref{eq:RMSPE}. Then, for any tight collection of distributions $\mathcal{G}$ on $\mathbb{R}^{N\times T}$, 
\[\lim_{\eta \rightarrow \infty}\inf_{G\in \mathcal{G}, \tau \in \Gamma}\Prob_{G, \tau,  H_{1,\eta}}(p_{\nLTO} \le \alpha) = 1,\]
Furthermore, there exists infinitely many pairs $(G, \tau) \in \cl{G} \times \Gamma$ that do not vary with $\eta$ such that 
\[
\lim_{\eta \rightarrow \infty}\Prob_{G, \tau, H_{1,\eta}}(p_{\pb} \le \alpha) \left\{\begin{array}{cc}
= 0 & \text{ if $N$ is even}\\
\le 1/N & \text{if $N$ is odd.}
\end{array}\right.
\]
\end{theorem}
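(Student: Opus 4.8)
My plan is to handle the two halves — consistency of $p_{\nLTO}$ and inconsistency of $p_{\pb}$ — separately, since they rely on different structural facts about how the post-RMSPE statistic behaves as $\eta \to \infty$.

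\textbf{Consistency of the LTO test.} The key observation is that the post-period outcomes for the treated unit $I$ under $H_{1,\eta}$ are shifted by $\eta\tau_{I,\cdot}$, while the synthetic control weights in any triple $\{i,j,I\}$ depend only on pre-treatment outcomes and covariates, which are unaffected. Hence, in any match $\{i,j,I\}$, the fitted counterfactual $\hat{\mathbf{Y}}^{-(i,j,I)}_I$ does not depend on $\eta$, and the post-RMSPE residual $R_{i,j,I;I}$ grows like $\eta^2 \cdot \frac{1}{T-T_0}\sum_t \tau_{It}^2 + O(\eta)$, which tends to infinity since $\sum_t \tau_{It}^2 \ge 1$. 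By contrast, $R_{i,j,I;i}$ and $R_{i,j,I;j}$ are fixed random quantities not depending on $\eta$. First I would make this precise: condition on $\mathcal{Y}$ and on the realized assignment $I$, and note that once $\eta$ is large enough (depending on $\mathcal{Y}$), we have $R_{i,j,I;I} > \max(R_{i,j,I;i}, R_{i,j,I;j})$ for \emph{every} pair $(i,j)$, so that $p_{\nLTO} = 0 \le \alpha$. The subtlety is uniformity over the tight family $\mathcal{G}$ and over $\tau \in \Gamma$: I would use tightness to choose, for any $\epsilon > 0$, a compact set $K \subset \mathbb{R}^{N\times T}$ with $G(K) \ge 1-\epsilon$ for all $G \in \mathcal{G}$, and then observe that on $K$ the fixed residuals $R_{i,j,I;i}$ are uniformly bounded and the quadratic coefficient $\sum_t \tau_{It}^2 \ge 1$ is bounded below, so a single threshold $\eta_0(\epsilon, K)$ works for all $(G,\tau)$ simultaneously. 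This gives $\inf_{G,\tau}\Prob(p_{\nLTO}\le\alpha) \ge 1-\epsilon$ for $\eta \ge \eta_0$, hence the limit is $1$.

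\textbf{Inconsistency of the approximate placebo test.} Here the point is that $p_{\pb}$ counts units $i \ne I$ with $R_I \le R_i$, where now $R_i$ is the \emph{full} (ratio or post) RMSPE using all other $N-1$ units including $I$ as donors. The crucial mechanism: when $\eta$ is huge, unit $I$'s outcome is an extreme outlier, so if $I$ is used as a donor for some other placebo unit $i$, the synthetic control for $i$ may either (a) put near-zero weight on $I$, leaving $R_i$ essentially unchanged, or (b) be dragged toward $I$. I would construct $(G,\tau)$ so that the pre-treatment fits are arranged in pairs: pick $\tau$ supported only on the post-period, and choose $G$ (e.g. a configuration where units come in matched pairs with near-identical pre-treatment trajectories) so that for each placebo unit $i$, its best synthetic control barely uses $I$ — so $R_i$ stays $O(1)$ — while $R_I \to \infty$. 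Then $\mathbf{1}\{R_I \le R_i\} = 0$ for all such $i$, giving $p_{\pb} = 0$, and since $\alpha < 1/N$ the test never rejects: the limiting rejection probability is $0$. The even/odd distinction arises because when $N$ is odd one unit cannot be paired; for that leftover unit the construction may fail to prevent rejection on an event of probability at most $1/N$, yielding the bound $\le 1/N$. I would make the pairing explicit and verify that the leftover-unit contribution is at most $\Prob(I = i^*) = 1/N$.

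\textbf{Main obstacle.} The hard part is the inconsistency half: controlling the behavior of the data-driven synthetic control weights for the \emph{placebo} units when $I$ (an extreme outlier) is in the donor pool. One must exhibit an explicit distribution $G$ and effect matrix $\tau$ for which the optimization defining each placebo synthetic control provably assigns negligible weight to $I$, uniformly in $\eta$ — this requires a careful choice exploiting the pre-treatment matching structure (so that the unconstrained optimum already ignores $I$) and a continuity/stability argument for the minimizer of the weighted norm \eqref{eq:SC_objective}. The consistency half, by contrast, is essentially a clean ``signal dominates fixed noise'' argument once tightness is invoked for uniformity.
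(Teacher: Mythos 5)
Your first half (consistency of $p_{\nLTO}$) is essentially the paper's own argument: use tightness to trap the control outcomes in a bounded set with probability $1-\epsilon$ uniformly over $\mathcal{G}$, note the synthetic-control predictions are convex combinations and hence equally bounded, and then let the $\eta^2$ signal term dominate the bounded residuals uniformly in $\tau$ (the paper makes the cross-term concern you gloss over rigorous with the elementary bound $(a+b)^2 \ge b^2/2 - a^2$, which works even for unbounded $\tau$). That part is fine.

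The second half contains a genuine error: your mechanism is inverted. You construct $(G,\tau)$ so that every placebo unit's synthetic control ignores the outlier $I$, all placebo residuals stay $O(1)$ while $R_I \to \infty$, and conclude $p_{\pb} = \frac{1}{N}\sum_{i\neq I}\mathbb{I}\{R_I \le R_i\} = 0$, hence ``the test never rejects.'' But $p_{\pb} = 0 \le \alpha$ \emph{is} a rejection: a small $p$-value rejects, and the fact that $\alpha < 1/N$ only blocks rejection when the $p$-value is at least $1/N$ (as for the exact placebo, whose minimum is $1/N$). Your scenario is precisely the one in which the approximate placebo test has full power — indeed it is the reason $p_{\pb}$ over-rejects — so under your construction $\Prob(p_{\pb}\le\alpha)\to 1$, the opposite of what the theorem asserts. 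To make the approximate test fail, you must force at least one placebo residual to be \emph{at least as large} as $R_I$ no matter how big $\eta$ is. The paper does this by duplicating units in identical pairs (with no unit's trajectory in the convex hull of the others, so the synthetic-control weights are pinned down): the treated unit's twin then has the treated unit itself as its synthetic control, so the twin's post-RMSPE equals $R_I$ exactly, giving $p_{\pb} \ge 1/N > \alpha$ deterministically for every $\eta$; when $N$ is odd one unit is left unpaired and the bound degrades only on the event $\{I = \text{unpaired unit}\}$, which has probability $1/N$. Your even/odd intuition about a leftover unit is in the right spirit, but because the underlying mechanism is backwards, the construction you sketch cannot be repaired by a stability argument for the SC weights — it needs to be replaced by one in which the treated unit's extremeness is transferred to a control unit's residual rather than avoided.
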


A proof may be found in Section \ref{sec:proofs_discussion}.

\subsection{Leave-two-out techniques for other problems}\label{sec:few_clusters}

The theory of LTO placebo tests only relies on the uniform assignment assumption but not the choice of summary statistics. Thus, it can be applied beyond synthetic control applications to any problem with the same structure. Here, we discuss one instance in the context of inference with few clusters \cite{bertrand2004much,
donald2007inference, ibragimov2010t,
conley2011inference, bester2011inference, imbens2016robust,
ibragimov2016inference,canay2017randomization, 
mackinnon2018wild, hagemann2019placebo,
canay2020use, hagemann2020inference,
canay2021wild, hagemann2023permutation, webb2023reworking}; see also the surveys by \cite{cameron2015practitioner} and \cite{mackinnon2023cluster}. With a few exceptions that assumes normality of errors (e.g., \cite{donald2007inference}), existing approaches with theoretical guarantees rely on large clusters in order for the within-cluster test statistics to be close to normally distributed (e.g., \cite{ibragimov2010t, ibragimov2016inference, hagemann2020inference}) or at least close to be symmetrically distributed (e.g., \cite{canay2017randomization, canay2021wild, hagemann2023permutation}) -- these distributional assumptions are hard to justify in most applications of synthetic controls.  Other procedures work well empirically when the cluster sizes are moderate, e.g., $\ge 30$ \cite{cameron2008bootstrap}, though the theory still requires the sample size tends to infinity in each cluster. Moreover, to do inference on treatment effects, many methods that can accommodate small clusters require at least two \cite{conley2011inference, hagemann2019placebo} or $\log (1/\alpha)$ \cite{canay2017randomization} treated clusters if the total number of clusters is below $1/\alpha$. 

In sum, when there is a single treated cluster, which is common in applications, existing methods rely on either large-sample approximation, or that the number of control clusters is at least $1/\alpha - 1$. In this case, if the data comes from a cluster randomized experiment \cite{su2021model, lu2023design} or if the researcher is willing to assume uniform assignment, the LTO procedure can be applied to test Fisher's sharp null. We leave the details for future research.

\subsection{Leave-two-out test as a new type of randomization inference}\label{sec:beyond}
Classical randomization inference \cite{fisher1937design, romano1989bootstrap, romano1990behavior, ding2017paradox} exploits exchangeability, or group invariance more generally, by comparing the test statistic for the observed data with those for permuted or group-transformed data. Despite a century of development, nearly all randomization inference procedures are variants of this generic form with different choices of test statistics or group of transformations. The LTO placebo tests, or the more general leave-$r$-out tests, are new types of randomization inference that move beyond classical permutation or rank-based inference. 

Our tests generalize the Jackknife+ \cite{barber2021predictive}, which also differs from classical randomization inference. It is an extension of conformal prediction \cite{vovk2005algorithmic} and the Jackknife procedure \cite{wu1986jackknife} to construct prediction intervals with guaranteed coverage under exchangeability. 
Let $Z_1 = (X_1, Y_1), \ldots, Z_N = (X_N, Y_N)$ be exchangeable observations where $X_i$ denotes a set of covariates and $Y_i$ denotes the outcome. The goal is to construct a prediction interval $\hat{C}(X_{N+1})$ such that $\Prob(Y_{N+1} \in \hat{C}(X_{N+1}))\ge 1-\alpha$ assuming $Z_{N+1} = (X_{N+1}, Y_{N+1})$ is exchangeable with $\{Z_1, \ldots, Z_N\}$. Jackknife+ essentially inverts a leave-one-out test:
\[p(Y_{N+1}) = \frac{1}{N}\sum_{i=1}^{N}\mathbb{I}\set{R_{i,N+1;N+1}\le R_{i, N+1; i}},\]
where $R_{i,j;j} = |Y_j - \hat{\mu}_{-i,-j}(X_j)|$ where $\hat{\mu}_{-i,-j}(\cdot)$ can be any model fit on $(Z_k)_{k\in [N]\setminus \{i, j\}}$. However, Jackknife+ suffers from the same granularity issue as the standard placebo tests because it can only take multiples of $1/N$.

Another exception is the prediction inference of model transfer errors by \cite{andrews2022transfer}. They consider a setting with data from $N$ exchangeable training domains (e.g., countries), denoted as $Z_1, \ldots, Z_N$. For a given economic or machine learning model and a pair of domains $Z_i, Z_j$, let $e(Z_i, Z_j)$ denote the accuracy (e.g., the mean square error) of the model fit on the $i$-th domain and evaluated on the $j$-th domain. Let $Z_{N+1}$ denote an unobserved target domain that is exchangeable with $\{Z_1, \ldots, Z_N\}$. The goal is to construct a prediction interval for the random transfer error $e(Z_I, Z_{n+1})$ where $I$ is chosen from $[N]$ uniformly at random. Their interval is constructed based on quantiles of $\{e(Z_i, Z_j): i,j\in [N], i\neq j\}$ even if they are no longer exchangeable as in classical randomization inference due to the bivariate structure.

\bibliographystyle{alpha}
\bibliography{main}

\newpage
\appendix

\section{A Survey of Applications of Synthetic Controls}
\label{sec:smalldata}

In this section, we conduct a literature survey on empirical applications of synthetic controls. The results are summarized in the Tables \ref{tab:lit_review_numbers} and \ref{tab:lit_review_design}. We focused on the size of the datasets in the studies, along with the use of placebo inference. Several findings are of interest:
\begin{itemize}
    \item \textbf{Small Datasets}. For the overwhelming majority of studies surveyed, the synthetic control was used to analyze one treated unit, with a donor pool on the order of a few dozen. In table \ref{tab:lit_review_numbers}, several studies use a staggered adoption framework, although the synthetic control method is still applied to analyze one treated unit at a time. In only a few studies \cite{liu2015spillovers, ando2015dreams, acemoglu2016value} was the donor pool near $100$ or more, which enables inference at traditional levels like $\alpha = 0.01$. Furthermore, the number of pre- and post-treatment time periods is typically also on the order of a few dozen. In only a handful of studies \cite{fremeth2013making, chattopadhyay2023did, liu2015spillovers, castillo2017causal, acemoglu2016value, bell2023texas} were the number of time periods substantial. On the other hand,  several studies used the placebo test when $N_0,T_0$ or $T_1$ were less than $5$ \cite{coffman2012hurricane, bohn2014did}. Asymptotic inferential procedures seem difficult to justify in these settings. 
    \item \textbf{Resulting Issues with Placebo Inference}. The predominant robustness check /  inference methodology used among papers in the survey was the unit-level placebo test. Only a few papers did not use the placebo procedure. Moreover, some authors note that the placebo-based $p$-value cannot be created when using small levels $\alpha$, and more broadly, convey the difficulties in doing inference when the donor pool is small. For example, \cite{chattopadhyay2023did} notes ``The two-sided RI $p$-value cannot get any smaller than $0.04$ [\dots]. Thus, the coefficients of interest in this study cannot be significant at a 1\% level of significance." Newiak and Williams \cite{newiak2017evaluating} remark similarly: ``Typically, however, there are not many countries in the control group for which we can construct counterfactuals with a fit that is at least as good as for the treated country, which limits the informativeness of this exercise (and prevents us from calculating meaningful p-values)."
    \item \textbf{Alternative Inference Procedures}. Alternative inferential procedures (methods to calculate $p$-values) were few. Several papers \cite{adhikari2016evaluating, almer2012effect} in the staggered adoption setting use a placebo method for average treatment effects for grouped data outlined in \cite{cavallo2013catastrophic}. Otherwise, in-time placebo robustness checks were occasionally used \cite{fremeth2013making, abadie2015comparative,castillo2017causal, cole2020impact, chattopadhyay2023did, donohue2019right, ando2015dreams, ben2022synthetic}. A few papers attempted a bootstrap-based approach \cite{sills2015estimating, ben2022synthetic}, justified using asymptotic theory with assumptions on the data generating process.
\end{itemize}

\begin{table}
\centering

\begin{tabular}{|c||c|c|c|c|c|c|c|}
\hline
Paper & Units & $N_1$ & $N_0$ & Time Unit & $T_0$ & $T_1$ & Used placebo \\ \hline \hline
\cite{bell2023texas} & State & 1 & 50 & Monthly & 75 & 9 & Yes \\ \hline
\cite{cunningham2018decriminalizing} & State & 1 & 50 & Year & 43 & 8 & Yes \\ \hline
\cite{dustmann2017labor} & District & 1 & 85 & Year & 4 & 6 & Yes \\ \hline
\cite{kleven2013taxation} & Country & 1 & 13 & Year & 11 & 13 & Yes \\ \hline
\cite{pinotti2015economic} & Region & 5 & 15 & Year & 25 & 27 & Yes \\ \hline
\cite{trejo2024effects} & District & 1 & 54 & Year & 8 & 4 & No \\ \hline
\cite{hankins2020finally} & State & 1 & 46 & Year & 22 & 10 & Yes \\ \hline
\cite{coffman2012hurricane} & County & 1 & 3 & Year & 17 & 16 & Yes \\ \hline
\cite{fremeth2013making} & Car Company & 1 & 16 & Month & 95 & 35 & Yes \\ \hline
\cite{castillo2017causal} & Province & 1 & 19 & Month & 90 & 125 & Yes \\ \hline
\cite{peri2019labor} & City & 1 & $\leq 43$ & Year & 6 & 11 & Yes \\ \hline
\cite{dupont2015happened} & Prefecture & 1 & 41 & Years & 19 & 14 & Yes \\ \hline
\cite{boes2012effect} & City & 1 & $35$ & Year & 8 & 18 & Yes \\ \hline
\cite{grier2016economic} & Country & 1 & $\leq 20$ & Year & 28 & $\leq 10$ & Yes \\ \hline
\cite{chamon2017fx} & Country & 1 & 16 & Week & 12 & 12 & Yes \\ \hline
\cite{cole2020impact} & City & 1 & 29 & Day & 30 & 12 & Yes \\ \hline
\cite{roesel2017mergers} & German State & 1 & 10 & Year & 9 & 5 & No \\ \hline
\cite{sills2015estimating} & Municipality & 1 & 35 & Year & 5 & 5 & No \\ \hline
\cite{bohn2014did} & State & 1 & 46 & Year & 8 & 2 & Yes \\ \hline
\cite{chattopadhyay2023did} & State & 1 & 49 & Month & 136 & 53 & Yes \\ \hline
\cite{ben2021augmented} & State & 1 & 49 & Quarter & 89 & 16 & Yes \\ \hline
\cite{donohue2019right}* & State & 33 & $\leq 33$ & Years & 10 & $\leq 40$ & No \\ \hline
\cite{billmeier2013assessing}* & Country & 30 & $\leq 50$ & Year & 10 & $\leq 40$ & Yes \\ \hline
\cite{adhikari2016evaluating}* & Country & 8 & $\leq 30$ & Year & 10 & $\leq 15$ & Yes \\ \hline
\cite{ando2015dreams}* & Japanese Muni. & 8 & $\leq 90$ & Year & $\leq 30$ & $\leq 15$ & Yes \\ \hline
\cite{munasib2015regional}* & State & 3 & 29 & Year & $\leq 6$ & $\leq 6$ & Yes \\ \hline
\cite{newiak2017evaluating}* & Countries & 7 & 39 & Year & $\leq 20$ & $\leq 10$ & Yes \\ \hline
\cite{liu2015spillovers}* & US Counties & 57 & $\leq 1180$ & Year & $\leq 30$ & $\leq 80$ & Yes \\ \hline
\cite{almer2012effect}* & Country & 7 & 20 & Year & 17 & 6 & Yes \\ \hline
\cite{ben2022synthetic}* & State & 32 & $\leq 49$ & Year & $\leq 28$ & $\leq 25$ & No \\ \hline
\cite{acemoglu2016value}* & Stocks & 22 & 561 & Day & 250 & 30 & Yes \\ \hline
\end{tabular}
\caption{Summary of a broad literature review of papers using the SCM, found through Google Scholar. $N_1,N_0$ denote the number of treated units and donor units, while $T_0,T_1$ denote the number of pre- and post-treatment periods in the study. These numbers reflect our best estimate of the study design from reading the corresponding papers. Some papers conduct multiple analyses using different outcome variables, for which the size of the donor pool or time periods change. In these cases, we provide a rough upper bound on the figures. * denotes studies with a staggered adoption framework: the studies comprise multiple applications of the synthetic control method with a single treated unit. As a result, the number of units in the donor pool and the number of pre- and post-treatment periods change per analysis. For example, \cite{billmeier2013assessing} study the impact of economic liberalization episodes on GDP per capita across a dataset of all countries. The synthetic control analysis done for Singapore used 42 placebos, while Indonesia had $8$ placebos. }
\label{tab:lit_review_numbers} 
\end{table}

\begin{table}[H]
\centering
\begin{tabular}{|p{1.75cm}||p{3cm}|p{6cm}|p{5cm}|}
\hline
Paper & Units & Treatment & Outcome  \\ \hline \hline
\cite{bell2023texas} & US States & TX 2021 Abortion Ban & Number of Births \\ \hline
\cite{cunningham2018decriminalizing} & US States + D.C. & RI decriminalizing prostitution & Sexual crimes \& health outcomes \\ \hline
\cite{dustmann2017labor} & German Districts & German-Czech Labor Market Shock & Employment measures \\ \hline
\cite{kleven2013taxation} & European Country & Bosman Ruling (Tax Policy for Athletes) & Athlete emigration\\ \hline
\cite{pinotti2015economic} & Italian Region & Organized Crime Rise & Various economic indicators \\ \hline
\cite{trejo2024effects} & MI School District & Flint, Michigan Water Crisis & Educational achievement \\ \hline
\cite{hankins2020finally} & US State & Nebraska's 1937 Legislative Structure Change & Government expenditure  \\ \hline
\cite{coffman2012hurricane} & County & Hawaii's 1992 Hurricane Iniki & Economic development \\ \hline
\cite{fremeth2013making} & Car companies & Government support for Chrysler & Car sales  \\ \hline
\cite{castillo2017causal} & Province & Tourism policy in Salta, Argentina & Employment  \\ \hline
\cite{peri2019labor} & City & Miami 1980 Mariel Boatlift & Labor market  \\ \hline
\cite{dupont2015happened} & Japanese Prefecture & 1995 Kobe Earthquake & Economic conditions  \\ \hline
\cite{boes2012effect} & City & Economic liberalization in Mumbai & Skilled wages  \\ \hline
\cite{grier2016economic} & Country & Chavez Regime in Venezuela & Economic conditions \\ \hline
\cite{chamon2017fx} & Country & Brazil FX intervention & Exchange rates \\ \hline
\cite{cole2020impact} & City & Wuhan COVID 19 lockdown & Air pollutant concentration  \\ \hline
\cite{roesel2017mergers} & German State & Govt. merger in Saxony & Govt. spending  \\ \hline
\cite{sills2015estimating} & Municipality & Local enforcement initiatives in Paragominas, Brazil & Deforestation  \\ \hline
\cite{bohn2014did} & US State & 2007 Legal Arizona Workers Act & Unauthorized immigrant pop.  \\ \hline
\cite{chattopadhyay2023did} & US State & MA Health Reform Program & Self-employment  \\ \hline
\cite{ben2021augmented} & US State &  Personal income tax cut & GDP per capita \\ \hline
\cite{donohue2019right}* & US State & Right-to-carry laws & Violent crime  \\ \hline
\cite{billmeier2013assessing}* & Country & Economic liberalization & GDP per capita \\ \hline
\cite{adhikari2016evaluating}* & Country & Flat tax system & GDP per capita  \\ \hline
\cite{ando2015dreams}* & Japanese Muni. & Establishment of nuclear power facilities & Local per capita income  \\ \hline
\cite{munasib2015regional}* & US State & Oil and gas development & Economic development  \\ \hline
\cite{newiak2017evaluating}* & Countries & IMF advising-monitoring program & Economic development  \\ \hline
\cite{liu2015spillovers}* & US County & 1862 Morrill Act & Economic development \\ \hline
\cite{almer2012effect}* & Country & Binding emission targets under the Kyoto Protocol & Domestic CO2 emissions \\ \hline
\cite{ben2022synthetic}* & US State & Teacher collective bargaining laws & Student expenditures \& teacher salary \\ \hline
\cite{acemoglu2016value}* & Financial Stocks & Geithner Confirmation Leak & Cumulative Stock Returns \\ \hline
\end{tabular}
\caption{Additional detail on design studies for papers in the literature review.}
\label{tab:lit_review_design} 
\end{table}

\newpage

\section{Proofs}
\label{sec:proofs}

\subsection{An impossibility result on conditionally valid p-values}\label{subapp:sec1}

Proposition \ref{prop:no_power} gives the formal statement of the impossibility result mentioned in Section \ref{sec:intro}.

\begin{proposition}\label{prop:no_power}
Suppose $I$ is uniform on $[N]$ and let $\alpha < 1/N$. Then any conditionally finite-sample-valid, non-randomized $p$-value $p(I)$ has zero power: $\Prob_{H_1}(p(I) \leq \alpha | \cl{Y}) = 0,$ for  any alternative $H_1: \tau_{i,T_0 + 1} = \theta_1, \ldots,\tau_{i,T} = \theta_{T-T_0}$, for all $i =1,\dots,N$. 
\end{proposition}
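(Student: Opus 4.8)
The plan is to show that, conditional on the control potential outcomes $\cl{Y}$, a non-randomized $p$-value that is conditionally valid at a level $\alpha < 1/N$ can never fall at or below $\alpha$ for \emph{any} realization of the treatment index, even under $H_1$; summing over the $N$ equally likely values of $I$ then forces the power to be $0$.

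First I would make precise that a non-randomized $p$-value is a deterministic function $p(\cdot)$ of the observed $N\times T$ outcome table together with the label of the treated unit. For a fixed table of control outcomes $\cl{Y}^0$ and a fixed assignment $I=i$, write $D_{H_0}(\cl{Y}^0,i)$ for the data observed under the sharp null (where the treated row coincides with its control row) and set $p_i(\cl{Y}^0) := p\big(D_{H_0}(\cl{Y}^0,i)\big)$. Conditional finite-sample validity at level $\alpha$, together with $I\sim\mathrm{Unif}[N]$, says that for every $\cl{Y}^0$,
\[
\frac1N\sum_{i=1}^N \mathbf{1}\{p_i(\cl{Y}^0)\le\alpha\} \;=\; \Prob_{H_0}\big(p(I)\le\alpha \mid \cl{Y}^0\big) \;\le\; \alpha \;<\; \frac1N .
\]
Since the left-hand side is a nonnegative integer multiple of $1/N$, it must equal $0$; that is, $p_i(\cl{Y}^0) > \alpha$ for \emph{every} $i\in[N]$ and \emph{every} control table $\cl{Y}^0$. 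This pigeonhole step is the heart of the argument.

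Next I would reduce the alternative to the null table-by-table. Fix the true control outcomes $\cl{Y}$ and the effect vector $\theta=(\theta_1,\ldots,\theta_{T-T_0})$. Under $H_1$ with treated unit $i$, the observed data $D_{H_1}(\cl{Y},i)$ has $i$-th row $\big(Y_{i1}(0),\ldots,Y_{iT_0}(0),\,Y_{i,T_0+1}(0)+\theta_1,\ldots,Y_{iT}(0)+\theta_{T-T_0}\big)$ and all other rows equal to the corresponding rows of $\cl{Y}$. Define the modified control table $\cl{Y}^{(i)}$ that agrees with $\cl{Y}$ except that the post-treatment entries of unit $i$ are increased by $\theta$. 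Then by construction $D_{H_0}(\cl{Y}^{(i)},i) = D_{H_1}(\cl{Y},i)$: under the sharp null with control table $\cl{Y}^{(i)}$ and treated unit $i$, the treated row is exactly the shifted row and every control row is unchanged. Applying the conclusion of the previous paragraph to $\cl{Y}^{(i)}$ gives $p_i(\cl{Y}^{(i)}) > \alpha$, hence $p\big(D_{H_1}(\cl{Y},i)\big) > \alpha$. As this holds for each $i\in[N]$,
\[
\Prob_{H_1}\big(p(I)\le\alpha \mid \cl{Y}\big) \;=\; \frac1N\sum_{i=1}^N \mathbf{1}\{p(D_{H_1}(\cl{Y},i))\le\alpha\} \;=\; 0 .
\]

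The only point needing care is that conditional validity is invoked at the auxiliary tables $\cl{Y}^{(i)}$ rather than at $\cl{Y}$ itself, which is legitimate precisely because conditional validity is uniform over all potential-outcome configurations; and if the framework restricts to tables with distinct entries, one notes that the set of $(\cl{Y},\theta)$ for which some $\cl{Y}^{(i)}$ has a tie is Lebesgue-null, so the bound extends to all $\theta$ by a denseness/limiting argument. I expect this bookkeeping — not any substantive inequality — to be the only mild obstacle.
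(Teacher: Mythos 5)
Your proposal is correct and follows essentially the same argument as the paper: the conditional Type-I error under the sharp null is an integer multiple of $1/N$, so validity at $\alpha<1/N$ forces it to be exactly zero for every control table, and the observed data under $H_1$ with treated unit $i$ coincides with the observed data under $H_0$ for the shifted table $\cl{Y}^{(i)}$, so no rejection can occur for any $i$. The paper bounds the diagonal sum $\sum_i \mathbf{1}\{p(i,\cl{D}^{(i)})\le\alpha\}$ by the double sum over all shifted tables, whereas you apply the zero-rejection fact at each $\cl{Y}^{(i)}$ directly; this is only a cosmetic difference in bookkeeping, not a different route.
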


Here we write $\tau_{i,t} = Y_{i,t}(1) - Y_{i,t}(0)$ and recall that $\cl{Y}$ denotes the set of control potential outcomes.

\begin{proof}[Proof of Proposition \ref{prop:no_power}]
Let $p(I)$ be such a $p$-value, written as a function of the index of treated unit $I$. We may also write $p(I,\cl{D})$ to emphasize the panel dataset $\cl{D}$ that is used to create the $p$-value. Let $\cl{D}^{(i)}$ be equal to the panel dataset of control potential outcomes, with the vector $(\theta_1, \ldots, \theta_{T-T_0})$ added on the unit $i$ in the post-treatment periods. Under $H_1,$ the treated unit $I$ has observed data $Y_{I,t} = Y_{I,t}(0) + \theta_{t}$, while for all other units $i$, we observe $Y_{i,t}(0).$ Conditioning on the value of $I,$
\begin{equation}
\Prob_{H_1}(p(I) \leq \alpha |\cl{Y}) = \frac{1}{N}\sum_{i \in [N]}\mathbf{1}\set{p(i, \cl{D}^{(i)}) \leq \alpha}.
\end{equation}

Now, consider the conditional rejection probability under the sharp null $\Prob_{H_0}(p(I) \leq \alpha | \cl{Y}).$ This latter probability can be written as $\Prob_{H_0}(I\in \mathcal{A})$ for some event $\mathcal{A}\subset [N]$. When $I$ is uniform on $[N] = \{1, \ldots, N\}$, $\Prob_{H_0}(p(I) \leq \alpha |\cl{Y})$ belongs to the set $\set{0/N,1/N,\dots,N/N}$. If $\alpha < 1/N, \Prob_{H_0}(p(I) \leq \alpha | \cl{Y}) = 0$ because $p$ is conditionally valid. Thus
\begin{equation}
0 = \Prob_{H_0}(p(I) \leq \alpha | \cl{Y}) = \frac{1}{N} \sum_{i = 1}^N \mathbf{1}\set{p(i,\cl{Y}) \leq \alpha}
\end{equation}
for every choice of $\cl{Y}$. Summing the above over $\cl{D}^{(i)}$, 
\[
\frac{1}{N} \sum_{j = 1}^N \sum_{i = 1}^N \mathbf{1}\set{p(i,\cl{D}^{(j)}) \leq \alpha} = 0,
\]
which implies 
\begin{equation}
    \Prob_{H_1}(p(I) \leq \alpha |\cl{Y}) = \frac{1}{N}\sum_{i \in [N]}\mathbf{1}\set{p(i, \cl{D}^{(i)}) \leq \alpha}  \leq \frac{1}{N} \sum_{j = 1}^N \sum_{i = 1}^N \mathbf{1}\set{p(i,\cl{D}^{(j)}) \leq \alpha}  = 0,
\end{equation}
as claimed, since all summands are non-negative.
\end{proof}

\subsection{Proofs for Section \ref{sec:ltojk}}\label{subapp:sec2}

\begin{proof}[Full Proof of Thm. \ref{thm:TypeIerror}] 

Let us start with the decomposition of the sum outlined in Section \ref{sec:ltojk}:
\begin{align*}
     & \overbrace{\sum_{i,j,k \in \cl{S}}  \bb{I}(k \succ i,j)}^{\text{(I)}}
    +  \overbrace{2\sum_{k \in \cl{S}}\sum_{i \in \cl{S},j \in \cl{S}^c} \bb{I}(k \succ i,j)}^{\text{(II)}}
    +  \overbrace{\sum_{k \in \cl{S}}\sum_{i,j \in \cl{S}^c}  \bb{I}(k \succ i,j)}^{\text{(III)}}. 
\end{align*}

For the first sum, note by renaming the labels that 
$\sum_{i,j,k \in \cl{S}} \bb{I}(k \succ i,j) = \sum_{i,j,k \in \cl{S}}  \bb{I}(i \succ j,k) = \sum_{i,j,k \in \cl{S}}  \bb{I}(j \succ k,i).$ Thus 
\[
\text{(I)} \leq \sum_{i,j,k \in \cl{S}} \frac{1}{3}\left(\bb{I}(k \succ i,j) + \bb{I}(i \succ j,k) + \bb{I}(j \succ k,i) \right) \leq s(s-1)(s-2)/3.
\]
For the second sum, swap the naming of the $i,k$ labels and use the bound $\bb{I}(k \succ i,j) + \bb{I}(i \succ k,j) \leq 1$ to see that 
\[
\text{(II)} \leq 2\sum_{k \in \cl{S}}\sum_{i \in \cl{S},j \in \cl{S}^c} \frac{1}{2}\left(\bb{I}(k \succ i,j) + \bb{I}(i \succ k,j) \right) \leq s(s-1)(N - s).
\]
Finally, in the last sum, we use the naive bound $\text{(III)} \leq s (N-s)(N-1-s).$  \\

Combining these bounds, canceling a factor of $s,$ we find that 
\begin{align}
(1-\alpha) (N-1)(N-2) &\leq (s-1)(s-2)/3 + (s-1)(N - s) +  (N-s)(N-1-s)\nonumber\\
& = (s-1)(s-2)/3 + (N-s)(N-2)\label{eq:LTO_key_inequality}
\end{align}

The right hand side reduces to
\begin{align*}
\frac{1}{3}s^2 + \frac{2}{3} - s(N-1) - 2N + N^2.
\end{align*}
Thus
\[
\frac{1}{3}s^2 + \frac{2}{3} - s(N-1) - 2N + N^2 - (N^2 - 3N + 2) + \alpha (N-1)(N-2) \geq 0,
\]
which reduces to

\[
\frac{s^2}{3} - \frac{4}{3} - s(N-1) + N + \alpha (N-1)(N-2) \geq 0
\]
Introducing $\beta = s/N$, we may reduce to the quadratic inequality 
\[
\frac{\beta^2}{3} - \beta(1 - \frac{1}{N}) - \frac{4}{3N^2} + \frac{1}{N} + \alpha(1 - \frac{1}{N})(1-  \frac{2}{N}) \geq 0.
\]
This is a parabola that opens upwards. When $\beta = 1$, since $\alpha < 2/3$, the value is upper bounded by 
\[\frac{1}{3} - 1 + \frac{2}{N} - \frac{4}{3N^2} + \frac{2}{3}(1 - \frac{1}{N})(1-  \frac{2}{N}) = 0.\]
Thus, $\beta$ must be upper bounded by the smaller root of the quadratic function. On the other hand, when $\beta = 1/N$, the value is $\alpha(1 - 1/N)(1-2/N) > 0$. Thus, the smaller root is positive. We can solve this root by quadratic formula:

\begin{equation}
\beta \leq \frac{3 - \frac{3}{N} - \sqrt{9(1-\frac{1}{N})^2 - 12\left(-\frac{4}{3N^2} + \frac{1}{N} + \alpha(1 - \frac{1}{N})(1 - \frac{2}{N})\right) }}{2}. 
\end{equation}
To conclude, we note that $\Prob(p_{\nLTO} \leq \alpha) = \Prob(I \in \cl{S}) = \beta.$ The bound in equation \eqref{eq:quadraticbound_jkproof} completes the proof. \\
\end{proof}

\begin{proof}[Proof of Theorem \ref{cor:TypeIguarantee_alphaless1/n}]

Since $N \ge 2$, $1/N < 2/3$. To prove the first claim, it is left to prove that \eqref{eq:LTO_key_inequality} does not hold for $s = 2$ when $\alpha \le \frac{1}{N-1}$. Plugging $s = 2$ into \eqref{eq:LTO_key_inequality}, we have 
\[(1 - \alpha)(N- 1)(N-2)\le (N-2)^2\Longrightarrow \alpha \ge \frac{1}{N - 1}.\]
To prove the second claim, we first note that $4/N < 2/3$ when $N > 6$. Thus, it is left to show that \eqref{eq:LTO_key_inequality} does not hold with $s = \floor{N\alpha}+2$. We consider three cases separately. 

\begin{itemize}
\item When $\alpha \in [0, 1/N)$, the result has been proved in the first claim.
\item When $\alpha \in [1/N, 2/N)$, \eqref{eq:LTO_key_inequality} with $s = 3$ implies 
\[(1 - \alpha)(N - 1)(N - 2)\le \frac{2}{3} + (N-2)(N-3)\Longrightarrow \alpha\ge \frac{2N - 14/3}{(N - 1)(N - 2)} > \frac{2}{N},\]
where the last inequality uses that $N > 6 > 3$.
\item When $\alpha \in [2/N, 3/N)$, \eqref{eq:LTO_key_inequality} with $s = 4$ implies 
\[(1 - \alpha)(N - 1)(N - 2)\le 2 + (N-2)(N-4)\Longrightarrow \alpha\ge  \frac{3N - 8}{(N-1)(N-2)} > \frac{3}{N},\]
where the last inequality uses that $N > 6$.

\item When $\alpha \in [3/N, 4/N\cdot (1 - 2/(N-1)(N-2)))$, \eqref{eq:LTO_key_inequality} with $s = 5$ implies 
\[(1 - \alpha)(N - 1)(N - 2)\le 4 + (N-2)(N-5)\Longrightarrow \alpha\ge  \frac{4N - 12}{(N-1)(N-2)} = \frac{4}{N}\left(1 - \frac{2}{(N-1)(N-2)}\right)  .\]

\end{itemize}

Finally, we establish the third claim. For $k \ge 4$, let $h(k) = \frac{k}{N-1}\left(\frac{N}{N-2}\frac{k-1}{3} - 1\right)$. Then by assumption $h(k) < 1$. Write $N\alpha$ as $k - \Delta$. It is left to prove that \eqref{eq:LTO_key_inequality} does not hold with $s = k+1$ when $\Delta \in (h(k), 1)$. In fact, the LHS minus the RHS is equal to
\begin{align*}
&(1 - (k - \Delta)/N)(N-1)(N-2) - \frac{k(k-1)}{3} - (N-k-1)(N-2)\\
& = \left(\Delta + \frac{k - \Delta}{N}\right)(N - 2) - \frac{k(k-1)}{3}\\
& = \frac{(N-1)(N-2)}{N}(\Delta - h(k)) > 0.
\end{align*}

\end{proof}

\subsection{Proofs for Section \ref{sec:sa}}\label{sec:sa_proofs}

\begin{proof}[Proof of Theorem \ref{thm:TypeIerror_weighted_LTO}]
We first analyze the refined weighted $p$-value. Because $\sum_{\substack{i \neq j\\ i,j \neq I}} \pi_i \pi_j = (1-\pi_I)^2 - \sum_{l\neq I} \pi_l^2$, the event that the $p$-value is less than or equal to $\alpha$ is the same as the event that 
\[
\sum_{\substack{i,j \neq I \\ j \neq i}} \frac{\pi_i \pi_j}{(1-\pi_I)^2 - \sum_{l\neq I} \pi_l^2 }\mathbb{I}\set{I \succ i,j} \geq 1-\alpha
\]
Define a unit $k$ to be strange if it satisfies the above bound in place of $I,$ i.e., 
\[
\sum_{\substack{i,j \neq k \\ j \neq i}} \frac{\pi_i \pi_j}{(1-\pi_k)^2 - \sum_{l\neq k} \pi_l^2 }\mathbb{I}\set{k \succ i,j} \geq 1-\alpha
\]
Call $\cl{S}$ the set of strange points and $s = |\cl{S}|$. Let $\beta = \sum_{i\in \cl{S}}\pi_i = \Prob(p_{\wLTO}(\pi)\le \alpha)$. Then multiply on both sides by $\pi_k((1-\pi_k)^2 - \sum_{l\neq k} \pi_l^2)$ and sum over $k \in \cl{S}$ to obtain the inequality
\begin{equation}
    \sum_{k \in \cl{S}}\sum_{\substack{i,j \neq k \\ j \neq i}} \pi_i\pi_j\pi_k\mathbb{I}\set{I \succ j,k} \geq (1-\alpha) \left( \sum_{k \in \cl{S}} \pi_k(1-\pi_k)^2 - \sum_{k \in \cl{S}, l\neq k} \pi_k\pi_l^2 \right).
\end{equation}
Throughout the rest of the proof, we omit the terms $i\neq j,$ and $ i,j\neq k$ below the summations for notational convenience. That is, all summations over two or more indices requires all indices to be mutually distinct. The left hand side of the above may be split into three terms. The sums are over distinct indices.
\begin{align*}
     & \overbrace{\sum_{i,j,k \in \cl{S}} \pi_k \pi_i\pi_j \bb{I}(k \succ i,j)}^{\text{(I)}}
    +  \overbrace{2\sum_{k \in \cl{S}}\sum_{i \in \cl{S},j \in \cl{S}^c} \pi_k \pi_i\pi_j \bb{I}(k \succ i,j)}^{\text{(II)}}
    +  \overbrace{\sum_{k \in \cl{S}}\sum_{i,j \in \cl{S}^c} \pi_k \pi_i\pi_j \bb{I}(k \succ i,j)}^{\text{(III)}}. 
\end{align*}
For sum (I), split the sum into three copies of itself, and permute the labels cyclically. That is, by switching the indexing labels, we note $\sum_{i,j,k \in \cl{S}} \pi_k \pi_i\pi_j \bb{I}(k \succ i,j) = \sum_{i,j,k \in \cl{S}} \pi_k \pi_i\pi_j \bb{I}(i \succ j,k) = \sum_{i,j,k \in \cl{S}} \pi_k \pi_i\pi_j \bb{I}(j \succ k,i)$. Next, because at most one of the residuals $R_{i,j,k;\cdot}$ can be the largest, $\bb{I}(i \succ j,k) + \bb{I}(j \succ k,i) + \bb{I}(k \succ i,j)  \leq 1.$ This leads to the bound
\begin{align*}
\text{(I)} \leq \frac{1}{3}\sum_{i,j,k \in \cl{S}} \pi_i\pi_j\pi_k & = \frac{1}{3}\left(\beta^3 - 3\sum_{i,j\in \cl{S}} \pi_i^2\pi_j - \sum_{i\in \cl{S}} \pi_i^3\right) = \frac{1}{3}\left(\beta^3 - 3\beta\sum_{i\in \cl{S}} \pi_i^2  + 2\sum_{i\in \cl{S}} \pi_i^3\right)
\end{align*}

For the second sum (II), switching the labels of $i,k$ gives the identity $$\sum_{k \in \cl{S}}\sum_{i \in \cl{S},j \in \cl{S}^c} \pi_k \pi_i\pi_j \bb{I}(k \succ i,j) = \sum_{i \in \cl{S}}\sum_{k \in \cl{S},j \in \cl{S}^c} \pi_k \pi_i\pi_j \bb{I}(i \succ k,j).$$ Bounding the indicator sum $\bb{I}(k \succ i,j) + \bb{I}(i \succ k,j)$ by one, we have the bound
\[
\text{(II)} \leq  \sum_{i \in \cl{S}}\sum_{k \in \cl{S},j \in \cl{S}^c} \pi_k \pi_i\pi_j   = \left(\beta^2 - \sum_{i \in \cl{S}} \pi_i^2\right)(1 - \beta).
\]
Lastly, we bound (III) by the naive bound; replacing the indicator by the constant $1$, the sum can be bounded by $\beta[(1 - \beta)^2 - \sum_{i \in \cl{S}^c} \pi_i^2].$\\

Summing these bounds, we find
\begin{align*}
\frac{1}{3}\beta^3  - \beta^2 + \beta\left(1 - \sum_{i \in \cl{S}^c}\pi_i^2\right) + \frac{2}{3} \sum_{i \in \cl{S}} \pi_i^3 - \sum_{i\in \cl{S}}\pi_i^2 & \geq (1-\alpha) \left( \sum_{k \in \cl{S}} \pi_k(1-\pi_k)^2 - \sum_{k \in \cl{S}, l\neq k} \pi_k\pi_l^2 \right) \\
& = (1-\alpha)\left( \beta - 2\sum_{k \in \cl{S}} \pi_k^2 + 2\sum_{k \in \cl{S}} \pi_k^3 - \beta \sum_{l=1}^N \pi_l^2 \right).
\end{align*}

After some simplification, the resulting bound gives
\begin{equation}\label{eq:weighted_TypeIerror_key}
\frac{1}{3}\beta^3 - \beta^2 + \beta \left[ \sum_{i \in \cl{S}} \pi_i^2 + \alpha \left(1 - \sum_{l=1}^N \pi_l^2\right) \right] \geq \left(\frac{4}{3}-2\alpha\right)\sum_{i\in \cl{S}}\pi_i^3 - (1-2\alpha)\sum_{i \in \cl{S}} \pi_i^2.
\end{equation}
The proof is completed by replacing $\beta, \sum_{i\in\mathcal{S}}\pi_i^2, \sum_{i\in \mathcal{S}}\pi_i^3$ with $L_1(z), L_2(z), L_3(z)$, respectively.

\end{proof}

\begin{proof}[Proof of Theorem \ref{thm:weighted_alphaless1/n_typeIerror}]
By the Cauchy-Schwarz inequality, $\sum_{\ell=1}^{N}\pi_\ell^2\ge 1/N$. Thus, \eqref{eq:weighted_TypeIerror_key} implies that
\[\frac{1}{3}\beta^3 - \beta^2 + \beta \alpha \frac{N-1}{N}\geq \sum_{i \in \cl{S}}\pi_i\left[\left(\frac{4}{3}-2\alpha\right)\pi_i^2 - (1 + \beta -2\alpha)\pi_i\right].\]
Let $g(x) = \left(4/3-2\alpha\right)x^2 - (1 + \beta -2\alpha)x$. Then $g'(x) = 2\left(4/3-2\alpha\right)x - (1 + \beta -2\alpha)$. Since $\beta\ge 0$ and $\alpha \le 1/3$, for any $x\le 1/4$, $g'(x) \le 2/3 - \alpha - (1 - 2\alpha) = \alpha - 1/3 \le 0$. Thus, $g(x)$ is decreasing on $[0, 1/4]$. Since $\Gamma\le N/4$, $\pi_i\le 1/4$. Using $\pi_i \leq \frac{\Gamma}{N}$, we find that 
\[\frac{1}{3}\beta^3 - \beta^2 + \beta \alpha \frac{N-1}{N}\geq \sum_{i \in \cl{S}}\pi_i g\left(\frac{\Gamma}{N}\right) = g\left(\frac{\Gamma}{N}\right)\beta.\]
Dividing both sides by $\beta$, we have that
\begin{equation}\label{eq:sensitivity_quadratic}
\frac{1}{3}\beta^2 - \beta + \alpha \frac{N-1}{N}\geq \left(\frac{4}{3}-2\alpha\right)\frac{\Gamma^2}{N^2} - (1 + \beta -2\alpha)\frac{\Gamma}{N}
\end{equation}
\[\Longrightarrow \frac{1}{3}\beta^2 - \left(1 - \frac{\Gamma}{N}\right)\beta + \alpha \frac{N-1}{N} - \left(\frac{4}{3}-2\alpha\right)\frac{\Gamma^2}{N^2} + (1 -2\alpha)\frac{\Gamma}{N}\ge 0.\]
Again, this is a parabola that opens upwards. When $\beta = 1$, the LHS is equal to 
\[-\frac{2}{3} + \alpha \frac{N-1}{N} + \tilde{g}\left(\frac{\Gamma}{N}\right), \,\, \text{where }\tilde{g}(x) = (2-2\alpha)x - \left(\frac{4}{3} - 2\alpha\right)x^2.\]
For any $x\le 1/4$, $\tilde{g}'(x) = (2 - 2\alpha) - 2(4/3 - 2\alpha)x > (2 - 2\alpha)- (4/3-2\alpha) > 0$. Thus, $\tilde{g}(x)$ is increasing on $[0, 1/4]$ and 
\[-\frac{2}{3} + \alpha \frac{N-1}{N} + \tilde{g}\left(\frac{1}{4}\right) = -\frac{1}{4} + \left(\frac{N-1}{N} - \frac{3}{8}\right) \alpha \le -\frac{1}{4} + \frac{5\alpha}{8} < 0,\]
where the last inequality uses the condition $\alpha \le 1/3$. As a result, \eqref{eq:sensitivity_quadratic} implies that $\beta$ is upper bounded by the smaller root, i.e.,
\[\beta\leq\frac{3 - \frac{3\Gamma}{N} - \sqrt{9(1-\frac{\Gamma}{N})^2 - 12\left(-\frac{4\Gamma^2}{3N^2} + \frac{\Gamma}{N} + \alpha\left(1- \frac{2\Gamma + 1}{N} + \frac{2\Gamma^2}{N^2}\right)\right) }}{2}.\]

\end{proof}

\subsection{Proofs for Section \ref{sec:ranksum}}\label{sec:proof_ranksum}

\begin{proof}[Proof of Theorem \ref{thm:ranksum_guarantee}]
In this proof, call a unit $i$ \textit{strange} if 
\[
\sum_{\substack{j,k \in [N]\setminus i\\ j \neq k}} \frac{1}{2}(\mathbb{I}\set{i \succ j} + \mathbb{I}\set{i \succ k}) \geq (1-\alpha/2)(N-1)(N-2).
\]
Let $\cl{S}$ be the set of strange points. Summing this inequality over all strange points, we find
\[
2(1-\alpha/2)(N-1)(N-2)s \leq \sum_{i \in \cl{S}} \sum_{\substack{j,k \in [N]\setminus i \\ j\neq k}} (\mathbb{I}\set{i \succ j} + \mathbb{I}\set{i \succ k}).
\]
Again, we will assume all indices in the summations are distinct, and omit this from notation. Let us split this sum into three cases. The first case sums over $i, j, k \in \cl{S}$; the second is twice the summation over $i,j \in \cl{S},k \in \cl{S}^\compl$, and the third case sums over $i \in \cl{S}, j,k \in \cl{S}^\compl$. Throughout the rest of the proof, we omit the terms $i\neq j, i,j\neq k$ below the summations for notational convenience. That is, all summations over two or more indices requires all indices to be mutually distinct.\\

The first sum is the quantity $\sum_{i,j,k \in \cl{S}} (\mathbb{I}\set{i \succ j} + \mathbb{I}\set{i \succ k})$. By swapping the naming of the indices $i,j$ and $i,k$, observe that the first sum is equal to
\begin{align*}
    \frac{1}{3}\sum_{i,j,k \in \cl{S}} (\mathbb{I}\set{i \succ j} + \mathbb{I}\set{i \succ k} 
    +  \mathbb{I}\set{j \succ i} + \mathbb{I}\set{j \succ k} 
    + \mathbb{I}\set{k \succ j} + \mathbb{I}\set{k \succ i}).
\end{align*}
The summand is bounded above by $3$. We may thus bound the first sum above by $s(s-1)(s-2)$. The second sum is treated similarly. Swapping the naming of the indices $i,j$, the second sum can be rewritten 
\[
\sum_{i,j \in \cl{S}}\sum_{k \in \cl{S^\compl}}(\mathbb{I}\set{i \succ j} + \mathbb{I}\set{i \succ k} 
    +  \mathbb{I}\set{j \succ i} + \mathbb{I}\set{j \succ k}).
\]
The sum of the indicators on the inside is bounded above by $3$, so the second sum can be bounded by $3s(s-1)(N-s)$. Finally, for the third sum, use the naive bound $2s(N-s)(N-s-1)$.\\

Combining these bounds yields the following inequality on $s$:
\[
2(1-\alpha/2)(N-1)(N-2)s \leq s(s-1)(s-2) + 3s(s-1)(N-s) + 2s(N-s)(N -s-1),
\]
which reduces to $2(1-\alpha/2)(N-1)(N-2) \leq  2N^2 - 5N - (N-2)s + 2$. Thus
\begin{align*}
    (N-2)s & \leq \alpha(N-1)(N-2) - 2(N^2 - 3N + 2) + 2N^2 - 5N + 2\\
    (N-2)s & \leq \alpha(N-1)(N-2) + N - 2.
\end{align*}
Thus
\[
s \leq \alpha(N-1) + 1,
\]
and so $s/N \leq \alpha + \frac{1 - \alpha}{N}.$ Finally, since $s$ is an integer, we have $s\le \floor{N\alpha + 1 - \alpha} = \floor{(N-1)\alpha} + 1$.
\end{proof}

\subsection{Proofs for Section \ref{sec:LRO_jk}}\label{sec:LRO_proofs}

\begin{proof}[Proof of Theorem \ref{thm:LRO_TypeIerror}]
We defer the proof of the monotonicity of $A_{N,r}(s)$ to Lemma \ref{lemma:binomial_coef_identity}. Throughout the proof, for any $k \le r$ and subset $\mathcal{A}\subset [N]$, we denote by $\sum_{i_1, \ldots, i_k\in \mathcal{A}}$ (or $\sum_{i_0, i_1, \ldots, i_k\in \mathcal{A}}$) the sum over all $k$-tuples with mutually distinct elements in $\mathcal{A}$ and by $\sum^{*}_{i_1, \ldots, i_k\in \mathcal{A}}$ (or $\sum^{*}_{i_0, i_1, \ldots, i_k\in \mathcal{A}}$) the sum over all $k$-tuples with strictly increasing elements in $\mathcal{A}$. Define a unit $i_0$ to be strange if 
\begin{equation}
\label{eq:LRO_strange}
    (1-\alpha)\frac{(N-1)!}{(N-r-1)!} \leq \sum_{i_1,\dots,i_r\in [N]\setminus i_0} \mathbb{I}\{i_0 \succ i_1,\dots,i_r\}.
\end{equation}
Further, let $\cl{S}$ be the set of strange points and $s := |\cl{S}|$. Clearly, $p_{\LRO}\le \alpha$ iff $I\in \mathcal{S}$. Then $\Prob_{H_0}(p_{\LRO}\le \alpha) = s/N$. Summing both sides of the equation \eqref{eq:LRO_strange} over strange points, we obtain the bound 
\begin{equation}\label{eq:LRO_strange_sum}
(1-\alpha)s\frac{(N-1)!}{(N-r-1)!} \leq \sum_{i_0\in \mathcal{S}}\sum_{i_1,\dots,i_r\in [N]\setminus i_0} \mathbb{I}\{i_0 \succ i_1,\dots,i_r\}.
\end{equation}
We split the RHS of \eqref{eq:LRO_strange_sum} by $\mathcal{S}\cap \{i_1, \ldots, i_r\}$:
\[\sum_{i_0\in \mathcal{S}}\sum_{i_1,\dots,i_r\in [N]\setminus i_0} \mathbb{I}\{i_0 \succ i_1,\dots,i_r\} = \sum_{J\subset [r]}\sum_{i_j\in \mathcal{S}^c, j\in J^c}\sum_{i_j\in \mathcal{S}, j\in J\cup \{0\}}\mathbb{I}\{i_0 \succ i_1,\dots,i_r\}.\]
Note that the indicator $\mathbb{I}\{i_0\succ i_1, \ldots, i_r\}$ is invariant to reshuffling of $(i_1, \ldots, i_r)$, the summands can be grouped by the size of $J$: 
\begin{equation}\label{eq:LRO_symmetry}
\sum_{J\subset [r]}\sum_{i_j\in \mathcal{S}, j\in J\cup \{0\}}\sum_{i_j\in \mathcal{S}^c, j\in J^c}\mathbb{I}\{i_0 \succ i_1,\dots,i_r\} = \sum_{t=0}^{r} \binom{r}{t}\sum_{i_{t+1}, \ldots, i_r\in \mathcal{S}^c}\sum_{i_0, i_1, \ldots, i_t\in \mathcal{S}}\mathbb{I}\{i_0 \succ i_1,\dots,i_r\}.
\end{equation}
Swapping the naming of $i_0$ with $i_1,i_2,\dots,i_t$ we have the bound
\[
\sum_{i_0, i_1,\dots,i_t \in \cl{S}} \mathbb{I}\set{i_0 \succ i_1,\dots,i_r} =\sum_{i_0, i_1,\dots,i_t \in \cl{S}}  \frac{1}{t+1} \sum_{j=0}^t \mathbb{I}\set{i_j \succ i_1,\dots,i_r} \leq \sum_{i_0, i_1,\dots,i_t \in \cl{S}} \frac{1}{t+1}.
\]
Thus, \eqref{eq:LRO_strange_sum} implies 
\begin{align*}
    & (1-\alpha)s\frac{(N-1)!}{(N-r-1)!} \\
    & \leq r!\sum_{t=0}^r \frac{1}{t!}\frac{1}{(r-t)!}\left(\frac{s!}{(s-t-1)!} \frac{(N - s)!}{(N-s-(r-t))!} \frac{1}{t+1} \right) \\
    & = r!\sum_{t= 0 }^r \binom{s}{t+1}\binom{N-s}{r-t}\\
    & = r!\left[\binom{N}{r+1} - \binom{N-s}{r+1}\right],
\end{align*}
where the last lines applies the Vandermonde's identity. Thus, 
\begin{equation}\label{eq:LRO_strange_final}
A_{N, r}(s) = \frac{1}{s}\left[\binom{N}{r+1} - \binom{N-s}{r+1}\right]\ge (1 - \alpha)\binom{N-1}{r}.
\end{equation}
By definition, 
\[A_{N, r}(1) = \binom{N}{r+1} - \binom{N-1}{r+1} = \binom{N-1}{r} > (1 - \alpha)\binom{N-1}{r},\]
and, for any $\alpha < r/(1+r)$,
\[A_{N,r}(N) = \frac{1}{N}\binom{N}{r+1} = \frac{1}{r+1}\binom{N-1}{r} < (1 - \alpha)\binom{N-1}{r}.\]
By Lemma \ref{lemma:binomial_coef_identity}, $A_{N,r}(s)$ is decreasing. Thus, 
\[s\le \max\left\{k: A_{N, r}(k)\ge (1 - \alpha)\binom{N-1}{r}\right\}.\]
The proof is then completed by noticing that the Type-I error is $s/N$.
\end{proof}

\begin{proof}[Proof of Theorem \ref{thm:ranksum_LRO_guarantee}]
Throughout the proof we use the same notation $\sum$ and $\sum^{*}$ as in the proof of Theorem \ref{thm:LRO_TypeIerror}. Define a unit $i_0$ to be strange if 
\begin{equation}\label{eq:ranksum_LRO_strange}
  r\left(1-\frac{\alpha}{2}\right)\frac{(N-1)!}{(N-r-1)!} \leq \sum_{i_1,\dots,i_r\in [N]\setminus i_0} \rank\{i_0; i_0, i_1, \ldots, i_r\} = \sum_{i_1,\dots,i_r\in [N]\setminus i_0}\sum_{j=1}^{r}\mathbb{I}\{i_0\succ i_j\}.
\end{equation}
Further, let $\cl{S}$ be the set of strange points and $s := |\cl{S}|$. Clearly, $p_{\rLRO}\le \alpha$ iff $I\in \mathcal{S}$. Then $\Prob_{H_0}(p_{\rLRO}\le \alpha) = s/N$. Summing both sides of the equation \eqref{eq:ranksum_LRO_strange} over strange points, we obtain the bound 
\begin{equation}\label{eq:ranksum_LRO_strange_sum}
 r\left(1-\frac{\alpha}{2}\right)s\frac{(N-1)!}{(N-r-1)!} \leq \sum_{i_0\in \mathcal{S}}\sum_{i_1,\dots,i_r\in [N]\setminus i_0}\sum_{j=1}^{r}\mathbb{I}\{i_0\succ i_j\}.
\end{equation}
Following the same argument as \eqref{eq:LRO_symmetry}, we can rewrite the RHS as 
\begin{align*}
&\sum_{t=0}^{r} \binom{r}{t}\sum_{i_{t+1}, \ldots, i_r\in \mathcal{S}^c}\sum_{i_0, i_1, \ldots, i_t\in \mathcal{S}}\sum_{j=1}^{r}\mathbb{I}\{i_0\succ i_j\}\\
& = \sum_{t=0}^{r}\binom{r}{t} \sum_{i_{t+1}, \ldots, i_r\in \mathcal{S}^c}\Bigg( \overbrace{\sum_{i_0, i_1, \ldots, i_t\in \mathcal{S}}\sum_{j=1}^t \mathbb{I}\set{i_0 \succ i_j}}^{\text{(I)}} + \overbrace{\sum_{i_0, i_1, \ldots, i_t\in \mathcal{S}}\sum_{j'=t+1}^r \mathbb{I}\set{i_0 \succ i_{j'}}}^{\text{(II)}} \Bigg).
\end{align*}
Fixing $t$ and $i_{t+1}, \ldots, i_r\in \mathcal{S}^c$, the quantity (I) can be expressed as 
\[\sum_{i_0, i_1, \ldots, i_t\in \mathcal{S}}\rank(i_0; i_0, i_1, \ldots, i_t).\]
Now, we can cyclically relabel $(i_0, i_1, \ldots, i_t)$ to $(i_{k \mod t}, i_{k+1 \mod t}, \ldots, i_{k+t \mod t})$, yielding
\[\sum_{i_0, i_1, \ldots, i_t\in \mathcal{S}}\rank(i_0; i_0, i_1, \ldots, i_t) = \sum_{i_0, i_1, \ldots, i_t\in \mathcal{S}}\frac{1}{t+1}\sum_{j=0}^{t}\rank(i_j; i_0, i_1, \ldots, i_t)\le \sum_{i_0, i_1, \ldots, i_t\in \mathcal{S}}\frac{t}{2},\]
where the last line uses the fact that the sum of ranks is at most $1 + 2 + \ldots + t = (t+1)t/2$. Thus, 
\[\text{(I)}\le \frac{t}{2}\cdot \frac{s!}{(s-t-1)!}.\]
For the quantity (II), we simply bound the indicator by $1$ which yields
\[\text{(I)}\le (r-t)\cdot \frac{s!}{(s-t-1)!}.\]
Putting two terms together, we can upper bound the RHS of \eqref{eq:ranksum_LRO_strange_sum} by
\begin{align*}
    &\sum_{t = 0}^r \binom{r}{t} \frac{s!}{(s - t - 1)!} \frac{(N -s)!}{((N - s) - (r - t))!} \left(r - \frac{t}{2}\right) \\
    & = r!\sum_{t = 0}^r s \binom{s - 1}{t} \binom{N - s}{r - t}\left(r - \frac{t}{2}\right) \\
    & =  sr\cdot r!\sum_{t = 0}^r \binom{s-1}{t} \binom{(N-1) - (s - 1)}{r-t}  - \frac{s(s-1)}{2}r!\sum_{t = 1}^r \binom{s-2}{t-1} \binom{N-1 - (s-1)}{(r-1) - (t-1)}\\
    & =  sr\cdot r!\sum_{t = 0}^r \binom{s-1}{t} \binom{(N-1) - (s - 1)}{r-t}  - \frac{s(s-1)}{2}r!\sum_{t' = 0}^{r-1} \binom{s-2}{t'} \binom{N-1 - (s-1)}{(r-1) - t'}\\
    & = s r!\left(r\binom{N-1}{r} - \frac{s-1}{2}\binom{N-2}{r-1}\right),
\end{align*}
where the last line applies the Vandermonde identity. By \eqref{eq:ranksum_LRO_strange_sum}, 
\[r\left(1 - \frac{\alpha}{2}\right)\binom{N-1}{r}\le r\binom{N-1}{r} - \frac{s-1}{2}\binom{N-2}{r-1}.\]
This implies
\[
s\le \alpha r \binom{N-1}{r}\Big/\binom{N-2}{r-1} + 1 = (N-1)\alpha + 1.
\]
Since $s$ is an integer, $s\le \floor{(N-1)\alpha} + 1$. The proof is then completed.
\end{proof}

\subsection{Proofs for Section \ref{sec:discussion}}
\label{sec:proofs_discussion}

\begin{proof}[Proof of Theorem \ref{thm:consistency}]
We first prove uniform consistency of the LTO placebo test. Since $\mathcal{G}$ is tight, for any $\epsilon > 0$, there exists a constant $M_\epsilon$ such that, 
\[\inf_{G\in \mathcal{G}}P_{G}\left(\max_{i,t}|Y_{it}(0)| \le M_\epsilon\right) \ge 1-\epsilon.\]
On the event $\max_{i,t}|Y_{it}(0)| \le M_\epsilon$, for any $i, j\neq I$ and $k\in \{i, j, I
\}$,
\[\|\hat{\mathbf{Y}}^{-(i,j,I)}_{k}\|_{\infty}\le M_{\epsilon},\]
since it is a convex average of post-treatment outcome vectors. Then 
\[R_{i,j,I;i}\le 4M_{\epsilon}^2, \quad R_{i,j,I;j}\le 4M_{\epsilon}^2.\]
Using the fact that $(a + b)^2 \ge b^2/2 - a^2$, 
\begin{align*}
R_{i,j,I;I} &= \frac{1}{T - T_0}\sum_{t=T_0+1}^T(Y_{I, t}(0) - \hat{Y}_{I, t}^{-(i,j,I)}(0) + \eta \tau_{I, t})^2\\
& \ge \frac{1}{2(T-T_0)}\sum_{t=T_0+1}^T\tau_{I, t}^2 \eta^2 - \frac{1}{T - T_0}\sum_{t=T_0+1}^T(Y_{I, t}(0) - \hat{Y}_{I, t}^{-(i,j,I)}(0))^2\\
& \ge \frac{1}{2(T-T_0)}\eta^2 - 4M_{\epsilon}^2,
\end{align*}
where the last line applies the condition $\tau \in \Gamma$. If $\eta > 4M_{\epsilon}\sqrt{T-T_0}$, 
\[R_{i,j,I;I} > \max\{R_{i,j,I;i}, R_{i,j,I;j}\}.\]
As this holds for all $i, j \neq I$ on the event $\max_{i,t}|Y_{it}(0)| \le M_\epsilon$, 
\[\Prob_{G, \tau, H_{1,\eta}}(p_{\nLTO} = 0) \ge 1 - \epsilon,\]
as long as $\eta > 4M_{\epsilon}\sqrt{T-T_0}$, letting $\epsilon \rightarrow 0$ yields that
\[\lim_{\eta \rightarrow \infty}\inf_{G\in \mathcal{G}, \tau\in \Gamma}\Prob_{G,\tau, H_{1,\eta}}(p_{\nLTO} \le \alpha) = 1.\]

Next, we turn to inconsistency of the approximate placebo test. We start with an even $N$. Let $G$ be any distribution of $(Y_{i, t}(0))$ such that 
\[Y_{2j-1, t}(0) = Y_{2j, t}(0) \triangleq \tilde{Y}_{j, t}, \quad j=1,\ldots, N/2,\]
for some mutually distinct $\tilde{Y}_{j, t}$, and no $\tilde{Y}_{j, t}$ lies in the convex hull of $\{\tilde{Y}_{k, t}: k\neq j\}$ (e.g., $\|\tilde{Y}_{j, t}\| = 1$ for all $j = 1, \ldots, N/2$). Further, let $\tau$ be any matrix in $\Gamma$ such that 
\[\tau_{2j, t} = \tau_{2j+1, t} \triangleq \tilde{\tau}_{j, t}.\]

We prove that, for any such $(G, \tau)$ and any $\eta > 0$, 
\begin{equation}\label{eq:pnLTO=1/N}
\Prob_{G, H_{1,\eta}}(p_{\nLTO} \ge 1/N) = 1.
\end{equation}
This implies $\lim_{\eta\rightarrow \infty}\Prob_{G, \tau, H_{1,\eta}}(p_{\nLTO} \le \alpha) = 0$ when $\alpha < 1/N$. 

To prove \eqref{eq:pnLTO=1/N}, we note that the synthetic control for unit $2j$ must be $2j+1$ and vice versa, as no $\tilde{Y}_{j, t}$ can be expressed as the convex average of others. Thus, 
\[R_{2j} = R_{2j+1} = \frac{1}{T-T_0}\sum_{t=T_0+1}^{T}\eta^2 \tilde{\tau}_{j, t}^2.
\]
By definition, $p_{\pb}\ge 1/N$. 

When $N$ is odd, we construct $(G, \tau)$ such that the first $N-1$ units follow the above prescription and the $N$-th unit can be arbitrary. Then $p_{\pb}\ge 1/N$ whenever $I \neq N$. Since $\Prob(I = N) = 1/N$, we conclude that $\Prob_{G, H_{1,\eta}}(p_{\nLTO} \ge 1/N) \le 1/N$ for any $\eta > 0$.

\end{proof}

\subsection{Minor lemmas}

\begin{lemma}
\label{lemma:binomial_coef_identity}
For any integers $N > r > 0$, the sequence
\[
s \mapsto A_{N, r}(s) := \frac{1}{s}\left[\binom{N}{r+1} - \binom{N-s}{r+1}\right]
\]
is decreasing.
\end{lemma}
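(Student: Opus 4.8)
\textbf{Proof proposal for Lemma~\ref{lemma:binomial_coef_identity}.}
The plan is to recognize $A_{N,r}(s)$ as the arithmetic mean of the first $s$ terms of a monotone sequence, and then invoke the elementary fact that running averages of a non-increasing sequence are non-increasing. The one piece of structure to extract is a telescoping (hockey-stick) identity.

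First I would telescope the numerator. Using Pascal's rule in the form $\binom{N-k}{r} = \binom{N-k+1}{r+1} - \binom{N-k}{r+1}$ and summing over $k = 1, \ldots, s$,
\[
\sum_{k=1}^{s}\binom{N-k}{r} = \binom{N}{r+1} - \binom{N-s}{r+1},
\]
so that $A_{N,r}(s) = \frac{1}{s}\sum_{k=1}^{s} a_k$ where $a_k := \binom{N-k}{r}$. Thus $A_{N,r}(s)$ is exactly the average of $a_1, \dots, a_s$. Next I would note that $(a_k)_{k\ge 1}$ is non-increasing: $k \mapsto N-k$ is strictly decreasing and $m \mapsto \binom{m}{r}$ is non-decreasing on the nonnegative integers (with the convention $\binom{m}{r} = 0$ for $0 \le m < r$, which also makes $A_{N,r}$ well defined on all of $[N]$), so $a_{k+1} \le a_k$ for every $k$.

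Finally, the running-average step: for $1 \le s \le N-1$ we have
\[
A_{N,r}(s+1) = \frac{s\,A_{N,r}(s) + a_{s+1}}{s+1},
\]
and since $a_{s+1} \le a_k$ for every $k \le s$, we get $a_{s+1} \le A_{N,r}(s)$; substituting this bound gives $A_{N,r}(s+1) \le \frac{s\,A_{N,r}(s) + A_{N,r}(s)}{s+1} = A_{N,r}(s)$, which is the claimed monotonicity.

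The only real insight is the telescoping identity in the first step; the remainder is routine, so I do not expect a genuine obstacle. The only care points are bookkeeping ones: confirming the binomial-coefficient convention for $m < r$ is used consistently, and checking that the edge cases (e.g.\ $N-s-1 = 0$, or terms where $\binom{m}{r}$ vanishes) cause no trouble — they do not, since every quantity involved is nonnegative.
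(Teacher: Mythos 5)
Your proof is correct and uses essentially the same ingredients as the paper's: your telescoping step is the hockey-stick identity used there, and your key bound $a_{s+1}\le A_{N,r}(s)$ is exactly the paper's inequality $s\binom{N-1-s}{r}\le\sum_{k'=N-s}^{N-1}\binom{k'}{r}$ after dividing by $s$. The only difference is packaging: viewing $A_{N,r}(s)$ as a running average of the non-increasing sequence $\binom{N-k}{r}$ is a cleaner way to organize the same two facts, but it is not a genuinely different argument.
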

\begin{proof}
    Observe the two following identities on binomial coefficients: 
\begin{gather*}
    \binom{N - s}{r + 1} = \sum_{k=r}^{N -1-s} \binom{k}{r} \\
    s\binom{N - 1 -s}{r} \leq \sum_{k' = N - s}^{N-1} \binom{k'}{r}.
\end{gather*}
The first is true because of the hockey-stick identity, and the second is by monotonicity. Summing these inequalities, we find
\[
\binom{N - s}{r + 1} + s\binom{N - 1 -s}{r} \leq \sum_{k = r}^{N-1} \binom{k}{r} = \binom{N}{r+1},
\]
which is equivalent to
\[
\binom{N - s}{r + 1} + s\left[\binom{N - s}{r+1} - \binom{N-1-s}{r+1} \right] \leq (s+1)\binom{N}{r+1} - s\binom{N}{r+1}.
\]
Rearranging, $A_{N, r}(s) \geq  A_{N, r}(s+1)$.
\end{proof}

\end{document}